\newtheorem{thm}{Theorem}[section]
\newtheorem{prop}{Proposition}[section]
\newtheorem{lem}{Lemma}[section]
\newtheorem{cor}{Corollary}[section]
\newtheorem{defi}{Definition}[section]
\newtheorem{rem}{Remark}[section]
\tikzstyle{startstop} = [rectangle, rounded corners, minimum width=3cm, minimum height=1cm, text centered, draw=black, fill=red!30]
\tikzstyle{process} = [rectangle, minimum width=3cm, minimum height=1cm, text centered, draw=black, fill=blue!15]
\tikzstyle{decision} = [diamond, minimum width=2.5cm, minimum height=1cm, text centered, draw=black, fill=green!30]
\tikzstyle{arrow} = [thick,->,>=stealth]
\tikzstyle{line} = [thick,-,>=stealth]
\tikzstyle{startstop} = [rectangle, rounded corners, minimum width=3cm, minimum height=1cm, text centered, draw=black, fill=red!30]
\tikzstyle{process} = [rectangle, minimum width=3cm, minimum height=1cm, text centered, draw=black, fill=blue!15]
\tikzstyle{decision} = [diamond, minimum width=2.5cm, minimum height=1cm, text centered, draw=black, fill=green!30]
\tikzstyle{arrow} = [thick,->,>=stealth]
\tikzstyle{line} = [thick,-,>=stealth]
\newcommand{\eee}{\end{aligned}\end{equation*}}
\theoremstyle{plain}
\newcommand*{\fancyrefthmlabelprefix}{thm}
\newcommand*{\fancyreflemlabelprefix}{lem}
\newcommand*{\fancyrefcorlabelprefix}{cor}
\newcommand*{\fancyrefdefilabelprefix}{defi}
\newcommand*{\fancyrefalglabelprefix}{alg}
\newcommand*{\frefalgname}{algorithm}
\newcommand*{\Frefalgname}{Algorithm}
\newcommand*{\fancyrefapplabelprefix}{app}
\newcommand*{\frefappname}{appendix}
\newcommand*{\Frefappname}{Appendix}
\definecolor{Green}{HTML}{00AD69}  % "Pantone 3405"
\def\beq{\begin{equation}}
\def\eeq{\end{equation}}
\def\bq{\begin{quote}}
\def\eq{\end{quote}}
\def\ben{\begin{enumerate}}
\def\een{\end{enumerate}}
\def\bit{\begin{itemize}}
\def\eit{\end{itemize}}
\def\l|{\left|}
\def\r|{\right|}
\newcommand\C{\mathbbm{C}}
\newcommand\R{\mathbbm{R}}
\newcommand\W{\mathcal{W}}
\newcommand\cB{\mathcal{B}}
\newcommand{\ketbra}[1]{|#1\rangle\langle#1|}
\newcommand{\tr}[1]{\operatorname{tr}\left[#1\right]}
\newcommand{\cH}{\mathcal{H}}
\newcommand{\eps}{\epsilon}
\newcommand{\Tr}{\operatorname{tr}}
\definecolor{daniel}{rgb}{.8,.5,.3}
\definecolor{marco}{rgb}{.4,.2,.6}
\date{\today}
\begin{document}

%TC:ignore
\title{Efficient Hamiltonian, structure and trace distance\\ learning of Gaussian states}

\author{Marco Fanizza\thanks{Inria, Télécom Paris - LTCI, Institut Polytechnique de Paris, 91120 Palaiseau, France. \texttt{marco.fanizza@inria.fr}} \thanks{Department of Mathematical Sciences, University of Copenhagen, Universitetsparken 5, 2100 Denmark} \thanks{F\'{i}sica Te\`{o}rica: Informaci\'{o} i Fen\`{o}mens Qu\`{a}ntics, Departament de F\'{i}sica, Universitat Aut\`{o}noma de Barcelona, ES-08193 Bellaterra (Barcelona), Spain}
\and Cambyse Rouz\'{e}\thanks{Inria, Télécom Paris - LTCI, Institut Polytechnique de Paris, 91120 Palaiseau, France. \texttt{rouzecambyse@gmail.com}}
\and Daniel Stilck Fran\c{c}a \thanks{
Department of Mathematical Sciences, University of Copenhagen, Universitetsparken 5, 2100 Denmark. \texttt{dsfranca@math.ku.dk}}
\thanks{Univ Lyon, ENS Lyon, UCBL, CNRS, Inria, LIP, F-69342, Lyon Cedex 07, France}}

\maketitle

\begin{abstract}

    In this work, we initiate the study of Hamiltonian learning for positive temperature bosonic Gaussian states, the quantum generalization of the widely studied problem of learning Gaussian graphical models. We obtain efficient protocols, both in sample and computational complexity, for the task of inferring the parameters of their underlying quadratic Hamiltonian under the assumption of bounded temperature, squeezing, displacement and maximal degree of the interaction graph.
    Our protocol only requires heterodyne measurements, which are often experimentally feasible, and has a sample complexity that scales logarithmically with the number of modes. Furthermore, we show that it is possible to learn the underlying interaction graph in a similar setting and sample complexity. 
    In addition, we use our techniques to obtain the first results on learning Gaussian states in trace distance with a quadratic scaling in precision and polynomial in the number of modes, albeit imposing certain restrictions on the Gaussian states. Our main technical innovations are several continuity bounds for the covariance and Hamiltonian matrix of a Gaussian state, which are of independent interest, combined with what we call the local inversion technique. In essence, the local inversion technique allows us to reliably infer the Hamiltonian of a Gaussian state by only estimating in parallel submatrices of the covariance matrix whose size scales with the desired precision, but not the number of modes. This way we bypass the need to obtain precise global estimates of the covariance matrix, controlling the sample complexity.
\end{abstract}

\newpage

%TC:endignore

\section{Introduction}
Inferring parameters of quantum systems from measurement data is a fundamental task in quantum information science that has seen significant advancements over the recent years~\cite{haah2022optimal,2004.07266,bakshi2024learning,anshu2024survey,rouze2024learning,li2307heisenberg,rouze2024efficient,stilck2024efficient,mobus2023dissipation,huang2023learning,bakshi2024structure,narayanan2024improved,fanizza2023learning,huang2022provably,lewis2024improved}. Although it has been known for roughly a decade that complete tomography of general quantum states inevitably requires a sample complexity scaling exponentially in the system's size~\cite{flammia2012quantum,Wright2016,Haah2017}, recent works have shown that it is possible to obtain a lot of information of practical interest much more efficiently. This is exemplified by the framework of classical shadows~\cite{huang2020predicting}.

Another fruitful approach to obtain efficient algorithms is to consider the learning of more restricted and physically motivated quantum states and evolutions, such as those generated by local Hamiltonians~\cite{li2307heisenberg,stilck2024efficient,mobus2023dissipation,huang2023learning,bakshi2024structure} or their Gibbs states~\cite{bakshi2024learning,haah2022optimal,2004.07266,rouze2024learning,narayanan2024improved,rouze2024efficient}, where the problem of inferring the parameters of the underlying Hamiltonian is broadly referred to as the Hamiltonian learning problem. This line of work culminated in efficient algorithms (both in terms of samples and computationally) under various assumptions. The first sample-efficient algorithm for an arbitrary constant temperature local quantum Hamiltonian was given by~\cite{2004.07266}, however the sample complexity was still polynomial in system size and the postprocessing was only efficient under further assumptions on the model, as it required solving a maximum entropy optimization problem. This was followed by work of~\cite{haah2022optimal}, which focused on the high temperature regime. Resorting to cluster expansion techniques, they were able to obtain optimal sample complexity bounds and efficient post-processing. More recently~\cite{bakshi2024learning} gave Hamiltonian learning algorithms for general Hamiltonians with efficient sample and computational complexity based on sums-of-squares relaxations. However, the scaling of both the sample complexity and computational complexity is a high-degree polynomial and likely far from optimal.

This program can be understood as a quantum version of the widely studied classical problem of learning a discrete graphical or Ising model \cite{santhanam2012information,Ravikumar2010,bresler2015efficiently,vuffray2016interaction,klivans2017learning,hamilton2017information,wu2019sparse}, but a lot of progress is still required to put the quantum protocols on an equal footing with their classical counterparts. The currently best known general purpose computationally efficient quantum protocols for Hamiltonian learning from a Gibbs state of spin systems require previous knowledge of the graph w.r.t. which the Hamiltonian is local and have a sample complexity that is polynomial in the system's size~\cite{bakshi2024learning,narayanan2024improved}\footnote{After the first version of this work was made public, logarithmic sample complexity was proven by~\cite{chen2025learning}.}. In contrast, for classical discrete graphical models it is known how to learn the graph with a number of samples that scales logarithmically in the system's size~\cite{bresler2015efficiently,vuffray2016interaction,klivans2017learning}, and known algorithms are essentially optimal~\cite{santhanam2012information}. Furthermore, the only known computationally efficient algorithms have a highly suboptimal scaling in precision~\cite{bakshi2024learning,narayanan2024improved} and, in a nutshell, the main technical barrier to generalize such results to quantum states is the lack of (exact) conditional independence structures for the underlying states which is central to most classical results.

In the classical case there is also an extensive literature dedicated to the learning of Gaussian graphical models~\cite{dempster1972covariance,besag_spatial_1974, friedman2008sparse}, which can be understood as continuous variable versions of discrete graphical models. In the last years, many fundamental questions related to this problem were settled, such as~\cite{misra20a}, that gave sample optimal~\cite{Wang2010} algorithms to learn these models, although the community still focuses on learning under more constrained settings, say in a differentially private way~\cite{10.1145/3564246.3585194} and even much more complex models such as mixtures of Gaussians~\cite{10.1145/3519935.3519953}.
The natural quantum variation of learning a Gaussian model is that of learning bosonic Gaussian states~\cite{Holevo2012}, which correspond to Gibbs and ground states of quadratic Hamiltonians. Somewhat surprisingly given the prevalence and importance of such states in quantum optics and continuous variable approaches to quantum computing~\cite{serafini2017quantum}, this  problem was not considered prior to this work. Indeed, to the best of our knowledge, the only works that considered bosonic Hamiltonian learning problems focus on learning parameters from dynamics~\cite{li2307heisenberg,mobus2023dissipation}, not states.

In this work we initiate the study of Hamiltonian learning in bosonic Gaussian states, { which can be completely characterized in terms of their first and second moments, or in terms of their first moments and their Hamiltonian matrix.} 
We show that it is possible to learn all the parameters of a quadratic Hamiltonian on $m$ modes on a known interaction graph $G$ of maximal degree $\Delta-1$ up to precision $\epsilon>0$ with a number of samples scaling as $O(\epsilon^{-{2+{\gamma}}})$ for any $\gamma>0$, with the precision and logarithmically with the number of modes, the natural measure of size of the system. Furthermore, the classical postprocessing of the data is efficient and  we resort only to heterodyne measurements, which are simple to implement in the lab. In addition, we show that it is possible to learn the graph on which the Hamiltonian is defined under mild assumptions with a similar number of samples in the precision and number of modes as for the Hamiltonian learning. Thus, our method only requires experimentally feasible measurements and is highly scalable. 
{ In Fig.~\ref{fig:graph}, we show an example of a graph of interactions, and its corresponding structure of the Hamiltonian matrix.}

In addition, we also obtain the first protocol to learn a positive temperature Gaussian state in trace distance with an inverse-quadratic scaling in precision. Even though previous recent work~\cite{mele2024learning,holevo2024estimates} showed how to solve this problem with a polynomial scaling in the number of modes, the scaling in precision was quartic, albeit it required fewer assumptions on the state compared to our result. A concurrent work also shows a related result with a different method, that also applies to pure states, and a different concrete trace distance bound~\cite{bittel2024optimalestimatestracedistance}.

From a technical perspective, our main innovations are various perturbation bounds for covariance and Hamiltonian matrices of Gaussian states that are of independent interest, combined with what we call the local inversion subroutine, a key part of our algorithms for graphical models. This allows us to estimate the Hamiltonian of Gaussian states only accessing small blocks of the covariance matrix at a time, with the precision of the estimates only scaling with the size of the blocks and not the whole system. { The workflow of the Hamiltonian learning algorithm is represented in Fig.~\ref{fig:workflow}.}

By addressing many previously open and unexplored questions in the learning of Gaussian states, our work significantly advances the field of quantum Hamiltonian learning and opens new avenues for research in the learning of continuous variable quantum systems.

\begin{figure}[h!]
    \centering
\includegraphics[scale=0.3]{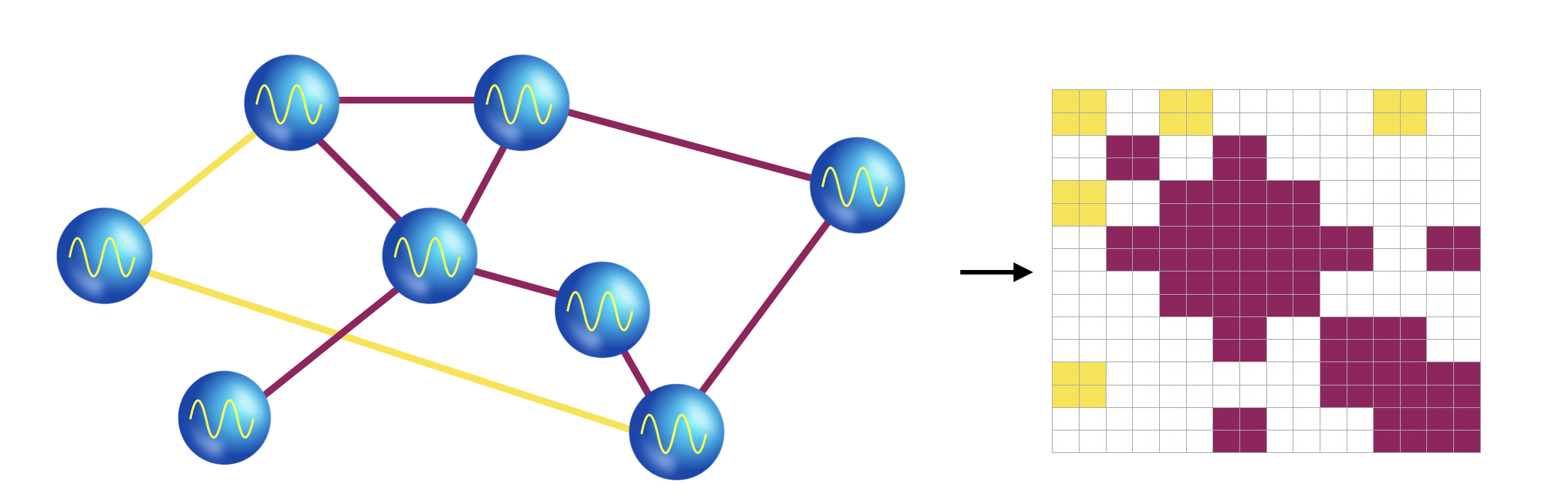}
\caption{Graph of interactions for 8 modes and corresponding structure of the Hamiltonian matrix. Vertices are numbered from left to right. The submatrices corresponding to the neighborhood of vertex $1$ at distance $1$, i.e. $\{1,3,7\}$, are highligthed in yellow. The degree of this graph is 4.}\label{fig:graph}
\end{figure}

\begin{figure}[h!]
    \centering
\includegraphics[scale=0.98]{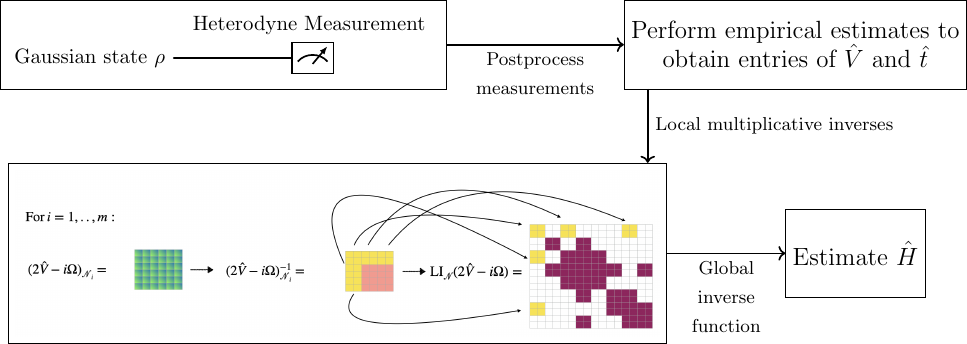}
\caption{Schematic workflow for our Hamiltonian learning protocol given we know the interaction graph. We first perform heterodyne measurements on copies of a Gaussian state $\rho$. We then use the outcomes to obtain empirical estimates of the entries of the covariance matrix and the mean. Importantly, for most of our results we will only require that the precision with which we estimate each entry does not scale with the number of modes in the systems. The reason for that is the next step, our local inversion procedure. Here, we are given a neighborhood structure $\{\mathcal{N}_i\}_{i\in V}$. Roughly speaking, for each $i$, $\mathcal{N}_i$ contains all the vertices that are at most a distance $l$ from $i$, where $l$ depends on the target precision. We then only invert the matrix $(2\hat{V}-i\Omega)$ on the submatrix corresponding to each neighborhood $\mathcal{N}_i$ (in the figure, we illustrate this step for the neighborhood of vertex $1$ of Figure~\ref{fig:graph}, for $l=1$) and "stitch together" the global matrix combining all estimates from the various neighborhoods. We then apply the function that computes $H$ from $(2V-i\Omega)^{-1}$ to the resulting matrix to form our guess of the Hamiltonian.}
\label{fig:workflow}
\end{figure}
\section{Notations and Background}
Given a finite set $\Lambda$, we denote by $\cH_\Lambda=\bigotimes_{v\in \Lambda}\cH_v$ the Hilbert space of $m=|\Lambda|$ particles, and by $\cB_\Lambda$ the algebra of bounded linear operators on $\cH_\Lambda$. The trace on $\cB_\Lambda$ is denoted by $\tr{\cdot}$. 
We use the notations $[m]$ when refering to the set $\{1,\dots ,m\}$. 
{We say that a square matrix $A\in M_{m,m}(\mathbb{C})$ is positive-semidefinite (psd) it is hermitian with nonnegative eigenvalues, and we write $A\geq B$ to say that $A-B$ is psd. A psd matrix is positive-definite if the eigenvalues are strictly positive.}
We denote by $O^\dagger$ the adjoint of an matrix $O\in M_{m_1,m_2}(\mathbb{\mathbb{C}})$, and by $A^\intercal$ the transpose.
We also denote by $\|A\|_p=\tr{(A^\dagger{A})^{p/2}}^{\frac{1}{p}}$ the Schatten $p$-norm of a matrix $A$, and by $M^{>0}_{2m}(\mathbb{R})$ the set of positive-definite symmetric matrices of size $2m\times 2m$. 
The identity map is denoted by $I$. 
The set of real symplectic matrices of size $2m\times 2m$ is denoted by $\mathrm{Sp}(2m,\mathbb{R})$.

Let us now review basic concepts of continuous variable systems and explain the connection between Gaussian quantum states and classical Gaussian graphical models. We refer to~\cite{Holevo2012} for a more complete treatment and proofs of the staments here. A continuous variable (CV) system with $m$ modes is defined on the Hilbert space $\cH_m:= L^2(\R^m)$, and
we denote the canonical operators on each mode as $X_j,P_j$ ($j=1,\ldots,m$). Define the formal vector
\begin{align*}
R:= \begin{pmatrix} X_1,P_1,\dots, X_m,P_m \end{pmatrix}^\intercal 
\end{align*}
and the symplectic form
\begin{align*}
\Omega   := \bigoplus_{i=1}^{m}\begin{pmatrix} 0 & 1 \\ -1 & 0 \end{pmatrix} 
\label{Omega}
\end{align*}
(this a block diagonal matrix, where all diagonal blocks are $2\times 2$ matrices), in terms of which the canonical commutation relations read (at least when evaluated on Schwartz functions)
\begin{equation}\begin{aligned}\hspace{0pt}
[ R_j, R_k ] = i\Omega_{jk}\, .
\label{CCR}
\end{aligned}\end{equation}

With a slight abuse of notation, we will often denote by the same symbol $\Omega$ the symplectic form for different sets of modes. The quantum covariance matrix $V[\rho]$ and mean vector $t[\rho]$ associated with a generic state $\rho$ are defined by
\begin{equation}\begin{aligned}\hspace{0pt} \label{firstsecondmoments}
t[\rho]_j := \Tr \left[\rho R_j\right]\, ,\quad V[\rho]_{jk} :=\frac{1}{2} \Tr \left[\rho \left\{R_j-t_j,\, R_k-t_k\right\}\right]\, ,
\end{aligned}\end{equation}
provided that these expressions are well defined. By the canonical commutation relations, we have that 
\begin{align*}
V[\rho]+\frac{i}{2}\Omega=\tr{(R-t[\rho])\rho (R-t[\rho])}\,.
\end{align*}
This relation as well as its transpose imply the following uncertainty relation:
\begin{align*}
V[\rho]\ge \pm\frac{i}{2}\Omega\,.
\end{align*}
The condition above is in fact not only necessary but also sufficient for a matrix $V$ to be the covariance matrix of a quantum state. For an arbitrary $x\in \R^{2m}$, we define the associated Weyl operator by
\begin{equation}\begin{aligned}\hspace{0pt}
\W(x) := e^{ i x^\intercal R }\, .
\label{D}
\end{aligned}\end{equation}
Note that $\W(x)^\dag = \W(x)^{-1} = \W(-x)$. Using displacement operators, we can re-write~\eqref{CCR} in Weyl form as
\begin{equation}\begin{aligned}\hspace{0pt}
\W(x+y) = e^{\frac{i}{2} x^\intercal\Omega y}\, \W(x) \W(y)\, .
\label{Weyl}
\end{aligned}\end{equation}
This implies that 
\begin{align*}
\W(-\Omega^{-1}x)^\dagger RW(-\Omega^{-1}x)=R+xI \,.
\end{align*}

 For an arbitrary trace-class operator $Z$, we construct its characteristic function $\chi_Z:\R^{2m}\to \C$ by
\begin{equation}\begin{aligned}\hspace{0pt}
\chi_Z(x) := \Tr \big[Z\, \W(x)\big]\, .
\label{chi}
\end{aligned}\end{equation}
Characteristic functions are always bounded and furthermore continuous, because of the strong operator continuity of the mapping $x\mapsto \W(x)$. A Gaussian state $\rho$ is a state whose characteristic function has the Gaussian form
\begin{align*}
\chi_\rho(x)=\operatorname{exp}\Big(it[\rho]^\intercal x-\frac{1}{2}x^\intercal V[\rho]x\Big)\,.
\end{align*}
It is easy to verify in that case that $t[\rho]$ and $V[\rho]$  correspond to the mean vector and covariance matrix of $\rho$. 

A Gaussian state is nondegenerate if and only if (see \cite[Lemma 12.22]{Holevo2012})
\begin{align*}
\operatorname{det}\left(V[\rho]+\frac{i}{2}\Omega\right)\ne 0\,,
\end{align*}
that is, if the matrix $V[\rho]+\frac{i}{2}\Omega$ is nondegenerate. In that case, the state $\rho$ takes the form \cite[Theorem 12.23]{Holevo2012}
\begin{align}\label{equ:structure_Gaussian}
\rho(t,H)=\W\big(-\Omega^{-1}t[\rho]\big)\,\frac{e^{-R^\intercal H[\rho] R}}{\sqrt{\operatorname{det}\left(V[\rho]+\frac{i}{2}\Omega\right)}}\,\W\big(-\Omega^{-1}t[\rho]\big)^\dagger\,,
\end{align}
where the matrix $H[\rho]$ is referred to as the (Gaussian) Hamiltonian of the system and satisfies 

\begin{align}\label{eq:HtoV}
2i \Omega V[\rho]=\operatorname{coth}(i H[\rho]\Omega)\,.
\end{align}
Inverting it, we have 
\begin{equation}\label{eq:VtoH}
H[\rho]=\frac{1}{2}\ln \left(\frac{2i\Omega V[\rho]+I}{2i\Omega V[\rho]-I}\right)i\Omega=\frac{1}{2}\ln \left(I+\frac{2}{2i\Omega V[\rho]-I}\right)i\Omega.
\end{equation}

A fundamental property of $H$ that we will repeatedly use is the following.  For any (symmetric) $2m\times 2m$ Hamiltonian $H>0$, let a symplectic diagonalization be
\begin{equation}\label{eqsymplect2}
H=S^{-\intercal}D{S^{-1}}\,,
\end{equation}
with $D=\oplus_{i=1}^m d_i I_{M_2}$, with $d_i>0$, and $S\in\mathrm{Sp}(2m,\mathbb{R})$. {Operationally, the symplectic diagonalization (or Williamson's normal form) corresponds to a preparation procedure for a Gaussian state: 1) prepare $m$ thermal states of harmonic oscillators with inverse temperatures $2d_i$. 2) Act on them via the phase space transfomation corresponding to the symplectic matrix $S$, which can be realized via a passive linear transformation, followed by single-mode squeezing, and by another passive linear transformation~\cite{serafini2017quantum}. The operator norm $\|S\|_{\infty}$ encodes the maximum amount of single-mode squeezing of this decomposition of $S$. Moreover, Hamiltonians that commute with the total energy $E_m=\frac{1}{2}\sum_{i=1}^{2m} R_iR_i$ have the form $\sum_{i,j}a^{\dagger}_{i}K_{ij}a_j$, where we expressed position and momentum operators in terms of creation and annihilation operators. Here $K$ must be a complex psd matrix and it is diagonalized by a unitary matrix. As a symplectic transformation of the phase space, this is an orthogonal symplectic transformation and it has $\|S\|_{\infty}=1$. }
 Let us also denote as $d_{\min}$ and $d_{\max}$ the smallest and the largest symplectic eigenvalues of $H$, respectively, and let $\lambda_{\min}$ and $\lambda_{\max}$ be the smallest and the largest eigenvalues of $H$, respectively.
By Theorem 11 in~\cite{Bhatia2015}, $\lambda_{\min}\leq d_{\min}\leq d_{\max}\leq\lambda_{\max}$, and since $\lambda_{\max}=\|H\|_{\infty}\geq \|S\|_{\infty}^2 d_{\min}\geq  \|S\|_{\infty}^2 \lambda_{\min} $, we have $1\leq \|S\|_{\infty}^2\leq \frac{\lambda_{\max}}{\lambda_{\min}}$. In the following, $\|S\|, d_{\min}, d_{\max}$ will be used as parameters, as they have a natural physical interpretation, but these inequalities clarify how they relate to more usual parameterization of Hamiltonian matrices, in terms of their norm and condition number.
We will say that 
a Hamiltonian matrix $H$ has graph $G=([m],\mathsf{E})$ made of $m$ vertices, where two vertices $i,j\in [m]$ are connected by an edge whenever the $2\times 2$ sub-matrix
$\left\{H_{2i-\delta,2j-\delta'}\right\}_{\delta,\delta'\in\{0,1\}}$ is non-zero. 

{ We can also bound $\|S\|_{\infty}$, $d_{\min}$ and $\|t\|_{2}$ as a function of the energy of the state, i.e. the expectation value of operator $E_m$, see Appendix~\ref{sec:energybounds} for a proof:
\begin{align}
\|t\|_{2}^2&\leq 2E, \label{ineq:energy1}\\
\|S\|_{\infty}^2&\leq 4E, \label{ineq:energy2}\\
(1-e^{-2d_{\min}})^{-1}&\leq 8E^2.\label{ineq:energy3}
\end{align}

However, $d_{\max}$ is essentially independent of the energy, and expresses how close the state is to a pure state. In fact, for a sequence of states converging to the vacuum, with $d_{\max}=d_{\min}=d$, $S=I$, $d$ going to infinity, the energy goes to a constant.

}

{
\subsection{Bosonic graphical models in many-body physics}

Bosonic Hamiltonians with finite number of modes are used to describe a multitude of interacting many-body quantum systems, such as cold atoms in optical lattices, magnetization fields, optomechanical systems, trapped ions, and  electromagnetic fields in cavities, interacting with matter~\cite{serafini2017quantum}. Among those, sparse Hamiltonian have been extensively considered for bosonic systems, as most materials are naturally described by Hamiltonians with a lattice structure, where the interactions strength decays with the distance. The sparse Hamiltonians considered in this work are models where the interactions are assumed to be zero if vertices are not connected, for example because the corresponding sites are further than some fixed distance on a lattice. This includes the approximation of considering non-zero interactions only between first neighbors. A key example of this is a grid of harmonic oscillators, an idealization which is a cornerstone of theoretical physics, since any potential can be approximated around its minimum by a quadratic one. Moreover, in systems with higher order-interactions, the Bogoliubov-De Gennes approach is used to study excitations above the ground state, via an effective quadratic Hamiltonian. This method captures the physics of weakly-interacting dilute Bose gases~\cite{lewenstein_ultracold_2012}.  Moreover, a typical quadratic Hamiltonian is also the hopping (or tight-binding) Hamiltonian on a graph, which is the basis of all lattice models.
The sharp cutoff on local interactions that is also quite convenient for learning questions, as it reduces the number of parameters. 

In fact, experimentally realizing local (and thus sparse) Hamiltonians, including bosonic ones, is the goal of analog quantum simulation~\cite{lewenstein_ultracold_2012}, both for reproducing models of condensed matter physics and for creating and studying meta-materials that cannot be found in nature. Starting from proposals for quantum simulation of the Bose-Hubbard model in cold atoms~\cite{jaksch_cold_1998}, the field is also exploring also Hamiltonian with non-Euclidean interaction graphs, which exhibiting non-trivial band structure due to their topology~\cite{peano_topological_2015,carusotto_photonic_2020}.

With current technology, it is in fact possible to realize simulators that implement Hamiltonians with a desired graph structure, in a variety of platforms and in a programmable way. 
Perhaps the closest work to our idealized setting is \cite{periwal_programmable_2021}, which reports the engineering of various interaction graphs, including rings, chains, ladders, non-Archimedean geometries, in an array of atomic ensembles with 18 sites and $10^{4}$ Rubidium atoms per site, interacting via spin-exchange. The collective magnetizations at each site can be treated as bosonic variables in the relevant regime, with a quadratic Hamiltonian. The paper also shows procedures to verify the interactions from two-point correlation functions of the magnetization, using a combination of symmetries of their model, an ansatz on the covariance matrix, and a high inverse temperature approximation,  estimating the interaction matrix as the inverse of the covariance matrix -- i.e. the classical method, expected to be valid at high temperature. Our methods give a way to estimate the interactions and the graph in these types of systems that is applicable at lower temperatures and without model-specific assumptions. 
We also mention \cite{senanian_programmable_2023}, where quadratic bosonic Hamiltonian on programmable graphs have also been simulated via photonic synthetic lattices with up to $10^{5}$ sites, where frequency modes are mapped to lattice sites, and \cite{youssefi_topological_2022}, where  lattice bosonic Hamiltonians, including honeycomb lattice structure, have been enginereed on $10$ sites of a superconducting optomechanical systems, and the quadratic Hamiltonian is reconstructed with modeshape measurements. 
Finally, \cite{katz_programmable_2023} is a theoretical proposal for bosonic Hamiltonian simulation using phonons in trapped-ion crystals, interacting via the excitations of the trapped ion spins. In all the above, quadratic hopping Hamiltonians already explore interesting physics and testing their correct implementation is paramount to go forward to the simulation of interacting models.

The problem of Gaussian Hamiltonian learning can also be seen as a non-asymptotic instance of Gaussian multiparameter metrology, for which studies on the asymptotic achievable precision exist (via the Fisher information matrix, e.g.~\cite{nichols_multiparameter_2018}). Indeed, our algorithms also suggest many-body metrology applications~\cite{montenegro_review_2025}: without trying to achieve the optimal information-theoretic limits which are highly sensitive on precise values of the parameter themselves, assumed to be roughly known, our methods ensure that the estimate of any parameter encoded linearly in the Hamiltonian matrix would be robust with respect to uncertainties in the other parameters, since the full Hamiltonian can be estimated at once, with non-asymptotic guarantees. 
As an example, bosonic Gaussian systems are experimentally used as thermometric probes~\cite{Purdy2015}. Existing works on this (see e.g.~\cite{Cenni2022}) optimize over feasible measurements
(including those we use in out protocol) to infer the temperature of a Gaussian state departing from a \emph{known} value.

It is also worth to discuss our assumption that the Gibbs states of quadratic Hamiltonians is a natural resource to consider for a learning problem. In the theory of open quantum systems, the fact that Gibbs states are thermal equilibrium states can be justified through the microscopic derivation of the Lindblad master equation from an interaction with a large bath, using the Born-Markov and secular approximations~\cite{davies_markovian_1974}. The result of this procedure is a time evolution that has the Gibbs state of the system Hamiltonian as fixed point and, if a certain ergodicity condition is satisfied, converges to the Gibbs state.  While this approximation had tremendous success in quantum optics, it is not well justified in other settings, especially when the system of interest is large, see e.g.~\cite{scandi_thermalization_2025} where other approximation methods are proposed. Nevertheless, a recent paper showed a model that gives relaxation to Gibbs states for generic quadratic fermionic or bosonic models coupled to generic Gaussian baths satisfying detailed balance, where uniform interactions avoid the need of the secular approximation~\cite{shiraishi_davies_2025}. We would also stress that for reconstructing the Hamiltonian, it would be enough that first and second moments thermalize to those of the Gibbs state, and that the covariance estimator stays sub-exponential. The former assumption is at the core of less strict definitions of thermalization~\cite{gluza_equilibration_2019}, while the latter is a very reasonable assumption for imperfect implementations of heterodyne measurements. 
}

\subsection{Relation to Gaussian graphical models}

Let us briefly recall the extensively studied problem of learning Gaussian graphical models, initiated by~\cite{dempster1972covariance,besag_spatial_1974}, and make the connection to the quantum problems we will study. In the problem of estimating a Gaussian graphical model, one is given $n$ i.i.d. $X_1,\ldots,X_n\in\mathbb{R}^m$ samples from a multivariate normal distribution $\mathcal{N}(t,\Sigma)$ with mean vector $t\in\mathbb{R}^m$ and covariance matrix $\Sigma\in\mathbb{R}^{m\times m}$. It is then known that the inverse of the covariance matrix $\Theta=\Sigma^{-1}$, often also called the precision matrix, encodes the conditional independence structure of the random variables. Furthermore, the pdf of the random variables is given by:
\begin{align}
f_{t,\Sigma}(x)=\frac{1}{\sqrt{\det(\Sigma)}(2\pi)^{m/2}}\textrm{exp}\left(-\frac{1}{2}(x-t)^t\Sigma^{-1}(x-t)\right)
\end{align}

This is in direct analogy with Equation~\eqref{equ:structure_Gaussian}, where we see that the quantum state can be expressed in a similar matter, albeit with a different functional dependence on the covariance matrix for the matrix in the exponent. But the interpretation in the quantum case of the nonzero entries of $H$ is similar to the classical case, as these represent the systems that "interact" with each other.

In the literature on learning of Gaussian graphical models, one is often interested in learning the conditional independence structure of the random variables, which corresponds to estimating the sparsity structure of $\Theta$. From that, one is also usually interested in estimating the parameters of $\Theta$, while having a scaling of the sample complexity that only scales logarithmically with the dimension of the random variables.
Thus, we see that the problem of learning the sparsity structure of $H$ is in direct correspondence with that of learning the conditional independence structure graph of a Gaussian graphical model. And that of estimating the entries of $H$ in turn corresponds to that of estimating the parameters of the Gaussian graphical model.

{

Furthermore, the two problems also share some technical aspects; for instance, in constrast to the discrete case, the underlying objects we sample from typically take unbounded values and a finer understanding of concentration is required. But the quantum problem poses unique challenges that are not present in the classical case. One difference is that the uncertainty principle obstructs from sampling from a multivariate normal with covariance matrix $V$ measuring one copy of the state: we resort to heterodyne measurement which gives as output a multivariate normal with covariance matrix $V+I/2$. More importantly,  the more complicated functional dependence between the covariance matrix and $H$ makes the analysis more challenging, since entries of $V$ corresponding to the neighborhood of a vertex $i$ are not sufficient by themselves to reconstruct exactly the interactions of $i$ with its neighbors (matrix elements $H_{ij}$). This fact can be understood as related to the lack of conditional independence in the quantum case: in general, one cannot reconstruct the state of the system by applying a channel on the complement of vertex $i$ that acts nontrivially only on the vertices connected to $i$.

The problem of estimating Gaussian graphical models with sparsity is a central theme of high-dimensional statistics, and we refer to books such as~\cite{hastie_elements_2009,hastie_statistical_2015,wainwright_high-dimensional_2019} for a more in-depth history of the problem. When the graph is known, a simple procedure using local inversions for each neighborhood suffices to find the coefficients very efficiently, as explained in Chapter 17 of~\cite{hastie_elements_2009}. In the zero-mean case, $O(\frac{d}{\epsilon^2})$ are sufficient to learn a $d\times d$ covariance matrix in relative error, i.e. $\Sigma(1-\epsilon)\leq\hat{\Sigma}\leq \Sigma(1+\epsilon)$~\cite{wainwright_high-dimensional_2019}. Applying local inversion to each neighborhood, symmetrizing, and via union bound, this means that $O(\frac{\Delta \|\Theta\|_{\infty}}{\epsilon^2}\log m)$ samples guarantee an entrywise error for the entries of $\Theta=\Sigma^{-1}$ of order $O(\epsilon)$, thus with no condition number dependence\footnote{It seems to us that a slight dependence on the diagonal elements of $V$ is instead needed in the non-zero mean case, which is not always treated in the literature. To learn the precision matrix alone, one can simply subtract copies of samples to get a zero-mean gaussian with twice the covariance matrix.}. An essentially matching lower bound to this simple algorithm is given by Theorem 2 of~\cite{Wang2010}, which states that $\Omega(\frac{\Delta^2}{\epsilon^2}\log m)$ samples are necessary. 
Let us thus stress that local inversions on neighborhoods at distance $1$ are already sufficient in the classical case because conditional independence ensures that the conditional distribution of a random variable $X_i$, when the other variables at distance one in the graph are known, contains the full information about the interactions of $i$ with its first neighbors. Our local inversion method, instead, goes further in the graph to account for the lack of conditional independence.

In fact, most of the literature has directly dealt with the harder problem of estimating the precision matrix when the graph is not known, as well as estimating the graph itself. To do that, the main strategy is to build an estimator ot the precision matrix as the solution of a regularized and/or constrained optimization problem, to enforce sparsity of the solution. A popular approach is the graphical LASSO, 
first studied in~\cite{friedman2008sparse, yuan_model_2007}, that consists in maximizing the $l_1$-regularized log-likelihood. The analysis of the sample complexity of the graphical LASSO~\cite{ravikumar_high-dimensional_2011} gives an upper bound $O(\frac{\Delta^2}{\alpha^2\epsilon^2}\log m)$ sample complexity, where $\alpha$ is a parameter encoding a certain incoherence condition of the precision matrix. Other approaches are based on estimating the precision matrix one row at a time, via LASSO~\cite{meinshausen_high-dimensional_2006}, Danzig~\cite{yuan_high_2010}, or $l_1$-constrained optimization (CLIME)~\cite{cai_constrained_2011}.
Simpler loss functions~\cite{zhang2014sparse} and non-convex methods have also been considered~\cite{wang2016precision}.
These algorithms 
that are all efficient to implement, scaling polynomially with the number of variables. The sample complexity is $O(\log m)$, but only if appropriate assumptions, that can be broadly linked to the condition number of $\Theta$ being bounded, hold~\cite{misra20a}. For the problem of graph selection, the paper~\cite{misra20a} showed a sample complexity $O(\frac{\Delta\log m}{\kappa^2})$, with $\kappa$ being a lower bound on the relative strenghts of the interactions, and no condition number dependence, and matching the $\kappa$ and $m$ dependence in the lower bounds in~\cite{Wang2010}. The tradeoff is a worse scaling in computational complexity, still polynomial in $m$ but with a degree depending on the degree of the graph, since for each node one needs to test all possible neighborhoods (improvements are possible for certain models in~\cite{kelner2020learning}).
Our graph learning algorithm is inspired by the approach in~\cite{misra20a} (especially their DICE algorithm). However, we were not able to remove the condition number dependence, and leave it to future work.

}
\section{Results}
In this paper, we make substantial progress or deliver the first results on three fundamental problems related to learning Gaussian quantum states: learning in trace distance, Hamiltonian learning and graph learning. 
We will state all of our results in terms of parameters that control the complexity of the Gaussian states. First, the amount of squeezing in the Gaussian state (namely the operator norm of the symplectic diagonalization matrix $S$), the ``temperature'' of the state (as measured by its largest and smallest symplectic eigenvalues $d_{\min},d_{\max}$), the maximal displacement $t_{\max}$, the number of modes $m$ and the maximal degree of the graph of the Hamiltonian, equal to $\Delta-1$. For all our results we will assume we have constant-size bounds on the relevant parameters.
{
See Table~\ref{tab:parameters} for a recap of notation, interpretations and relations between the parameters

\begin{table}[t]
\centering
\renewcommand{\arraystretch}{1.4}
\begin{tabular}{l p{7.5cm} p{6cm}}
\toprule
\textbf{Symbol} & \textbf{Description} & \textbf{Useful Relations} \\ 
\midrule
$m$ & Number of modes & -\\
$H$ & Hamiltonian matrix & - \\
$\Delta$  & Degree of the interaction graph +1 & $\|H\|_{\infty}\leq 2\Delta\max_{i,j}|H_{i,j}|$ \\
$\mathsf{E}$ & Edges of the interaction graph & $0<\max_{\delta_1,\delta_2\in\{0,1\}}|H_{2i-\delta_1,2j-\delta_2}|$ iff $(i,j)\in \mathsf{E}$ \\
$\kappa$  & Lower bound on absolute values of\newline Hamiltonian interaction matrix elements & $\kappa<\max_{\delta_1,\delta_2\in\{0,1\}}|H_{2i-\delta_1,2j-\delta_2}|$ for any  $(i,j)\in\mathsf{E}$. \\
$D,S$ & Williamson's normal form of $H$ & 
\(\begin{aligned} 
H &= S^{-\intercal}DS^{-1} 
\end{aligned}\) \\
$V$   & Covariance matrix & 
\(\begin{aligned} 
V &= \frac{1}{2} S \coth(D) S^{\intercal} 
\end{aligned}\) \\
$t$   & Vector of first moments & - \\
$E$   & Expectation value of the total energy & 
\(\begin{aligned}
\Tr[V] + \|t\|_2^2 &\le 2E
\end{aligned}\) \\
$d_{\max}$ & Maximum symplectic eigenvalue of $H$, maximum inverse temperature of a normal mode, purity parameter &
\vspace{-0.3cm}
$\|D\|_{\infty}= d_{\max} \le \|H\|_{\infty}$\\
$d_{\min}$ & Minimum symplectic eigenvalue of $H$, minimum inverse temperature of a normal mode, spectral gap of $RHR^{\intercal}$ & 
\vspace{-0.3cm}
\(\begin{aligned}
\|D^{-1}\|_{\infty} = d_{\min}^{-1}& \le \|H^{-1}\|_{\infty} \\
(1-e^{-2d_{\min}})^{-1} &\le 8E^2
\end{aligned}\)\\
$\|S\|_{\infty}$ & Operator norm of $S$, maximum amount of squeezing & 
\(\begin{aligned}
1 \le \|S\|_{\infty} &\le \frac{\|H\|_{\infty}}{d_{\min}} \\
\|S\|_{\infty} &= \|S^{-1}\|_{\infty} \\
\|S\|_{\infty}^2 &\le 4E
\end{aligned}\) \\
\bottomrule
\end{tabular}
\caption{Notation and relations between the several parameters used in the analysis.}
\label{tab:parameters}
\end{table}
}

Moreover,
\textit{all of our results only require heterodyne measurements, which are easy to implement experimentally, and have polynomial postprocessing in the number of modes}. 
As is well-known and explained in detail in Sec.~\ref{sec:learning_covariance}, given a Gaussian state $\rho$ with covariance matrix $V$, performing a heterodyne measurement corresponds to sampling from a Gaussian random vector with covariance matrix given by $V+I/2$. Thus, by resorting to standard concentration inequalities of Gaussian random variables, we show in Sec.~\ref{sec:learning_covariance} how to estimate the entries of the covariance matrix by simply taking the empirical average, subtracting the identity and projecting onto the set of valid covariance matrices. This procedure is the bread and butter of all our algorithms, as we always depart from some estimate of the covariance matrix. In particular, we show that it suffices to take
\begin{align}
N= \mathcal{O}\left(\max_i\frac{\left(2V_{i,i}+1+|t_i|^2\right)^2(1+|t_i|)^2}{\epsilon^2}\ln\left(\frac{m}{\delta}\right)\right)
\end{align}
samples to obtain a covariance matrix estimate such that all entries are $\epsilon$ close to the true with probability of success at least $1-\delta$.

The second ingredient that is present in the majority of our proofs are continuity bounds for the Hamiltonian in terms of the covariance matrix and vice-versa and are discussed in Sec.~\ref{sec:conntinuity_bounds}. These are akin to the bounds derived in~\cite{2004.07266} for spin system, although our proofs are based on techniques from matrix analysis and theirs more on techniques from many-body systems. We believe that these continuity bounds are of independent interest.

We start with our result on learning in trace distance. Roughly speaking (see Def.~\ref{def:trace_dist_learning} for a formal definition), the trace distance learning problem requires us to output a covariance matrix or Hamiltonian and mean vectors such that the corresponding Gaussian state is $\epsilon$ close in trace distance to the true state with probability at least $1-\delta$ given access to measurement data and corresponds to the classical problem of estimating a multivariate Gaussian in total variation distance. We then show in Sec.~\ref{sec:tracelearn}:
\begin{thm}[Learning Gaussian states in trace distance (informal)]
Let $\rho(t,H)$ be a Gaussian state on $m$ modes.
Then, for $1>\epsilon,\delta>0$, it suffices to take
\begin{align}
    N={\mathcal{O}}\big(\epsilon^{-2}m^3\ln(m\delta^{-1})\mathrm{poly}(\|S\|_{\infty},(e^{2d_{\max}}-1)(1-e^{-2d_{\min}})^{-1},t_{\max})\big)
\end{align}
copies of $\rho$ to obtain an estimate of $\rho$ up to trace distance $\epsilon$ with success probability at least $1-\delta$.
\end{thm}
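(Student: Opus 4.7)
The strategy is a clean two-step reduction: (i) produce entrywise estimates $\hat V,\hat t$ of the covariance matrix and mean vector from heterodyne samples, and (ii) invoke a continuity bound to turn these into a trace-distance estimate of the reconstructed Gaussian state. Step (i) is exactly the estimator of Sec.~\ref{sec:learning_covariance}: heterodyne measurements on copies of $\rho$ sample from a classical Gaussian with covariance $V+I/2$ and mean $t$, and empirical averaging followed by subtracting $I/2$ and projecting onto the physical cone $\{W\,:\,W+i\Omega/2\ge 0\}$ yields
\begin{equation*}
\max_{i,j}|\hat V_{ij}-V_{ij}|\le\eta,\qquad \max_{i}|\hat t_{i}-t_{i}|\le\eta,
\end{equation*}
with probability $1-\delta$ from $N=\tcO(\eta^{-2}\log(m/\delta))$ samples. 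The polynomial prefactors enter through the sub-Gaussian moment bound on the heterodyne outcomes: the diagonal entries of $V$ are upper bounded by $\|S\|_\infty^2 \cdot \tfrac{1}{2}\coth(d_{\min})$, and the mean is bounded by $t_{\max}$. The projection onto the cone is a contraction in the operator norm, so it does not worsen the entrywise estimate by more than a constant.

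Step (ii) is where the continuity machinery of Sec.~\ref{sec:conntinuity_bounds} does the work. The target is a bound of the form
\begin{equation*}
\|\rho(t,H)-\rho(\hat t,\hat H)\|_1\ \le\ K\bigl(\|\hat V-V\|_\infty+\|\hat t-t\|_2\bigr),
\end{equation*}
where $\hat H$ is computed from $\hat V$ via the inversion formula \eqref{eq:VtoH}, and $K$ is polynomial in $\|S\|_\infty,(e^{2d_{\max}}-1),(1-e^{-2d_{\min}})^{-1},t_{\max}$, with at most a $\sqrt{m}$ dimension factor. A natural route is Pinsker's inequality: the relative entropy between two nondegenerate Gaussian states admits a closed form in terms of $V,\hat V,H,\hat H,t,\hat t$, and the $V\mapsto H$ continuity bound of Sec.~\ref{sec:conntinuity_bounds} shows that $S(\rho(t,H)\|\rho(\hat t,\hat H))=\cO(\|\hat V-V\|_2^{2}+\|\hat t-t\|_2^{2})$ with prefactors controlled by the stated parameters. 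Pinsker then gives the linear scaling on the right-hand side. The displacement piece is the easiest part of the argument, since shifts by $t$ act by Weyl unitaries which are isometries in trace distance, reducing everything to the centered case up to the $\|\hat t-t\|_2$ term.

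Combining the two steps, to force the trace distance below $\epsilon$ we need entrywise precision $\eta\sim\epsilon/m^{3/2}$: a factor $2m$ from $\|\cdot\|_\infty\le 2m\|\cdot\|_{\max}$ on the $(2m)\times(2m)$ covariance matrix, a factor $\sqrt{2m}$ from $\|\cdot\|_2\le\sqrt{2m}\|\cdot\|_{\max}$ on the displacement, and the residual $\sqrt m$ from the dimension dependence of $K$. Plugging this $\eta$ into the sample bound of step (i) yields the advertised $N=\tcO(\epsilon^{-2}m^{3}\log(m/\delta))$ with the claimed polynomial dependence on $\|S\|_\infty,(e^{2d_{\max}}-1),(1-e^{-2d_{\min}})^{-1},t_{\max}$, and a union bound over the $\mathcal O(m^2)$ entries absorbs into the $\log(m/\delta)$.

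\emph{Main obstacle.} The hardest piece is quantitatively establishing the step-(ii) continuity bound. Trace distance between infinite-dimensional Gaussian states is not automatically continuous in covariance matrices, so one must use the thermal envelope (encoded by $d_{\min},d_{\max}$) to effectively truncate the state and then propagate perturbations through the nonlinear map $V\mapsto H$. Controlling this map requires keeping $2i\Omega V-I$ uniformly away from singularity, which is precisely where $\|S\|_\infty$ and $d_{\min}$ enter; the sensitivity of the matrix logarithm on a bounded spectral set then yields a Lipschitz estimate $\|\hat H-H\|_\infty\lesssim\|\hat V-V\|_\infty$ with the right prefactors. Once this is in hand, the closed-form Gaussian relative entropy plus Pinsker closes the loop, and the remaining cost is purely the dimensional conversion from entrywise to operator and Euclidean norms discussed above.
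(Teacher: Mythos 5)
Your proposal matches the paper's proof: heterodyne measurements give entrywise estimates of $V$ and $t$, Pinsker applied to the (symmetrized) relative entropy yields a trace-distance bound that is bilinear in $(\hat H-H)$ and $(V-\hat V)$, and the $V\mapsto H$ continuity bound of Sec.~\ref{sec:conntinuity_bounds} makes the quantity under the square root quadratic in the entrywise error, so that your dimension counting $\eta\sim\epsilon/m^{3/2}$ and the resulting $N=\tcO(\epsilon^{-2}m^{3}\log(m/\delta))$ coincide with the paper's. The only cosmetic differences are that the paper symmetrizes the relative entropy so the normalization terms cancel exactly into the bilinear form of Proposition~\ref{th:symrelent}, and it enforces validity of $\hat V$ via an SDP feasibility argument (losing a factor $2$ entrywise by triangle inequality) rather than by a metric-projection contraction.
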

Note that this is the first result for this problem with an $\epsilon^{-2}$ scaling~\cite{mele2024learning}, together with~\cite{bittel2024optimalestimatestracedistance}. 
{ In fact,~\cite{mele2024learning} used a different parametrization than ours and expressed the sample complexity upper bounds in terms of an upper bound on the expected energy of the state. Theorem 4 in~\cite{mele2024learning} shows an upper bound $\mathcal{O}(\frac{m^7 E^3}{\epsilon^4})$, while Theorem 5 in the concurrent paper~\cite{bittel2024optimalestimatestracedistance} improved this to $\mathcal{O}\left(\frac{E^4}{\epsilon^2}\left(m+\log \frac{2}{\delta}\right)\right)$. 
We can express our bound in terms of the expectation value of a bound on the energy of the state to learn, and of a bound $\beta_{\max}\geq d_{\max}$ as $N=\mathcal{O}\left(\frac{m^3E^4(e^{\beta_{\max}}-1)^2}{\epsilon^2}\log\left(\frac{m}{\delta}\right)\right)$, see Remark~\ref{remarkenergy}. We see that we have to assume not only bounded energy but also bounded value of $\beta_{\max}$: highest values mean that the state is closer to being non-full rank, and in that case the Hamiltonian is not well-defined and our approach cannot work. This means in particular our approach does not cover pure state learning. We also mention that alternative trace distance bounds with $\mathcal{O}(\sqrt{\eps})$ scaling were obtained by~\cite{holevo2024estimates},~\cite{holevo2024estimatesburesdistancebosonic}, who argued that while suboptimal for the trace distance, this behaviour is tight for the Bures distance~\cite{holevo2024estimatesburesdistancebosonic}. 
}

\paragraph{Proof sketch:} we start by employing Pinsker's inequality to upper-bound the trace distance between two Gaussian states in terms of their (symmetrized) relative entropies. We are then able to obtain a close expression for the relative entropy, yielding the expression:
\begin{align}\label{equ:trace_from_Hamilt_mr}
\|\rho-\hat{\rho}\|_1&\le\sqrt{\sum_{i,j}(\hat{H}_{i,j}-H_{i,j})(V_{i,j}-\hat{V}_{i,j})+2(t_{i}-\hat{t}_i)(t_{j}-\hat{t}_j)(H_{i,j}+\hat{H}_{i,j})}
\end{align}
Then, using our estimators for the covariance matrix and the mean vector, combined with the continuity estimates for the Hamiltonian entries in terms of the covariance, we are able to upper-bound the trace distance and get the advertised sample complexity. 

 We provide the first result on Hamiltonian learning for Gaussian states, as discussed in Sec.~\ref{sec:hamlearn}. In essence, this problem asks for recovering the Hamiltonian $H$ of $\rho$ in some norm up to $\epsilon>0$ in some metric from measurement on $\rho$ with probability of success $1-\delta$. Here we focus on recovery in operator norm and obtain (see Corollary~\ref{cor:generalgraph} for a more detailed account on the combined dependence on the parameters):
\begin{thm}[Gaussian Hamiltonian learning (informal)]\label{th:informallearning}
Let $\rho$ be a Gaussian state with Hamiltonian $H$ of maximal degree $\Delta-1$.
Then, it suffices to take
\begin{align}
N=\mathcal{O}\left(\frac{1}{\eps^{2+\gamma}}\ln\left(\frac{m}{\delta}\right)\right) \,\, \forall \gamma>0
\end{align}
copies of $\rho$ to obtain an estimate $\hat{H}$ satisfying $\|H-\hat{H}\|_{\infty}\leq \epsilon$ with probability of success at least $1-\delta${, using heterodyne measurements to estimate the covariance matrix and Algorithm~\ref{alg:learnHG} to reconstruct the estimate of the Hamiltonian.} 
\end{thm}

{\begin{rem}\label{rem:complexitygraphlearning}The computational complexity of the above procedure is $O(\mathrm{poly}(m))$ when the other parameters are fixed, dominated by the classical post-processing of Algorithm~\ref{alg:learnHG}, and with exponent of $m$ independent from the other parameters. In fact, each of the $m$ calls to the local inversion subroutine (Definition~\ref{constrappr1}) requires a matrix inversion of a submatrix of $2\hat{V}-i\Omega$, and the final logarithm can be implemented in $O(\mathrm{poly}(m))$ time, e.g., via Jordan decomposition of the matrix $\operatorname{LI}_{\hat{\mathcal{N}}}(2\hat{V}-i\Omega)i\Omega$.
\end{rem}

\begin{figure}
\centering
\begin{minipage}{.9\linewidth}
\begin{algorithm}[H]
\caption{\textsf{Hamiltonian reconstruction from local inversions}} 
	\label{alg:learnHG}
	\begin{algorithmic}[1]
        \State Input: estimate $\hat{V}$ of the covariance matrix $V$ of $m$-mode Gaussian state $\rho$, graph $G$, distance parameter $l$
        \State Set $\mathcal{N}=\{\mathcal{N}_i(l)\}_{i\in[m]}$, where $\mathcal{N}_i(l)$ is the set of nodes at distance at most $l$ from $i$ in the graph $G$.
        \State Compute the estimator $\operatorname{LI}_{\mathcal{N}}(2\hat{V}-i\Omega)$ of $(2V-i\Omega)^{-1}$ as in Definition~\ref{constrappr1}
        \State Compute the estimator $\hat{H}=\frac{1}{2}\log(I+2\operatorname{LI}_{\hat{\mathcal{N}}}(2\hat{V}-i\Omega)i\Omega)i \Omega$ of $H$, %\newline (e.g. as in Equation~\ref{eq:locinvtoh})  
        \State Output $\hat{H}$.
	\end{algorithmic} 
\end{algorithm}
\caption{Algorithm for Hamiltonian learning for a known graph}
\end{minipage}
\end{figure}

}
We emphasize the fact that the sample complexity is only logarithmic in the number of modes. {Furthermore, we show that for graphs with polynomially growing neighborhoods, the sample complexity of our procedure in terms of precision is $\epsilon^{-2}$ up to polylog terms.} In Fig.~\ref{fig:simulation-errors-comparison} we showcase numerical simulations of our method for a 1D Hamiltonian with up to $1200$ modes, showing how it outperforms plug-in estimation methods. In addition, in Fig.~\ref{fig:simulation-errors-l} we show how truncating to nearest neighbors is not sufficient to obtain a good recovery even when we are not limited by finite sample-size.

\paragraph{Proof sketch:}  Recall that, given the covariance matrix $V$, the Hamiltonian of the Gaussian is given by $H=\frac{1}{2}\ln \left(I+\frac{2}{2i\Omega V-I}\right)i\Omega$. Thus, it is possible to infer the Hamiltonian if we have a sufficiently accurate approximation of the inverse of $2V-i\Omega $, which is given by $(2 V-i\Omega )^{-1}=-\frac{I-e^{+ 2Hi\Omega}}{2}(i\Omega)=\frac{1}{2}\sum_{n=1}^{\infty}\frac{\left(2Hi\Omega\right)^{n}}{n!}(i\Omega)$. Using these observations and the continuity estimates we develop in this work already suffice to obtain a Hamiltonian learning protocol with polynomial sample complexity in the number of  vertices. The main reason for the polynomial scaling is that this plug-in strategy requires us to estimate each entry of the covariance matrix to polynomial precision to ensure that the error in operator norm is of order $\epsilon$ once we perform the inversion. 

However, based on classical results~\cite{dempster1972covariance,friedman2008sparse}, we expect to be able to solve the Hamiltonian learning problem with a number of samples that scales logarithmically with the number of modes.
To achieve such a scaling, we pioneer what we call the local inversion technique. The intuition behind the technique is that, for local Gibbs states, the correlations with neighboring sites should determine the interactions. For classical Gaussian graphical models this is true by conditional independence, so it is possible to completely determine them by just considering constant-sized neighborhoods around each vertex. In contrast, we show that for Gaussian quantum states it suffices to post-process an estimate of $(2V-i\Omega)^{-1}$, constructed such that matrix entries corresponding to a pair of  vertices are obtained from inverting submatrices of the covariance matrix corresponding to vertices that are at most $\sim\frac{\log(\epsilon^{-1})}{\log\log(\epsilon^{-1})}$ away from the target pair of vertices, to estimate the interactions with error up to $\epsilon$. To show this, we resort to various bounds on the error in approximating inverses of approximately sparse matrices with local inversions, and the Taylor series representation of $(2V-i\Omega)^{-1}$ in terms of the Hamiltonian. These bounds are explained in Sec.~\ref{sec:sparse_approx}. But, crucially, the precision with which we need to estimate each of these submatrices to succesfully invert them locally does not scale with the number of modes, yielding our Hamiltonian learning result.

\begin{figure}[ht]
    \centering
    \begin{subfigure}[t]{0.48\textwidth}
        \centering
        \includegraphics[width=\textwidth]{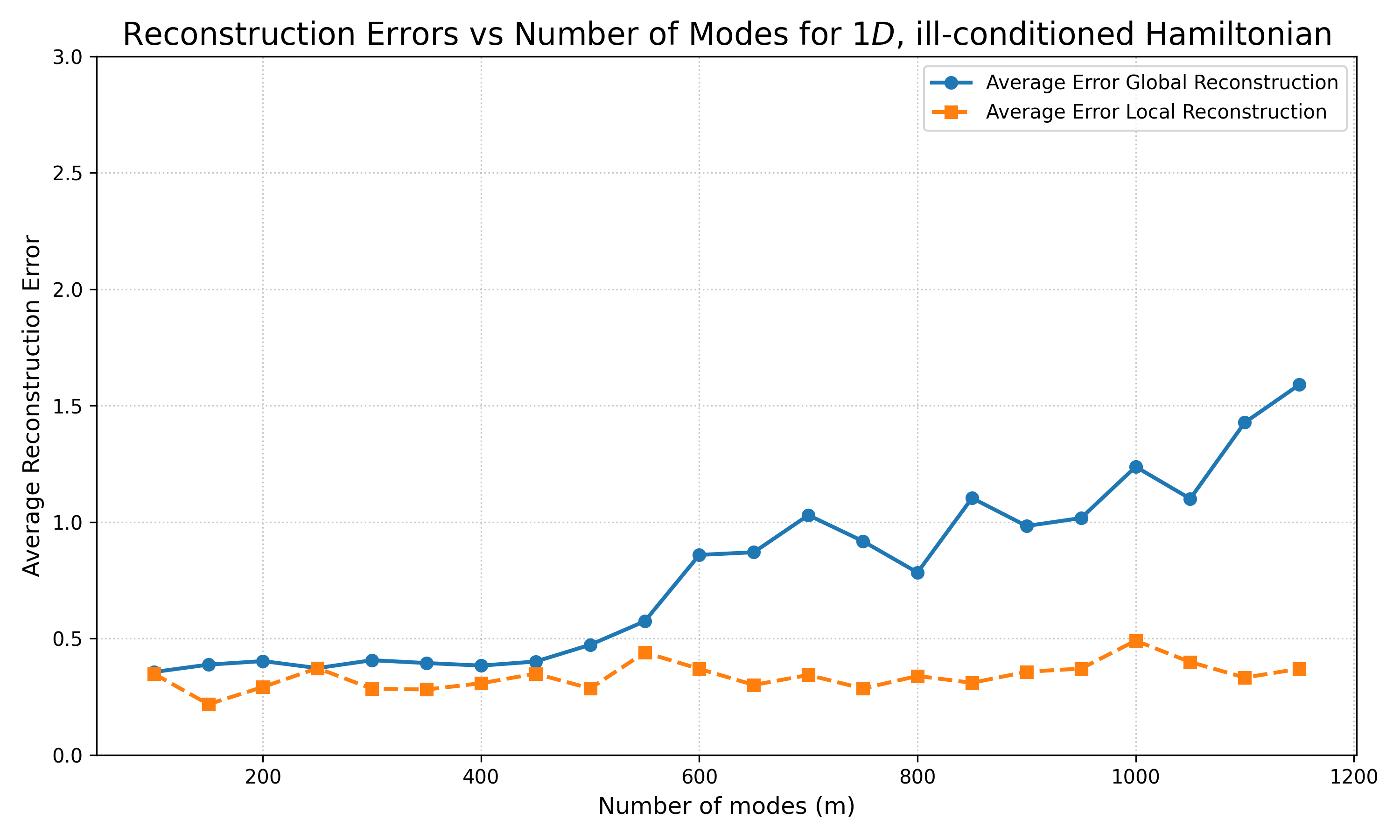}
        \caption{Ill-conditioned Hamiltonian ($c=0$)}
        \label{fig:simulation-errors-ill}
    \end{subfigure}
    \hfill
    \begin{subfigure}[t]{0.48\textwidth}
        \centering
        \includegraphics[width=\textwidth]{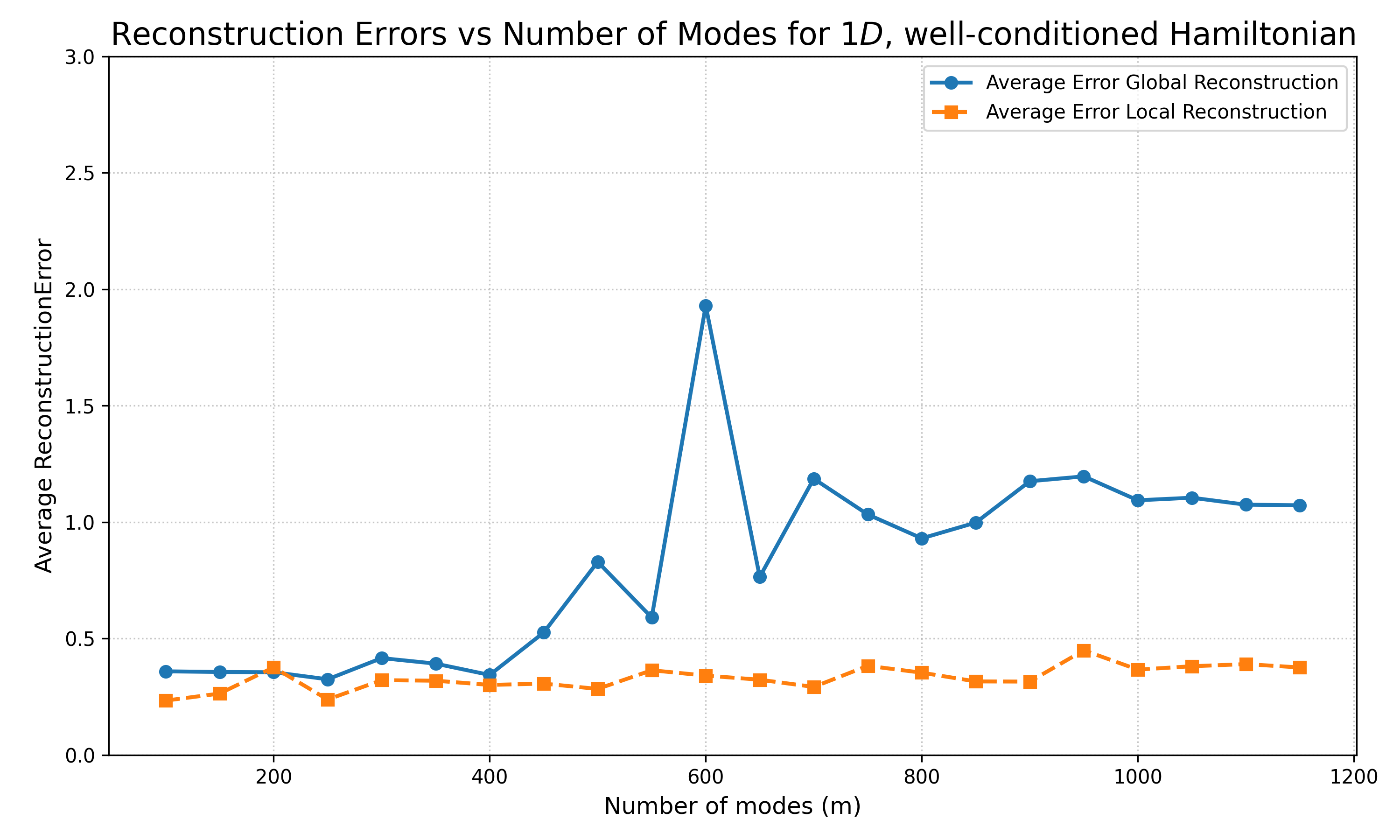}
        \caption{Well-conditioned Hamiltonian ($c=0.1$)}
        \label{fig:simulation-errors-well}
    \end{subfigure}
    \caption{Reconstruction errors vs. number of modes for ill- and well-conditioned Hamiltonians. For this figure, we considered two variations of 1D-Hamiltonians $H=(2+c)I-\ketbra{0}-(\sum_{i}\ket{i}\bra{i+1}+h.c.)$ and compared the reconstruction using the global plug-in method (estimate $\hat{H}=\frac{1}{2}\log(I+2(2\hat{V}-i\Omega)^{-1}i\Omega)i \Omega$ ) versus the local method ($\hat{H}=\frac{1}{2}\log(I+2\operatorname{LI}_{\hat{\mathcal{N}}}(2\hat{V}-i\Omega)i\Omega)i \Omega$) at inverse temperature $\beta=0.5$. The well-conditioned Hamiltonian corresponds to $c=10^{-1}$ and the ill-conditioned $c=0$, see App.~\ref{condnumbex} for a discussion on the condition number of the Hamiltonian for $c=0$  as a function of $m$.
    We performed numerical experiments generating $10^4$ samples for all number of modes and reconstructed the same Hamiltonian for $5$ different sets of samples for each number of modes. The $y$-axis displays the maximum difference $|H_{i,j}-\hat{H}_{i,j}|$ between the estimated Hamiltonian and the ideal one averaged over $5$ realizations of the samples. As they are both 1D models, we have $\Delta=2$ for both cases. % In the well-conditioned Hamiltonian, we have $\kappa\simeq 40$, whereas for the ill-conditioned we have $\kappa\simeq 2000$. 
    Importantly, we see that the quality of the recovery in both cases is essentially independent of the number of modes for both the well and ill-conditioned case when using our local inversion method and that it outperforms the error of the global inversion by roughly one order of magnitude for larger numbers of modes. }
    \label{fig:simulation-errors-comparison}
\end{figure}

\begin{figure}[ht]
    \centering
        \includegraphics[width=0.7\textwidth]{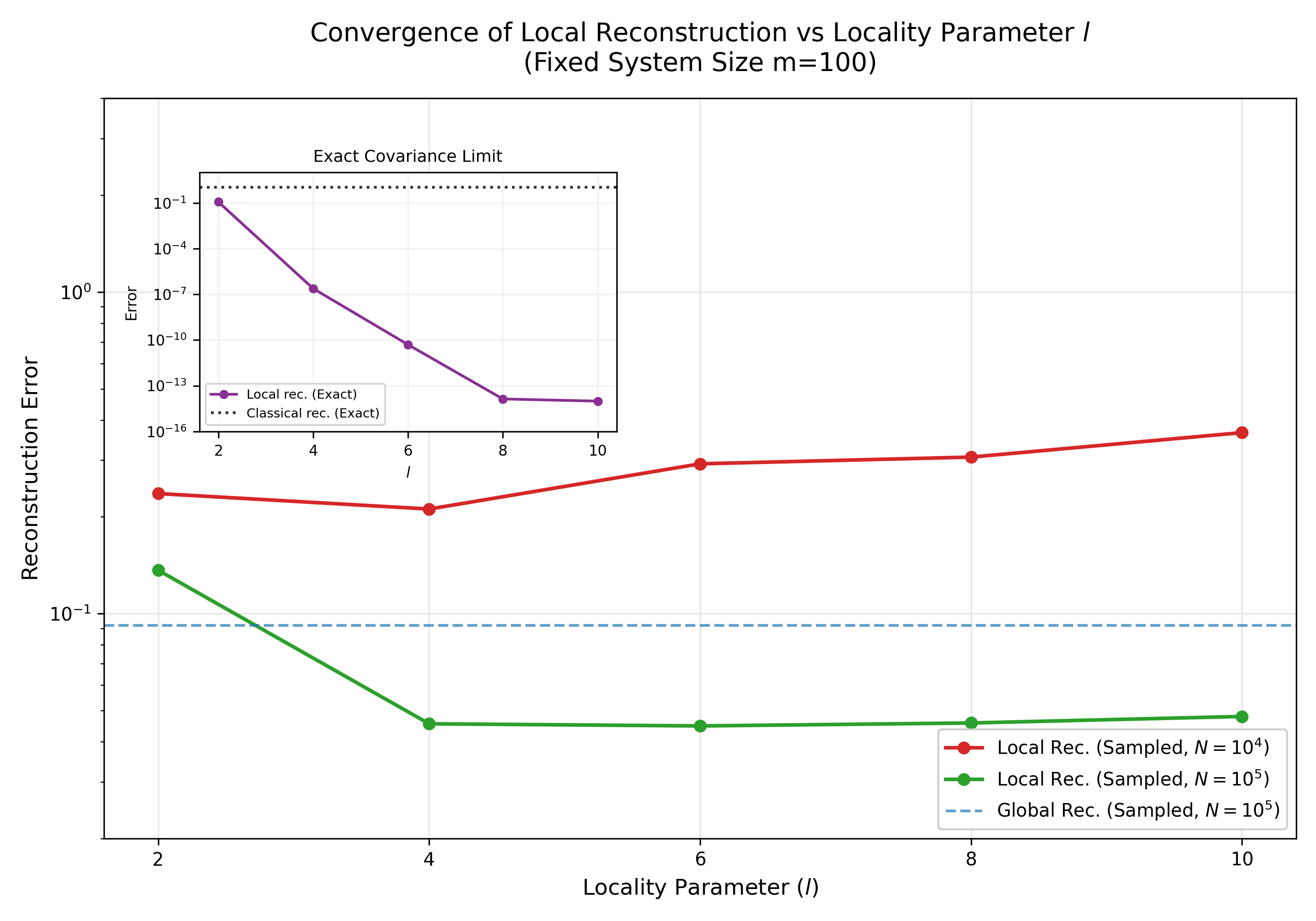}

    \caption{Entry-wise reconstruction errors vs. $l$ for the 1D well-conditioned Hamiltonian defined in Fig.~\ref{fig:simulation-errors-comparison} for $N=10^5$ and $N=10^4$ samples, against the error of the global reconstruction. We see that the local strategy outperforms the global one at $l=4$ and $N=10^5$. In the inset we plot the error given the exact  covariance matrix, showing the exponential improvement in precision as we increase $l$, together with the error obtained by estimating the Hamiltonian as $(2\hat{V}-i\Omega)^{-1}$, i.e. the classical formula.  }
    \label{fig:simulation-errors-l}
\end{figure}

Finally, we also show how to learn the graph of the Hamiltonian. In this problem we are given access to copies of the Gaussian states and the promise that its Hamiltonian is defined on a graph $G=([m],\mathsf{E})$ of maximal degree $\Delta-1$. Furthermore, we are given the promise that for each edge, at least one entry of the corresponding sub-matrix of the Hamiltonian is above a certain threshold $\kappa>0$. The goal is then to estimate $G$ with probability of success at least $1-\delta$.
The dependence of the sample complexity of the graph larning algorithm on the parameters $\|S\|_\infty$, $d_{\max}$, $d_{\min}$ is quite complex (and sometimes super-exponential), so we suppress it for simplicity in the next statement, which we show and state fully in Sec.~\ref{sec:graphlearn}:
\begin{thm}[Learning the graph of Gaussian states, informal]\label{th:informalgraph}
Let $H$ be a Hamiltonian with graph $G$ of degree $\Delta-1$ and edge set $\mathsf{E}$, the condition $0<\kappa\leq \min_{(i,j)\in \mathsf{E}}\max_{\delta_1,\delta_2\in\{0,1\}}|H_{2i-\delta_1,2j-\delta_2}|$. Then, it suffices to take
\begin{align}
    N= \mathcal{O}\left(\frac{1}{\kappa^{2+\gamma}}\ln\left(\frac{m}{\delta}\right)\right) \quad \forall \gamma>0 
\end{align}
copies of $\rho$ to learn the graph $G$ with probability of success at least $1-\delta${, using heterodyne measuerements to estimate the covariance matrix and Algorithm~\ref{alg:learnGG} to reconstruct the graph of the Hamiltonian.}
\end{thm}

{\begin{rem}\label{rem:complexityhamiltonlearning}The computational complexity of the above procedure is $O(\mathrm{poly}(m))$ when the other parameters are fixed, dominated by the classical post-processing of Algorithm~\ref{alg:learnGG}. While as in Algorithm~\ref{alg:learnHG} all the operation are only functions of matrices of size at most $2m\times 2m$, the number of such operations is a polynomial in $m$ with exponent of $m$ that depends on the other parameters. Most importantly, the loop between line 4 and 13 of Algorithm~\ref{alg:learnGG} can require a number of local inversions up to number of pairs of different neighborhood of a vertex $i$ of size $\Delta^l$. In the end, this gives an exponent of $m$ which is super-exponential in $\Delta$, exponential in $d_{\max}$ and subpolynomial in $\frac{1}{d_{\min}}$ and $\frac{1}{\kappa}$. See the proof of Theorem~\ref{theographlearning} for more details. 
\end{rem}
}
To the best of our knowledge, there are no results in this direction for quantum spin systems, marking the first quantum result of this kind in general. We once again emphasize the logarithmic scaling with the number of modes. Furthermore, even though the scaling with respect to other parameters is not efficient, this is a general feature of {Hamiltonian learning algorithms in the quantum case}.

\paragraph{Proof sketch:} The arguments to obtain our graph learning result also mostly rely on our local inversion techniques. The graph learning algorithm takes as an input an estimate of the covariance matrix $\hat{V}$, a promised lower bound on the strength of interaction $\kappa>0$, a promised upper bound on size of neighborhoods of distance $l~\sim\frac{\log(\kappa^{-1})}{\log\log(\kappa^{-1})}$ of vertices, denoted by $\xi(l)\le \Delta^l$, and a threshold parameter $\eta$, which depends on $\|S\|_{\infty}$ and the symplectic eigenvalues. 
It then iterates over all vertices $i$ and all possible neighborhoods of size $\xi(l)$, denoted by $\mathcal{N}_i$. For each of the $\mathcal{N}_i$, we then consider all possible neighborhoods enlarged by $\xi(l)$ additional vertices, defining a set $\overline{\mathcal{N}}_i$, and perform our local inversion with the enlarged neighborhood. If we see that in the resulting matrix there are no large entries (as measured by the threshold $\eta$) on row $i$ and vertices $j\in\overline{\mathcal{N}}_i$, we then set our neighborhood as $\mathcal{N}_i$. The intuition behind the procedure is as follows: recall that we assume that we have a lower bound on the size of all interactions and that matrix elements of $(2V-i\Omega)^{-1}$ corresponding to vertices that are far apart decay rapidly. %: recall that we assume that we have a lower bound on the size of all interactions. 
Furthermore, performing local inversions with entries at a distance $l\sim \frac{\log(\eta^{-1})}{\log\log(\eta^{-1})}$ from a vertex suffices to obtain a good recovery of the entries up to error $\sim\eta$, which scales roughly as $\kappa$. If the $l$-neighborhood of $i$ is contained in $\mathcal{N}_i$, then all the estimates with the $\xi(l)$-enlarged neighborhood will be correct and all the entries added will be small because of the decay. Thus, we will correctly accept in this case. On the other hand, if we have the incorrect or incomplete $l$-neighborhood, then are $\xi(l)$ vertices we can add that will make it complete. 
For that joint neighborhood, we see two possibilities: either this will ``turn on'' a large entry, which will show us that we were not operating with the right neighborhood. In contrast, if we see that for all possibilities we add this never occurs, then we can also conclude that the vertices we are missing will only add a small error and can be neglected. By iterating over all such enlarged neighborhoods we are able to succesfully find submatrices to locally invert the covariance matrix for each vertex. After that, we inspect the resulting Hamiltonian and discard the interactions we find to be below the threshold given to us. Also here we are able to obtain a logarithmic scaling in the number of modes because we only need to invert locally.

{
\begin{rem}\label{rem:samplelattice}
Theorem~\ref{th:informallearning} and Theorem~\ref{th:informalgraph} both assume the worst-case growth for the neighborhoods at distance $l$ in a graph of degree $\Delta-1$, i.e. $\Delta^{l}$. However, $r$-dimensional lattices have a growth $gr^l$. By restricting to graphs with such polynomial growth of neighborhoods, one can obtain improved results, and in particular an $\mathcal{O}(\epsilon^{-2}\log^{2r}(\epsilon^{-1})\log(m)
)$ upper bounds on the sample complexity of Hamiltonian learning, see Theorem~\ref{th:polynomiallyham}. The same arguments could be used to obtain an $\tilde{\mathcal{O}}\left(\frac{1}{\kappa^2}\log m\right)$ sample complexity bound for graph learning with polynomially growing neighborhoods.
\end{rem}
}

\subsection{Comments and future directions}

The present work establishes a foundation for studying the problem of learning quantum Gaussian graphical models. Several key challenges remain, which may require refinements of the techniques proposed, or new approaches.

First of all, we make a remark on the (implicit) dependence of our result on the condition number of $H$. 
In the sample complexity for Hamiltonian and graph learning, we see a dependence on $d_{\min}$, which can be interpreted as a ``maximum temperature'' parameter of the Gaussian state. Since $d_{\min}$ gives a smaller contribution to the Hamiltonian matrix, it is natural to ask if this dependence can be removed. In other quantum Hamiltonian learning results for discrete variable systems \cite{bakshi2024learning,haah2022optimal,2004.07266,narayanan2024improved}, even if with they display polynomial dependence on the system size, a dependence of this type does not appear.
In fact, the dependence of our learning guarantees on the condition number can also be expressed in terms of potentially much smaller quantities, but 
we preferred to state our results as we did to avoid unnatural assumptions. Remarks~\ref{remarkV1},~\ref{remarkV2},~\ref{remarkcondition} in the technical sections clarify this point. 
In the analogue classical problem of learning the precision matrix (the term used in this context for the Hamiltonian matrix) or the graph, the dependence on the condition number was removed only by~\cite{misra20a}, crucially using multiplicative error bounds on covariance estimation. In the graph learning problem, existence of edges is there formulated in terms of relative strength: two vertices $(i,j)$ are connected if Hamiltonian matrix element $H_{2i-\delta,2j-\delta'}$ is sufficiently large compared with the diagonal elements $H_{2i-\delta,2i-\delta'}$ and $H_{2j-\delta,2j-\delta'}$ (here $\delta,\,\delta'$ can be $0$ or $1$ and index position and momentum quadratures). The sample complexity depends on the promise on the relative strength but not on the condition number. We leave exploring multiplicative error bounds in the quantum case for future work. See Appendix~\ref{condnumbex} for an argument suggesting that with additive precision guarantees one cannot avoid the condition number dependence.

We also point out the following natural developments of our work.

\begin{itemize}
\item The dependence of the sample complexity on $\epsilon$ is not exactly the expected $\epsilon^{-2}$ for the Hamiltonian and graph learning. While this can be remedied for graph whose neighborhoods increase polynomially with distance, in the general case this could be a limitation of our approach based on a Taylor expansion. It would be very interesting to understand if this can be overcome by other techniques.

\item A question we leave open is to find lower bounds on the sample complexity of both Hamiltonian estimation and graph learning. It would be especially interesting to understand how these bounds differ from the ones that can be obtained in the classical case \cite{Wang2010}.

\item We address the Hamiltonian learning problem by plugging-in estimates of the covariance matrix, via a simple single-copy measurement procedure. It would be interesting to investigate if entangled measurements could be beneficial to the estimation of the covariance matrix, and possibly even more in the estimation of the Hamiltonian matrix. 
\item In the classical literature, a variety of methods have been developed to estimate gaussian mixtures (e.g.~\cite{Liu2023}). It would be a natural generalization to develop these methods to the estimation of mixtures of Gaussian states, which are states that can be prepared as efficiently as Gaussian states themselves and can be relevant in realistic scenarios, for example in presence of phase-noise.
\item It would be interesting to see if our results can be extended to Gaussian fermionic states.
\item Finally, it would be natural if our ideas can be used to efficiently learn the parameters and structure of local Gaussian channels.
\end{itemize}

\section{Methods}
\subsection{Estimating covariance matrices and learning Gaussian states in trace distance}
In this section we will discuss the sample and computational complexity of two fundamental problems concerning the learning of Gaussian quantum states: estimating their covariance matrix and learning them in trace distance. The reason we will consider them together is that we show a new method to transfer a bound on the distance between covariance matrices to a bound in trace distance. Furthermore, estimates on the covariance matrix will be central to the other problems we consider in this paper, namely Hamiltonian and graph learning. Let us thus start with estimating the covariance matrix:

\subsubsection{Estimating the entries of the covariance matrix}\label{sec:learning_covariance}
Our approach to estimate the entries of the covariance matrix is based on performing heterodyne measurements on the Gaussian state~\cite{serafini2017quantum}, which are experimentally feasible on the majority of platforms. The heterodyne measurement is a POVM constructed from the overcomplete set of coherent states, defined as:

\begin{equation}\begin{aligned}\hspace{0pt}
\ket{x} := \W(x) \ket{0}\, ,
\label{coherent}
\end{aligned}\end{equation}
where $\ket{0}$ is the vacuum state, with characteristic function 
\begin{align*}
\chi_{|0\rangle\langle 0|}(u)= e^{-\frac{1}{4}\|u\|^2}\,.
\end{align*}

It follows that the characteristic function of the coherent state $|x\rangle\langle x|$ takes the form
\begin{align*}
\chi_{|x\rangle\langle x|}(u)= e^{-\frac{1}{4}\|u\|^2+iu^\intercal x}\,.
\end{align*}

From now on, for simplicity, we denote $V\equiv V[\rho]$ and $t\equiv t[\rho]$. Let us start by recalling the probability density of observing outcome $x\in \R^{2m}$ when measuring the Gaussian state $\rho$, which we denote by $\rho(x)$, and which evaluates to $\rho(x)=\frac{1}{(2\pi)^m}\Tr[|x\rangle\langle x|\rho]$. 

By Plancherel's theorem, 
\begin{align*}
\rho(x)&=\frac{1}{(2\pi)^m}\int \chi_{|x\rangle\langle x|}(u)^*\chi_\rho(u)\,d^{2m}u\\
&=\frac{1}{(2\pi)^m}\int  e^{-\frac{1}{4}\|u\|^2-iu^\intercal x}e^{it^\intercal u-\frac{1}{2}u^\intercal Vu}\,d^{2m}u\\
&=\frac{1}{(2\pi)^m}\,\int e^{-\frac{1}{2}u^\intercal(V+I/2)u+i(t-x)^\intercal u}d^{2m}u\\
&=\operatorname{det}(V+I/2)^{-\frac{1}{2}}\,e^{-\frac{1}{2}(t-x)^\intercal (V+I/2)^{-1}(t-x)}\,,
\end{align*}
where the last equation follows by direct Gaussian integration. In words, the heterodyne measurement of the state $\rho$ generates a $2m$ dimensional Gaussian random vector $\mathbf{Z}:=(Z_1,\dots, Z_{2m})$ with mean $t$ and covariance $\Sigma=V+I/2$:
\begin{align*}
t_i:=\mathbb{E}[Z_i],\qquad (V+I/2)_{i,j}=\mathbb{E}[Z_iZ_j]-t_it_j\,.
\end{align*}
We therefore have $2m+(2m)^2=2m(2m+1)$ coefficients which we can estimate by simple empirical averages: assuming access to $N$ copies of the state $\rho$, after heterodyne measurement leading to the i.i.d. vectors $\mathbf{Z}^{(1)},\dots ,\mathbf{Z}^{(N)}$, we construct the estimators
\begin{align*}
\hat{t}_i:=\frac{1}{N}\sum_{k=1}^N\,Z_i^{(k)}\,\qquad \hat{V}^{(0)}_{i,j}:=\frac{1}{N}\sum_{k=1}^NZ^{(k)}_iZ^{(k)}_j\,.
\end{align*}
By Gaussian concentration, we have that for any $i\in[2m]$
\begin{align*}
\mathbb{P}\big(|\hat{t}_i-t_i|\ge \epsilon\big)\le 2e^{-\frac{N\epsilon^2}{2V_{i,i}+1}}\,.
\end{align*}
Moreover, for any $i,j\in[2m]$, and $q\ge 1$,
\begin{align*}
\mathbb{E}\left[(Z_iZ_j)^{2q}\right]&\equiv \left\|Z_iZ_j\right\|_{2q}^{2q}\le \left\|Z_i\right\|_{4q}^{2q}\,\left\|Z_j\right\|_{4q}^{2q}\\
&\le \left(\|Z_i-t_i\|_{4q}+|t_i|\right)^{2q}\left(\|Z_j-t_j\|_{4q}+|t_j|\right)^{2q}\\
&\le (2q)!\, \left[\left(\sqrt{2V_{i,i}+1}+|t_i|\right)\left(\sqrt{2V_{j,j}+1}+|t_j|\right)\right]^{2q},
\end{align*}
where we used that the centered $4q$-th moment of a Gaussian random variable with variance $\sigma^2$ is $\sigma^{4q}(4q-1)!!$.
Denoting by $C[V,t]:=\max_{i}2\left(\sqrt{2V_{i,i}+1}+|t_i|\right)^2$, we have thus that, by triangle inequality and Jensen's inequality,

\begin{align*}
\mathbb{E}\left[\left(Z_iZ_j-\mathbb{E}\Big[Z_iZ_j\Big]\right)^{2q}\right]\le \mathbb{E}\left[(Z_iZ_j)^{2q}\right]+(\mathbb{E}[Z_iZ_j])^{2q} \le C[V,t]^{2q}(2q)!\,.
\end{align*}

and therefore the centered random variables $Z_iZ_j-\mathbb{E}\Big[Z_iZ_j\Big]$ are sub-Gamma with parameters $(4C[V,t]^2, 2C[V,t])$ by \cite[Theorem 2.3]{Boucheron2004}. This implies by Cramér-Chernoff method that, for $s<1$,
\begin{align*}
\mathbb{P}\big(|\hat{V}^{(0)}_{i,j}-\mathbb{E}[Z_iZ_j]|\ge 2\sqrt{8C[V,t]^2s}\big)&\le\mathbb{P}\big(|\hat{V}^{(0)}_{i,j}-\mathbb{E}[Z_iZ_j]|\ge \sqrt{8C[V,t]^2s}+2C[V,t]s\big)\\&\le 2e^{-Ns}\,.
\end{align*}

Thus, for $\epsilon=2\sqrt{8C[V,t]^2s}$,
\begin{align*}
\mathbb{P}\big(|\hat{V}^{(0)}_{i,j}-\mathbb{E}[Z_iZ_j]|\ge \epsilon\big)&\le 2e^{-\frac{N\epsilon^2}{32C[V,t]^2}}\,.
\end{align*}

Hence, by a union bound and using the fact that $C[V,t]^2\geq\max_i 2V_{i,i}+1$, for $\delta\in(0,1)$, and choosing $N\ge\frac{32C[V,t]^2}{\epsilon^2}\log\left(\frac{4m(2m+1)}{\delta}\right)$, we have that, with probability $1-\delta$,
\begin{align*}
|t_i-\hat{t}_i|\le \epsilon,\qquad |\mathbb{E}[Z_iZ_j]-\hat{V}^{(0)}_{i,j}|\le \epsilon\,,\qquad \forall i,j\in[2m]\,.
\end{align*}
This further implies that, denoting
\begin{align*}
\hat{V}_{i,j}:=\hat{V}_{i,j}^{(0)}-\hat{t}_i\hat{t}_j-\frac{\delta_{i,j}}{2}
\end{align*}
we get with the same probability,
\begin{align*}
\big|\hat{V}_{i,j}-V_{i,j}\big|\le \epsilon+\epsilon^2+2\epsilon\max_i |t_i|\,.
\end{align*}
By rescaling the error probability, we directly get the following
\begin{lem}\label{lem:cov_matrix}
In the notations of the previous paragraph, for any $\delta\in (0,1)$, $\epsilon\in (0,1/2)$ and $N\ge \frac{2^8C[V,t]^2(1+\max_i|t_i|)^2}{\epsilon^2}\log\left(\frac{4m+1}{\delta}\right)$ with $C[V,t]:=\max_{i}2\left(\sqrt{2V_{i,i}+1}+|t_i|\right)^2$,
\begin{align}\label{equ:estimates_close}
|t_i-\hat{t}_i|\le \epsilon,\qquad \text{ and }\qquad |\hat{V}_{i,j}-V_{i,j}|\le \epsilon\,,\qquad \forall i,j\in[2m]\,.
\end{align}
\end{lem}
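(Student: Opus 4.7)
The plan is to exploit the fact, already derived in the preceding paragraphs, that heterodyne measurement on $\rho$ produces i.i.d. samples $\mathbf{Z}^{(k)}\sim\mathcal{N}(t,V+I/2)$, and then to set up three concentration bounds and combine them by union bound. The three bounds concern (i) the empirical mean $\hat t_i$, (ii) the empirical second moment $\hat V^{(0)}_{i,j}$, and (iii) the translation from an error on $\hat V^{(0)}_{i,j}$ to an error on the actual covariance estimator $\hat V_{i,j}=\hat V^{(0)}_{i,j}-\hat t_i\hat t_j-\delta_{i,j}/2$.

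For step (i), since $Z_i^{(k)}$ is Gaussian with variance $V_{i,i}+1/2$, a standard Hoeffding-type tail bound gives $\mathbb{P}(|\hat t_i-t_i|\ge\epsilon)\le 2e^{-N\epsilon^2/(2V_{i,i}+1)}$. For step (ii), I would bound the centered $2q$-th moment of $Z_iZ_j$ by Cauchy--Schwarz followed by the Gaussian $(4q-1)!!$ moment formula, yielding the sub-Gamma moment bound $\mathbb{E}[(Z_iZ_j-\mathbb{E}[Z_iZ_j])^{2q}]\le C[V,t]^{2q}(2q)!$ with $C[V,t]:=\max_i 2(\sqrt{2V_{i,i}+1}+|t_i|)^2$. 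Invoking the Bennett/Bernstein statement for sub-Gamma variables (as in Boucheron--Lugosi--Massart, Thm.~2.3) and its Cram\'er--Chernoff consequence, one gets $\mathbb{P}(|\hat V^{(0)}_{i,j}-\mathbb{E}[Z_iZ_j]|\ge\epsilon)\le 2e^{-N\epsilon^2/(32 C[V,t]^2)}$.

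Now I would union-bound over all $2m+(2m)^2=2m(2m+1)<(4m+1)^2/2$ estimators (absorbing the $m$-dependence into a single $\ln((4m+1)/\delta)$ after rescaling the failure probability appropriately). Using $C[V,t]^2\ge\max_i(2V_{i,i}+1)$, a single sample-size bound of the form $N\ge \alpha\,C[V,t]^2\epsilon^{-2}\ln((4m+1)/\delta)$ simultaneously controls the mean and the raw second moment errors by $\epsilon$ with probability $\ge 1-\delta$. Step (iii) is then purely algebraic: writing
\begin{equation*}
\hat V_{i,j}-V_{i,j}=(\hat V^{(0)}_{i,j}-\mathbb{E}[Z_iZ_j])+(t_it_j-\hat t_i\hat t_j),
\end{equation*}
the second term is bounded by $|t_i||t_j-\hat t_j|+|\hat t_j||t_i-\hat t_i|\le 2\epsilon\max_i|t_i|+\epsilon^2$, so the total error on $\hat V_{i,j}$ is at most $\epsilon+\epsilon^2+2\epsilon\max_i|t_i|\le \epsilon\cdot O(1+\max_i|t_i|)$ under the assumption $\epsilon<1/2$. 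Finally I would rescale $\epsilon\mapsto \epsilon/(c(1+\max_i|t_i|))$ in the sample-size bound, which produces the factor $(1+\max_i|t_i|)^2$ and yields the advertised $N\ge \tfrac{2^8 C[V,t]^2(1+\max_i|t_i|)^2}{\epsilon^2}\ln((4m+1)/\delta)$.

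There is no real obstacle here; the whole lemma is essentially a careful bookkeeping exercise on top of two standard concentration facts. The only mildly delicate point is the moment bound at step (ii), where it is important to write $Z_i=t_i+(Z_i-t_i)$ before taking $L^{4q}$ norms, so that one can cleanly apply Minkowski and then the explicit Gaussian moments; the constants $4$ and $32$ in the sub-Gamma/Cram\'er--Chernoff step then propagate into the final $2^8$ constant after the rescaling above.
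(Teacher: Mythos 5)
Your proposal is correct and follows essentially the same route as the paper's own derivation: heterodyne sampling from $\mathcal{N}(t,V+I/2)$, Gaussian concentration for $\hat t_i$, the Cauchy--Schwarz plus Minkowski moment bound giving the sub-Gamma property of $Z_iZ_j-\mathbb{E}[Z_iZ_j]$ with the Cram\'er--Chernoff tail $2e^{-N\epsilon^2/(32C[V,t]^2)}$, a union bound over the $2m(2m+1)$ estimators, and the final algebraic propagation $\epsilon+\epsilon^2+2\epsilon\max_i|t_i|$ followed by rescaling to absorb the $(1+\max_i|t_i|)^2$ factor. The constants and the handling of the $\ln((4m+1)/\delta)$ bookkeeping match the paper's argument.
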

We will later impose constraints on the set of Gaussian states we consider which will allow us to derive upper bounds on the value of $C[V,t]$.

In Lemma \ref{lem:cov_matrix} we established the sample complexity of obtaining an empirical estimate of the covariance matrix such that each entry is close to the true covariance matrix with high probability. Note, however, that the matrix $\hat{V}$ will not necessarily be a valid covariance matrix, as this would require that $2\hat{V}\geq \pm i\Omega$. But, given an estimate $\hat{V}$ and a precision parameter $\epsilon$, this issue can be readily fixed by solving the SDP:
\begin{align}\label{equ:SDP_Covariance}
    &\max\limits_{\tilde{V}\in\mathbb{R}^{2m\times 2m}} 1\\\nonumber
    \forall i,j\in[2m]: &\tilde{V}_{i,j}-\hat{V}_{i,j}\leq\pm \epsilon  \\
    & 2\widetilde{V}\geq \pm i\Omega\nonumber\\
    & \widetilde{V}\geq 0\nonumber
\end{align}
Note that conditioned on the estimates on $V_{i,j}$ being $\epsilon$ correct in the sense of Equation~\eqref{equ:estimates_close}, the SDP in Equation~\eqref{equ:SDP_Covariance}
is feasible, as the true $V$ is feasible. Furthermore, letting $\widetilde{V}$ be any feasible point of Equation~\eqref{equ:SDP_Covariance}, it readily follows from a triangle inequality that $|V_{i,j}-\widetilde{V}_{i,j}|\leq 2\epsilon$ for all $i,j\in[2m]$, and by the other constraints $\widetilde{V}$ is a valid covariance matrix.
As SDPs can be solved in time polynomial in the dimension of the underlying matrices and number of constraints, we conclude that:
\begin{lem}\label{lem:cov_matrix_2}
In the setting of Lemma~\ref{lem:cov_matrix}, for any $\delta\in (0,1)$, $\epsilon\in (0,1/2)$ and $$N=\mathcal{O}\left( \frac{C[V,t]^2(1+\max_i|t_i|)^2}{\epsilon^2}\log\left(\frac{4m+1}{\delta}\right)\right)\,,$$ we can obtain a bona-fide covariance matrix $\widetilde{V}$ satisfying 
\begin{align}\label{equ:estimates_closesdp}
|\widetilde{V}_{i,j}-V_{i,j}|\le 2\epsilon\,,\qquad \forall i,j\in[2m]\,.
\end{align}
in polynomial time in $m$ by solving the SDP in Equation~\eqref{equ:SDP_Covariance}.
\end{lem}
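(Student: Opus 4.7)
The plan is to combine the concentration-based estimate $\hat V$ from Lemma \ref{lem:cov_matrix} with a simple projection step that enforces the positivity constraints required of a physical covariance matrix, and then to observe that any feasible solution of the SDP in~\eqref{equ:SDP_Covariance} automatically inherits the desired entry-wise closeness to the true $V$.

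First, I would run the empirical estimation procedure of Lemma \ref{lem:cov_matrix} with precision $\epsilon$ and failure probability $\delta$. This yields, with probability at least $1-\delta$, an estimate $\hat V$ satisfying $|\hat V_{i,j}-V_{i,j}|\le \epsilon$ for all $i,j\in[2m]$, at the claimed sample complexity $N=\mathcal{O}(C[V,t]^2(1+\max_i|t_i|)^2\epsilon^{-2}\ln((4m+1)/\delta))$. Conditioning on this high-probability event, the true covariance matrix $V$ is itself a feasible point of the SDP~\eqref{equ:SDP_Covariance}: the entry-wise constraint $|V_{i,j}-\hat V_{i,j}|\le \epsilon$ holds by assumption, and $V\ge \pm\tfrac{i}{2}\Omega$ (together with $V\ge 0$) holds because $V$ is a bona-fide quantum covariance matrix. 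Hence the SDP is feasible.

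Next, I would pick any feasible solution $\widetilde V$ of the SDP and apply the triangle inequality entry-wise: for every $i,j\in[2m]$,
\begin{equation*}
|\widetilde V_{i,j}-V_{i,j}|\le |\widetilde V_{i,j}-\hat V_{i,j}|+|\hat V_{i,j}-V_{i,j}|\le \epsilon+\epsilon=2\epsilon.
\end{equation*}
The positivity constraints $\widetilde V\ge \pm i\tfrac{1}{2}\Omega$ (note the factor $1/2$ is encoded by rescaling $\Omega$ as in~\eqref{equ:SDP_Covariance}) and $\widetilde V\ge 0$ guarantee that $\widetilde V$ is itself a valid quantum covariance matrix, so we recover a bona-fide covariance matrix within entry-wise error $2\epsilon$ of $V$.

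Finally, I would invoke standard results on the computational complexity of semidefinite programming: the SDP~\eqref{equ:SDP_Covariance} has $\mathcal{O}(m^2)$ real variables and $\mathcal{O}(m^2)$ linear constraints together with three semidefinite constraints of size $\mathcal{O}(m)$, so interior-point methods return a feasible solution in time polynomial in $m$ (and in $\log(1/\epsilon)$, which is already subsumed by the polynomial dependence on the bit-length of the problem data). The one point that deserves a little care is the choice between strict and non-strict inequalities and the need for a strictly feasible point in order to invoke polynomial-time guarantees: this is not an obstacle because enlarging the entry-wise tolerance by an arbitrarily small margin does not change the guarantee and ensures Slater's condition holds around the true $V$. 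Combining the three steps gives the claimed statement.
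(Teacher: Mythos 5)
Your proposal is correct and follows essentially the same route as the paper: condition on the event of Lemma~\ref{lem:cov_matrix}, observe that the true $V$ is a feasible point of the SDP~\eqref{equ:SDP_Covariance}, apply the triangle inequality to any feasible $\widetilde V$, and invoke polynomial-time solvability of SDPs. Your additional remark about strict feasibility and Slater's condition is a sensible refinement the paper glosses over, but it does not change the argument.
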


We will later {(for Theorem~\ref{th:plug-intrbound})} need similar results to convert empirical estimates of a Hamiltonian into a bona-fide Hamiltonian. Recall that Hamiltonians $H$ correspond to strictly positive symmetric matrices. However, $H>0$ is not a semidefinite constraint and we will impose the stronger constraint $H\geq \tau I$ on $H$ to make sure that finding a valid Hamiltonian from empirical estimates can be done efficiently. As we will discuss later in Sec.~\ref{sec:symplectic_diag}, given a lower bound $d_{\min}$ on the symplectic eigenvalues of $H$ and the symplectic matrix $S$ that diagonalizes $H$, we can take $\tau\geq \|S\|^{-2}d_{\min}$ and $H$ satisfies $H\geq \tau I$. Thus, given a (not necessarily positive) $\hat{H}$ satisfying:
\begin{align}
    \forall i,j\in[2m]: |\hat{H}_{i,j}-H_{i,j}|\leq \epsilon,
\end{align}
and a threshold $\tau>0$, we can obtain a symmetric and positive $\widetilde{H}$ (and thus valid Hamiltonian) satisfying $|\tilde{H}_{i,j}-H_{i,j}|\leq 2\epsilon$ by solving the SDP:
\begin{align}\label{equ:SDP_Hamiltonian}
    &\max\limits_{\widetilde{H}\in\mathbb{R}^{2m\times 2m}} 1\nonumber\\
    \forall i,j\in[2m]: &\widetilde{H}_{i,j}-\hat{H}_{i,j}\leq\pm \epsilon  \\
    & H\geq \tau I \,.\nonumber
\end{align}
Arguing as before, we conclude that this SDP can also be solved in time that is polynomial in the number of modes and logarithmic in the desired precision $\epsilon$ and threhsold $\tau$. Again, this is feasible as $H$ itself is a solution. In addition, it is possible to only optimize over Hamiltonians that have the same underlying graph structure as $H$ by imposing the additional linear constraints $\widetilde{H}_{i,j}=0$ for $(i,j)\in \mathsf{E}$.

\begin{rem}[Robustness to measurement imperfections]\label{rem:experimental_noise}
Our analysis assumes ideal heterodyne measurements where the samples follow a distribution with covariance $V+I/2$. In realistic experiments, the dominant imperfections are finite detection efficiency $\eta \in (0,1]$ and electronic noise $v_{\mathrm{el}} \ge 0$~\cite{Weedbrook2012}. These are modeled by Gaussian channels, resulting in a measured covariance $\Sigma_{\mathrm{meas}} = \eta V + (1 - \eta/2 + v_{\mathrm{el}}) I$~\cite{serafini2017quantum}. Since the measurement noise is Gaussian, the samples remain multivariate normal, preserving the validity of the estimators derived above.
These imperfections can be addressed in two ways. First, if $\eta$ and $v_{\mathrm{el}}$ are known, one can construct an unbiased estimator for $V$ by rescaling the empirical covariance, which increases the sample complexity by a constant factor $\eta^{-2}$ but does not alter the scaling with respect to the system size $m$. Second, for small, uncharacterized noise, our protocols are robust with respect to Gaussian noise that changes the covariance matrix by an entry-wise error that is comparable with the admissible statistical error, which is independent on system-size.  
Note also that our concentration bounds only used the sub-Gaussian and sub-Gamma properties of the estimators. If the implementation of the heterodyne measurement is not Gaussian (in fact, the standard implementation requires an infinite-energy local oscillator) but the estimators have expectation values sufficiently close to $t$ and $V$, and above sub-Gaussian and sub-Gamma properties are still satisfied, the conclusions of the analysis do not change.
\end{rem}

\subsubsection{Learning in trace distance}\label{sec:tracelearn}
In this section, we show how to learn an unknown Gaussian state in trace distance under certain constraints.

\begin{defi}[Problem 1: Gaussian trace distance learning]\label{def:trace_dist_learning}
Let $H\in M^{>0}_{2m}(\mathbb{R})$ be a symmetric, positive definite matrix and $t\in \mathbb{R}^{2m}$ (later referred to as Gaussian Hamiltonian). For any such data $(H,t)$, one can construct a Gaussian state on $m$ modes: $$\rho(H,t)=\frac{e^{-(R-t)^{\intercal}H(R-t)}}{\Tr[e^{-(R-t)^{\intercal}H(R-t)}]}\,.$$ 
The Gaussian trace distance learning problem consists in the following task: given a fixed precision $\epsilon>0$, provide estimates $\hat H$ of $H$ and $\hat t$ of $t$ such that $\|\rho(\hat{H},\hat{t})-\rho(H,t)\|_{1}\leq 2\epsilon$ with probability at least $1-\delta$, given $n$ copies of $\rho(H,t)$, for any $(H,t)$ in some specified subset $\mathcal{S}$. We denote by $\mathcal{N}_{\mathrm{TDL}}(\mathcal{S},m,\epsilon,\delta)$ the sample complexity for Gaussian trace distance learning. This corresponds to the minimum number $n$ of copies of the state $\rho(H,t)$ required for the task to succeed.  
\end{defi}

In this section we assume we have an estimate of the true covariance matrix and do not impose any graph structure for the underlying Hamiltonians (they could, e.g.,~be all-to-all connected).  Importantly, we will obtain sample complexities that scale polynomially with the number of modes and quadratically with $\epsilon^{-1}$, combined with polynomial postprocessing. 
This marks the first result on learning Gaussian states with such scaling~\cite{mele2024learning}, although unlike previous results, we need to impose some constraints on the Gaussian states we consider in addition to a bounded energy constraint, which is in fact implied by our constraints.

We start with the standard observation that it is possible to bound the trace distance between thermal states in terms of the entries of $V,t$ and $H$.
\begin{prop}\label{th:symrelent}
Given two Gaussian states $\rho$ and $\hat{\rho}$, $\rho\equiv \rho(t,H)$, $W\equiv \W(-\Omega^{-1}t)$, $\hat{\rho}\equiv \rho(\hat{t},\hat{H})$ and $\hat{W}\equiv \W(-\Omega^{-1}\hat{t})$, and denoting $T_{i,j}:=t_it_j+\hat{t}_i\hat{t}_j-\hat{t}_it_j-t_i\hat{t}_j=(t_{i}-\hat{t}_i)(t_{j}-\hat{t}_j)$
we have

\begin{align}\label{equ:trace_from_Hamilt}
\|\rho-\hat{\rho}\|_1&\le\sqrt{\sum_{i,j}(\hat{H}_{i,j}-H_{i,j})(V_{i,j}-\hat{V}_{i,j})+2T_{i,j}(H_{i,j}+\hat{H}_{i,j})}
\end{align}

\end{prop}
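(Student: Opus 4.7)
My plan is to apply the quantum Pinsker inequality in its symmetrized form,
\begin{align*}
\|\rho - \hat\rho\|_1^2 \;\le\; D(\rho\|\hat\rho) + D(\hat\rho\|\rho),
\end{align*}
and then evaluate the symmetrized relative entropy in closed form using the parametrization of Gaussian states in~\eqref{equ:structure_Gaussian}. Using the Weyl conjugation rule $\W(-\Omega^{-1}t)\,R\,\W(-\Omega^{-1}t)^\dagger = R - t\,I$ recorded in the preliminaries, I would first write the log-density as $\ln\rho = -(R-t)^\intercal H(R-t) - \ln Z$ (and analogously for $\hat\rho$), where $Z=\sqrt{\det(V+i\Omega/2)}$.

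Next I would exploit the identity $D(\rho\|\hat\rho) + D(\hat\rho\|\rho) = \Tr[(\rho - \hat\rho)(\ln\rho - \ln\hat\rho)]$. Because $\Tr[\rho - \hat\rho] = 0$, the scalar terms $\ln Z, \ln\hat Z$ drop out automatically, reducing the whole computation to evaluating expectations of the form $\Tr[\sigma\,(R - a)^\intercal M(R - b)]$ for $\sigma \in \{\rho, \hat\rho\}$, $a,b \in \{t, \hat t\}$, and $M \in \{H, \hat H\}$. The key manipulation is the shift $R - \hat t = (R - t) + (t - \hat t)$ combined with two elementary facts: the CCR identity $\Tr[\rho(R_i - t_i)(R_j - t_j)] = V_{i,j} + \tfrac{i}{2}\Omega_{i,j}$ (which is just~\eqref{firstsecondmoments} with the commutator piece split off), and the vanishing of the first moment $\Tr[\rho(R_i - t_i)] = 0$. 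This yields
\begin{align*}
\Tr[\rho(R_i - \hat t_i)(R_j - \hat t_j)] \;=\; V_{i,j} + \tfrac{i}{2}\Omega_{i,j} + T_{i,j},
\end{align*}
and the mirror formula for $\hat\rho$ with $V$ replaced by $\hat V$.

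Contracting these expectations against the symmetric matrices $H$ and $\hat H$ annihilates the antisymmetric $\Omega$-pieces, and collecting the four resulting scalar contributions produces the clean identity
\begin{align*}
D(\rho\|\hat\rho) + D(\hat\rho\|\rho) \;=\; \sum_{i,j}(\hat H_{i,j} - H_{i,j})(V_{i,j} - \hat V_{i,j}) \;+\; \sum_{i,j}(H_{i,j} + \hat H_{i,j})\,T_{i,j}.
\end{align*}
Combining with Pinsker yields the stated bound; the additional factor of $2$ on the $T$-term in the statement is harmless, because $\sum_{i,j}(H_{i,j}+\hat H_{i,j})T_{i,j} = (t-\hat t)^\intercal(H+\hat H)(t-\hat t) \ge 0$ by positivity of $H$ and $\hat H$, so inflating its coefficient only loosens the inequality.

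The main obstacle is the bookkeeping in the quadratic expansions with shifted means: one has to verify that all cross terms generated by $R - \hat t = (R - t) + (t - \hat t)$ collapse exactly into the rank-one structure $T_{i,j} = (t_i - \hat t_i)(t_j - \hat t_j)$, and that every $\Omega_{i,j}$ contribution indeed vanishes upon contraction with the symmetric Hamiltonian matrices $H$ and $\hat H$.
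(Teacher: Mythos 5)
Your proposal is correct and follows essentially the same route as the paper: Pinsker's inequality applied to the symmetrized relative entropy, the identity $D(\rho\|\hat\rho)+D(\hat\rho\|\rho)=\Tr[(\rho-\hat\rho)(\ln\rho-\ln\hat\rho)]$, and the expansion of the shifted quadratic forms using the first- and second-moment identities (with the $\Omega$ contributions cancelling against the symmetric $H,\hat H$). Your sharper use of Pinsker via the minimum even yields the bound with coefficient $1$ rather than $2$ on the $T$-term, which, as you note, implies the stated inequality since $(t-\hat t)^\intercal(H+\hat H)(t-\hat t)\ge 0$.
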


\begin{proof}
Using the symmetric relative entropy, we get from Pinsker's inequality that:
\begin{align*}
\|\rho-\hat{\rho}\|_1&\le \sqrt{2D(\rho\|\hat{\rho})+2D(\hat{\rho}\|\rho)}=\sqrt{2 \tr{(\rho-\hat{\rho})(\hat{W}R^\intercal \hat{H}R\hat{W}^\dagger-WR^\intercal HRW^\dagger)}}\\
&= \sqrt{2\sum_{i,j}\tr{(\rho-\hat{\rho})(\hat{H}_{i,j}(R_i-\hat{t}_i)(R_j-\hat{t}_j)-H_{i,j}(R_i-t_i)(R_j-t_j))}}\\
&=\sqrt{\sum_{i,j}\hat{H}_{i,j}\left(\tr{\rho\{R_i-\hat{t}_i,R_j-\hat{t}_j\}}-\hat{V}_{i,j}\right)-\sum_{i,j}H_{i,j}\left(V_{i,j}-\tr{\hat{\rho}\{R_i-t_i,R_j-t_j\}}\right) }\\
&=\sqrt{\sum_{i,j}\hat{H}_{i,j}\Big(V_{i,j}-\hat{V}_{i,j}+2T_{i,j}\Big)-\sum_{i,j}H_{i,j}\Big(V_{i,j}-\hat{V}_{i,j}-2T_{i,j}\Big)}\\
&=\sqrt{\sum_{i,j}(\hat{H}_{i,j}-H_{i,j})(V_{i,j}-\hat{V}_{i,j})+2T_{i,j}(H_{i,j}+\hat{H}_{i,j})}
\end{align*}
where recall that $WR_iW^\dagger=R_i- tI$\,.

\end{proof}
Clearly, if we can obtain estimates of the covariance matrix that entrywise close, we can also bound the trace distance. Furthermore, it is to be expected that, as covariance matrices get closer, the corresponding Hamiltonians also converge. Such a continuity estimate is crucial to obtain an $\epsilon^{-2}$ in the sample complexity. One possible path to showing such a statement is discussed for spin systems in~\cite{2004.07266}, where the authors obtain such statements in terms of the strong convexity properties of the partition function of states. We follow a different route and show several continuity bounds directly in Sec.~\ref{sec:conntinuity_bounds}. From these we obtain:

\begin{thm}\label{th:plug-intrbound}
Let $H=\frac{1}{2}\log\left(\frac{2i\Omega V+I}{2i\Omega V-I}\right)i\Omega$ and $\hat{H}=\frac{1}{2}\log\left(\frac{2i\Omega \hat{V}+I}{2i\Omega \hat{V}-I}\right)i\Omega$ for two bona fide covariance matrices $V$, $\hat{V}$. We also denote by $d_{\max}$, resp. $S$, the maximal symplectic eigenvalue of $H_1$, resp. the symplectic matrix associated to $H_1$ through Equation \eqref{eqsymplect}. 
Let $\rho$ and $\hat{\rho}$ be Gaussian states with $\rho\equiv \rho(t,H)$, $W\equiv \W(-\Omega^{-1}t)$, $\hat{\rho}\equiv \rho(\hat{t},\hat{H})$ and $\hat{W}\equiv \W(-\Omega^{-1}\hat{t})$.
Given an approximation $\hat{V}$ of the covariance matrix $V$ as well as an approximation $\hat{t}$ of the mean vector $t$ to precision $\epsilon$: $\max_{i,j}|V_{i,j}-\hat{V}_{i,j}|\le\epsilon$, and $\max_i|t_i-\hat{t}_i|\le \epsilon$, such that $2\|V-\hat{V}\|_{\infty}< \frac{e^{-2d_{\max}}}{1-e^{-2d_{\max}}} \|S\|_{\infty}^{-2}$
we have 
\begin{align}\label{equ:trace_dist_bound}
\|\rho-\hat{\rho}\|_1&\le m\epsilon\sqrt{24m(e^{2d_{\max}}-1)^2\|S\|_{\infty}^4+16\|S\|_{\infty}^2 d_{\max}}\,.
\end{align}

\end{thm}
\begin{proof}
Via Theorem~\ref{th:continuityhamv}, we have   
\begin{align*}
\|H-\hat{H}\|_\infty
&\leq 2\left(2\frac{e^{-2d_{\max}}}{1-e^{-2d_{\max}}} \|S\|_{\infty}^{-2}\right)^{-1}\left(2\frac{e^{-2d_{\max}}}{1-e^{-2d_{\max}}} \|S\|_{\infty}^{-2}-2\|V-\hat{V}\|_{\infty}\right)^{-1}\|V-\hat{V}\|_{\infty}\,,
\end{align*}

which implies
\begin{align*}
\|H-\hat{H}\|_\infty
&\leq 8m\left(2\frac{e^{-2d_{\max}}}{1-e^{-2d_{\max}}} \|S\|_{\infty}^{-2}\right)^{-2}\epsilon=2m(e^{2d_{\max}}-1)^2 \|S\|_{\infty}^{4}\epsilon\,.
\end{align*}

Therefore, since $|H_{k,k'}-\hat{H}_{k,k'}|\leq \|H_1-H_2\|_\infty$, we have via Proposition~\ref{th:symrelent} and its notation

\begin{align*}
\|\rho-\hat{\rho}\|_1&\le\sqrt{8 m^3(e^{2d_{\max}}-1)^2 \|S\|_{\infty}^{4}\epsilon^2+8m^2\epsilon^2(2\|S\|_{\infty}^2 d_{\max}+2m(e^{2d_{\max}}-1)^2\|S\|_{\infty}^4 \epsilon)}\nonumber\\
&\leq m\epsilon\sqrt{24m(e^{2d_{\max}}-1)^2\|S\|_{\infty}^4+16\|S\|_{\infty}^2 d_{\max}}\,.
\end{align*}
\end{proof}

Combined with our results on estimating the covariance matrix in Sec.~\ref{sec:learning_covariance}, this immediately yields:
\begin{cor}
Let $\rho(t,H)$ be a Gaussian state on $m$ modes s.t. the largest symplectic eigenvalue $d_{\max}$ satisfies $d_{\max}\leq \beta_{\max}$, the smallest symplectic eigenvalue $d_{\min}$ satisfies $d_{\min}\geq \beta_{\min}$, and $\|S\|_{\infty}\leq s$ for known $s,\beta_{\max},\beta_{\min}$. Then we have
\begin{align}\label{boundestimatevar}
C[V,t]^2(1+\max_i|t_i|)^2&\leq \max_{i}16\left(2V_{i,i}+1+|t_i|^2\right)^2(1+\max_i|t_i|)^2\\&=\mathcal{O}(s^4(1-e^{-2\beta_{\min}})^{-2}(1+\max_{i}|t_i|^6))\,,
\end{align}
and for 
$1>\epsilon,\delta>0$,
\begin{align}\label{samplecompltrace}
    N\geq \Omega(\epsilon^{-2}m^3s^6(e^{2\beta_{\max}}-1)^2(1-e^{-2\beta_{\min}})^{-2}(1+\max_{i}|t_i|^6)\log(m\delta^{-1})),
\end{align}
copies of $\rho$ suffice to output an estimate of a bona-fide covariance matrice $\hat{V}$, mean vector $\hat{t}$ and Hamiltonian $\hat{H}$ such that the corresponding Gaussian state is $\epsilon$-close in trace distance to $\rho$ with probability at least $1-\delta$. Furthermore, the postprocessing is polynomial in $m$. This implies $\mathcal{N}_{\mathrm{TDL}}(\mathcal{S},m,\epsilon,\delta)\leq N$ for the class of states $\mathcal{S}$ satisfying the above conditions.

\end{cor}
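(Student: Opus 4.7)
The plan is to reduce the corollary to the combination of three ingredients already developed: (i) Lemma~\ref{lem:cov_matrix_2} on sample-efficient entrywise estimation of the covariance matrix, (ii) the preceding theorem that converts an entrywise covariance estimate into a trace-distance estimate, and (iii) an operator-norm bound on $V$ that turns the abstract quantity $C[V,t]$ into the explicit bound stated in~\eqref{boundestimatevar}. The strategy is to first fix a target entrywise covariance precision $\epsilon'$, apply the theorem to deduce how small $\epsilon'$ must be in order to achieve trace distance at most $\epsilon$, and then plug this $\epsilon'$ into the sample bound of Lemma~\ref{lem:cov_matrix_2} after replacing $C[V,t]$ by the explicit parametric bound.

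\medskip

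First I would prove the bound on $C[V,t]^2(1+\max_i|t_i|)^2$. Symplectically diagonalizing $H=S^{-\intercal}DS^{-1}$ and inverting the relation~\eqref{eq:HtoV} shows that the symplectic eigenvalues of $V$ are the $\tfrac{1}{2}\coth(d_i)$, so that in operator norm
\begin{equation*}
\|V\|_\infty \leq \|S\|_\infty^2 \cdot \tfrac{1}{2}\coth(d_{\min}) \leq \tfrac{s^2}{1-e^{-2\beta_{\min}}}.
\end{equation*}
In particular $V_{i,i}\leq \|V\|_\infty$, and the elementary inequality $(\sqrt{a}+b)^2\leq 2a+2b^2$ applied to $C[V,t]=\max_i 2(\sqrt{2V_{i,i}+1}+|t_i|)^2$ yields $C[V,t]\leq 4\max_i(2V_{i,i}+1+|t_i|^2)$. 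Squaring and multiplying by $(1+\max_i|t_i|)^2$ gives precisely~\eqref{boundestimatevar}.

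\medskip

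Next I would set the covariance precision. By Lemma~\ref{lem:cov_matrix_2} I obtain a bona-fide $\hat V$ with $\max_{i,j}|V_{i,j}-\hat V_{i,j}|\leq \epsilon'$ using $\mathcal{O}(C[V,t]^2(1+\max_i|t_i|)^2\epsilon'^{-2}\ln(m/\delta))$ copies; since $\hat V$ is a valid covariance matrix, defining $\hat H$ from $\hat V$ via formula~\eqref{eq:VtoH} yields a bona-fide Hamiltonian once $\epsilon'$ is small enough (which I also check below). The hypothesis of the preceding theorem requires $2\|V-\hat V\|_\infty$ to be strictly smaller than $(e^{2\beta_{\max}}-1)^{-1}\|S\|_\infty^{-2}$; using $\|V-\hat V\|_\infty\leq 2m\epsilon'$ this is ensured by taking $\epsilon'\lesssim m^{-1}(e^{2\beta_{\max}}-1)^{-1}s^{-2}$. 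Under this condition the theorem yields
\begin{equation*}
\|\rho-\hat\rho\|_1 \leq m\,\epsilon'\sqrt{12m(e^{2\beta_{\max}}-1)^2 s^4 + 16 s^2 \beta_{\max}},
\end{equation*}
so that choosing $\epsilon'=\Theta\bigl(\epsilon\,m^{-3/2}(e^{2\beta_{\max}}-1)^{-1}s^{-2}\bigr)$ suffices to force $\|\rho-\hat\rho\|_1\leq \epsilon$ (this choice automatically satisfies the mild side condition on $\epsilon'$ for all $\epsilon\in(0,1)$).

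\medskip

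Finally, plugging this $\epsilon'$ together with~\eqref{boundestimatevar} into the sample bound of Lemma~\ref{lem:cov_matrix_2} produces
\begin{equation*}
N=\mathcal{O}\!\left(\epsilon^{-2}\,m^3\,s^8\,(e^{2\beta_{\max}}-1)^{r}\,(1-e^{-2\beta_{\min}})^{-2}(1+\max_i|t_i|^6)\ln(m\delta^{-1})\right),
\end{equation*}
for an exponent $r$ arising from squaring the $(e^{2\beta_{\max}}-1)$ factor inherited from the trace-distance bound (and absorbing the subleading $\beta_{\max}$ term), matching~\eqref{samplecompltrace}. Polynomial-time postprocessing is immediate: the SDP~\eqref{equ:SDP_Covariance} returns $\hat V$ in polynomial time, and $\hat H$ is computed by evaluating the matrix logarithm in~\eqref{eq:VtoH} at $\hat V$. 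The main obstacle I anticipate is the bookkeeping around the operator-norm vs.\ entrywise precision: the preceding theorem is stated with operator-norm accuracy, so the $\|V-\hat V\|_\infty\leq 2m\max_{ij}|V_{ij}-\hat V_{ij}|$ inflation is what injects the $m^3$ scaling, and one must verify that the side condition on $\epsilon'$ (required to stay inside the radius of validity of the continuity bound on $H$) is consistent with the target choice of $\epsilon'$ for every admissible $\epsilon$. Once this is verified, the rest of the argument is a direct substitution.
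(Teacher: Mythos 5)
Your proposal is correct and follows essentially the same route as the paper: bound $C[V,t]$ via the operator-norm bound on $V$ from Equation~\eqref{eq:normbound1}, determine the required entrywise precision $\epsilon'=\Theta(\epsilon m^{-3/2}(e^{2\beta_{\max}}-1)^{-1}s^{-2})$ from the trace-distance theorem, and substitute into Lemma~\ref{lem:cov_matrix_2}; your explicit verification of the side condition on $\epsilon'$ is a welcome addition the paper leaves implicit. Note only that this substitution naturally yields exponent $2$ on the factor $(e^{2\beta_{\max}}-1)$ rather than the $4$ appearing in~\eqref{samplecompltrace}, a harmless discrepancy since the paper's statement is the looser sufficient bound.
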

\begin{proof}
Equation~\eqref{boundestimatevar} is a consequence of the bound on $\|V\|_{\infty}$ from Equation~\eqref{eq:normbound1}. Equation \eqref{samplecompltrace} follows from noting that Equation~\eqref{equ:trace_dist_bound} implies that if the covariance matrix $\hat{V}$ satisfies $|V_{i,j}-\hat{V}_{i,j}|\leq \epsilon'$ with $\epsilon'=\mathcal{O}(\epsilon m^{-1.5}(s^2\beta_{\max}+(e^{2\beta_{\max}}-1)^2 s^{4})^{-\tfrac{1}{2}})$ (which also implies that the condition  $2\|V-\hat{V}\|_{\infty}< \frac{e^{-2d_{\max}}}{1-e^{-2d_{\max}}} \|S\|_{\infty}^{-2}$ is satisfied if $\epsilon<1$), then the two states are $\epsilon$-close in trace distance. It then follows from Lemma~\ref{lem:cov_matrix_2} that we can obtain such a covariance matrix with the claimed sample complexity. The fact that the postprocessing is efficient follows from the fact that we just need to take the empirical averages and then solve an SDP to obtain the covariance matrix. If we wish to output the Hamiltonian, instead, we only need to perform an additional matrix inversion step, which is again polynomial in $m$.
\end{proof}

\begin{rem}\label{remarkV1} Bounding $V_{ii}$ by $\|V\|_{\infty}$ may be overshooting in the regime of large $m$. For example, for a covariance matrix with many entries of $\mathcal{O}(1)$, the norm can still be of $\Omega(m)$. The sample complexity can be as well expressed in terms of a bound on the diagonal elements of $V$.
\end{rem}
{
\begin{rem}\label{remarkenergy}
The result above already illustrates the power of our methods, as this the first general result on learning Gaussian states in trace distance with sample complexity scaling quadratically in precision. Thanks to Eq.~(\ref{eq:energy}), we can also write $C[V,t]^2(1+\max_i|t_i|)^2=O(E^3)$ and using also Eq.~(\ref{ineq:energy2}) and the steps in the proof above, we obtain that $N=O\left(\frac{m^3 E^4(e^{2\beta_{\max}}-1)^2}{\epsilon^2}\log{\frac{m}{\delta}}\right)$ are sufficient for the task. 
\end{rem}
}

\subsection{Continuity bounds for Hamiltonians and covariances}\label{sec:conntinuity_bounds}
In this section, we derive useful perturbation bounds for Gaussian Hamiltonians in terms of the norm difference between their corresponding covariance matrices, and vice versa. As we saw in the previous section, such bounds are essential when learning Gaussian states.
We will use the following simple bound for the matrix inverse at multiple occasions:
\begin{equation}\label{inversebound}
\|A^{-1}-B^{-1}\|_{\infty}\leq \|A^{-1}\|_{\infty}\|B^{-1}\|_{\infty} \|A-B\|_{\infty},
\end{equation}
which readily follows from the identity $A^{-1}-B^{-1}=A^{-1}(B-A)B^{-1}$ and the submultiplicativity of the operator norm. We will also make use of the following integral representation of the matrix logarithm, which holds for $A$ such that the spectrum $\mathrm{spec}(A)$ does not intersect with $(-\infty,0]$:
\begin{equation}\label{logintegrep}
\log(A)=(A-I)\int_{0}^{\infty}\frac{1}{tA+I}\frac{dt}{t+1}\,.
\end{equation}
\begin{lem}\label{th:continuityhamv} For $i\in\{1,2\}$, let $H_i=\frac{1}{2}\log\left(\frac{2i\Omega V_i+I}{2i\Omega V_i-I}\right)i\Omega$. We also denote by $d_{\max}$, resp. $S$, the maximal symplectic eigenvalue of $H_1$, resp. the symplectic matrix associated to $H_1$ through Equation \eqref{eqsymplect}. Then, assuming that $\|V_1-V_2\|_{\infty}< \frac{e^{-2d_{\max}}}{1-e^{-2d_{\max}}} \|S\|_{\infty}^{-2}$, 
\begin{align*}
\|H_1-H_2\|_\infty
&\leq 2\left(2\frac{e^{-2d_{\max}}}{1-e^{-2d_{\max}}} \|S\|_{\infty}^{-2}\right)^{-1}\left(2\frac{e^{-2d_{\max}}}{1-e^{-2d_{\max}}} \|S\|_{\infty}^{-2}-2\|V_1-V_2\|_{\infty}\right)^{-1}\|V_1-V_2\|_{\infty},
\end{align*}

\end{lem}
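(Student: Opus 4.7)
The strategy is to reduce the Hamiltonian comparison to an inverse-perturbation estimate on the auxiliary matrices $B_i := 2i\Omega V_i - I$, since Equation~\eqref{eq:VtoH} lets us write $H_i = \tfrac{i}{2}\ln(I + 2B_i^{-1})\,\Omega$. I would proceed in four steps: (i) establish an \emph{a priori} bound $\|B_1^{-1}\|_\infty \leq c^{-1}$ for $c$ the hypothesis quantity; (ii) deduce $\|B_2^{-1}\|_\infty \leq (c - 2\|V_1 - V_2\|_\infty)^{-1}$ by Neumann series; (iii) combine via Equation~\eqref{inversebound} to obtain a bound on $\|B_1^{-1} - B_2^{-1}\|_\infty$; and (iv) transfer this to $\|H_1 - H_2\|_\infty$ via a continuity estimate for the matrix logarithm.

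For (i), I would use the scalar identity $\coth(x) - 1 = 2/(e^{2x}-1)$ in functional-calculus form applied to $K := iH_1\Omega$ via Equation~\eqref{eq:HtoV}, giving $B_1^{-1} = \tfrac{1}{2}(e^{2iH_1\Omega} - I)$. The Williamson decomposition $H_1 = S^{-\intercal}DS^{-1}$ together with the symplectic identity $S\Omega S^\intercal = \Omega$ yields the similarity $iH_1\Omega = S^{-\intercal}(iD\Omega)S^{\intercal}$, where each $2\times 2$ block of $iD\Omega$ is Hermitian with eigenvalues $\pm d_j$. This produces $\|e^{2iD\Omega} - I\|_\infty \leq e^{2d_{\max}} - 1$, and passing through the similarity using $\|S^{-1}\|_\infty = \|S\|_\infty$ (which follows from $S^{-1} = -\Omega S^\intercal\Omega$ and $\|\Omega\|_\infty = 1$) yields $\|B_1^{-1}\|_\infty \leq \tfrac{1}{2}\|S\|_\infty^2(e^{2d_{\max}} - 1) = c^{-1}$, exactly matching the hypothesis quantity $c := 2e^{-2d_{\max}}\|S\|_\infty^{-2}/(1 - e^{-2d_{\max}})$.

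Steps (ii) and (iii) are routine: $\|B_1 - B_2\|_\infty = 2\|V_1 - V_2\|_\infty < c$ gives $\|B_2^{-1}\|_\infty \leq (c - 2\|V_1 - V_2\|_\infty)^{-1}$ via the Neumann series, and Equation~\eqref{inversebound} then produces $\|B_1^{-1} - B_2^{-1}\|_\infty \leq 2\|V_1 - V_2\|_\infty / [c(c - 2\|V_1 - V_2\|_\infty)]$, which already matches the right-hand side of the stated bound.

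The remaining step (iv) is the most delicate part. I would apply the integral representation of the matrix logarithm in Equation~\eqref{logintegrep} to $A_i := I + 2B_i^{-1}$; after a change of variables one can rewrite $\ln(A_i) = 2\int_0^1 (B_i + 2tI)^{-1}\,dt$, so the difference becomes $\ln(A_1) - \ln(A_2) = 2\int_0^1 (B_2 + 2tI)^{-1}(B_2 - B_1)(B_1 + 2tI)^{-1}\,dt$. The main obstacle is to bound the resolvents $(B_i + 2tI)^{-1}$ uniformly in $t \in [0,1]$ without introducing extra factors: the stated inequality has no additional $d_{\max}$-dependent prefactor beyond what is already in $c$, so the integral must contribute at most the constant $1$. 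I anticipate this follows by observing that the shifted matrices $B_i + 2tI$ remain controlled by the same Williamson/similarity analysis used in step (i), with the shift $2tI$ only improving the spectral bounds. Once (iv) is completed, multiplication by $\tfrac{1}{2}\|\Omega\|_\infty = \tfrac{1}{2}$ combined with (iii) yields the claimed inequality.
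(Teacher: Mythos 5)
Your route is, after the change of variables $s=t/(t+1)$, identical to the paper's: your representation $\ln(I+2B_i^{-1})=2\int_0^1(B_i+2sI)^{-1}\,ds$ is exactly the paper's $H=\int_0^\infty\bigl(2i\Omega V+\tfrac{t-1}{t+1}I\bigr)^{-1}\tfrac{dt}{(t+1)^2}\,i\Omega$, and your steps (i)--(iii) are correct (including the identification $c^{-1}=\tfrac12\|S\|_\infty^2(e^{2d_{\max}}-1)$). The problem is step (iv), which you rightly identify as the crux but then settle with a heuristic that is not valid as stated. The claim that ``the shift $2tI$ only improves the spectral bounds'' is false as a general principle: $B_1=2i\Omega V_1-I$ is not normal, and even for a Hermitian matrix with spectrum of mixed sign a positive shift can destroy the minimum singular value (e.g.\ $\operatorname{diag}(-1,1)+I$). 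So $\sigma_{\min}(B_1)\ge c$ from step (i) does not by itself yield $\sigma_{\min}(B_1+2tI)\ge c$ for $t\in(0,1]$. Moreover, your step (i) rested on the closed form $B_1^{-1}=\tfrac12(e^{2iH_1\Omega}-I)$, which has no analogue for the shifted resolvents, so ``the same analysis'' does not literally transfer.

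What actually closes the gap is the covariance-side Williamson form $V_1=Sf(D)S^\intercal$ with $f(x)=\tfrac12\coth(x)$, which gives the shifted similarity $2i\Omega V_1+(2t-1)I=i\Omega\,S\bigl(2f(D)+(2t-1)i\Omega\bigr)S^\intercal$ (using $Si\Omega S^\intercal=i\Omega$). Squaring the Hermitian middle factor blockwise yields eigenvalues $\bigl(2f(d_j)\pm(2t-1)\bigr)^2$, and since $2f(d_j)\ge 1\ge|2t-1|$ these are bounded below by $(2f(d_{\max})-1)^2=\bigl(\tfrac{2e^{-2d_{\max}}}{1-e^{-2d_{\max}}}\bigr)^2$ uniformly in $t\in[0,1]$, whence $\sigma_{\min}(B_1+2tI)\ge\|S\|_\infty^{-2}(2f(d_{\max})-1)=c$; the bound $\sigma_{\min}(B_2+2tI)\ge c-2\|V_1-V_2\|_\infty$ then follows by Weyl's inequality for singular values. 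This is precisely the paper's Lemma~\ref{lem:bound}, Eq.~\eqref{eq:minsing1}, and it is a genuine computation rather than a monotonicity observation. Once it is supplied your argument reproduces the paper's bound exactly. A minor remark: your step (iii) is not actually needed for the conclusion --- the constant comes out of the integral in step (iv), not out of the bound on $\|B_1^{-1}-B_2^{-1}\|_\infty$, which merely happens to have the same form.
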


\begin{proof}

Using \eqref{logintegrep}, for $H\in\{H_1,H_2\}$, we have

\begin{align*}
H&\equiv \frac{1}{2}\log \left(I+\frac{2}{2i\Omega V-I}\right)i\Omega\\
&=\left(\frac{1}{2i\Omega V-I}\int_{0}^{\infty}\frac{1}{t\frac{2i\Omega V+I}{2i\Omega V-I}+I}\frac{dt}{t+1}\right)i\Omega\\
&=\int_{0}^{\infty}\frac{1}{t(2i\Omega V+I)+(2i\Omega V-I)}\frac{dt}{t+1}i\Omega\\
&=\int_{0}^{\infty}\frac{1}{2i\Omega V+\frac{t-1}{t+1}I}\frac{dt}{(t+1)^2}i\Omega.
\end{align*}
Then
\begin{align*}
\|H_1-H_2\|_\infty&=\left\|\int_{0}^{\infty}\frac{1}{{2}i\Omega V_1+\frac{t-1}{t+1}I}\frac{dt}{(t+1)^2}-\int_{0}^{\infty}\frac{1}{{2}i\Omega V_2+\frac{t-1}{t+1}I}\frac{dt}{(t+1)^2}\right\|_{\infty}\\
&\le \, 2\|V_1-V_2\|_\infty\int_0^\infty\, 
\left\|\left({2}i\Omega V_2+\frac{t-1}{t+1}I\right)^{-1}\right\|_\infty \left\|\left({2}i\Omega V_1+\frac{t-1}{t+1}I\right)^{-1}\right\|_\infty 
\frac{dt}{(t+1)^2}\\
&\leq \left(2\frac{e^{-2d_{\max}}}{1-e^{-2d_{\max}}} \|S\|_{\infty}^{-2}\right)^{-1}\left(2\frac{e^{-2d_{\max}}}{1-e^{-2d_{\max}}} \|S\|_{\infty}^{-2}-2\|V_1-V_2\|_{\infty}\right)^{-1}2\|V_1-V_2\|_{\infty}\,,
\end{align*}
where the first inequality follows from \eqref{inversebound}, whereas the last one follows from \eqref{eq:minsing1} and the integral $\int \frac{dt}{(t+1)^2}=1$.
\end{proof}

We also have the following bound.

\begin{lem}\label{th:backpert}
 For $i\in\{1,2\}$, we let $K_i=\frac{2}{2i\Omega V_i-I}$. In the notations of the previous theorem, if $\|K_1-K_2\|_\infty< e^{-2d_{\max}}\|S\|_{\infty}^{-2}$, then
\begin{align*}
\|H_1-H_2\|_{\infty}&\leq {\frac{1}{2}}e^{{{2}}d_{\max}}\|S\|^2_{\infty}\|K_1-K_2\|_{\infty}\left(1+\frac{e^{2d_{\max}}\|S\|_{\infty}^2(1-e^{-2d_{\max}})+\|K_1-K_2 \|_{\infty}}{e^{-2d_{\max}} \|S\|_{\infty}^{-2}-\|K_1-K_2 \|_{\infty}}\right).
\end{align*}
\end{lem}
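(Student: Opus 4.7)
The plan is to mirror the integral-representation strategy used in the preceding lemma, but this time to expand the logarithm directly as a function of $K$. Combining $H=\tfrac12\ln(I+K)\,i\Omega$ with the integral representation \eqref{logintegrep} applied to $A=I+K$ yields
\begin{align*}
H=\frac{i\Omega}{2}\int_0^\infty K\,M(t)^{-1}\,\frac{dt}{t+1},\qquad M(t):=(t+1)I+tK,
\end{align*}
where $K$ and $M(t)$ commute because $M(t)$ is polynomial in $K$. Writing this for both $H_1$ and $H_2$ and invoking the resolvent identity $M_1^{-1}-M_2^{-1}=M_1^{-1}(M_2-M_1)M_2^{-1}$ with $M_2(t)-M_1(t)=t(K_2-K_1)$, the integrand decomposes cleanly as
\begin{align*}
K_1M_1(t)^{-1}-K_2M_2(t)^{-1}=(K_1-K_2)M_1(t)^{-1}-t\,K_2\,M_1(t)^{-1}(K_1-K_2)M_2(t)^{-1}.
\end{align*}

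I would next bound the operator norms of all ingredients using the symplectic diagonalization $H_1=S^{-\intercal}DS^{-1}$ and $S^{-1}\Omega=\Omega S^\intercal$, which give $K_1=S^{-\intercal}(e^{2iD\Omega}-I)S^\intercal$ and $M_1(t)=S^{-\intercal}(I+te^{2iD\Omega})S^\intercal$. Since the eigenvalues of $iD\Omega$ are $\pm d_i$, this yields
\begin{align*}
\|K_1\|_\infty\le \|S\|_\infty^2(e^{2d_{\max}}-1),\qquad \|M_1(t)^{-1}\|_\infty\le \frac{\|S\|_\infty^2}{1+te^{-2d_{\max}}}\le \frac{\|S\|_\infty^2\,e^{2d_{\max}}}{1+t},
\end{align*}
and the triangle inequality gives $\|K_2\|_\infty\le \|S\|_\infty^2(e^{2d_{\max}}-1)+\|K_1-K_2\|_\infty$. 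For $M_2$, a Weyl-type perturbation argument combined with the hypothesis $\|K_1-K_2\|_\infty<e^{-2d_{\max}}\|S\|_\infty^{-2}$ produces $\sigma_{\min}(M_2(t))\ge \tfrac{1}{\|S\|_\infty^2}+t\nu\ge \nu(1+t)$ with $\nu:=e^{-2d_{\max}}\|S\|_\infty^{-2}-\|K_1-K_2\|_\infty>0$, so that $\|M_2(t)^{-1}\|_\infty\le 1/[\nu(1+t)]$.

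The last step is to take norms in the decomposition, multiply by $1/(t+1)$, and integrate. The first summand contributes at most $\tfrac12\|S\|_\infty^2\,e^{2d_{\max}}\|K_1-K_2\|_\infty$ after using $\int_0^\infty(t+1)^{-2}dt=1$. For the cross term, I would apply the crude bound $t/(t+1)\le 1$, turning the relevant integral into $\int_0^\infty\|M_1^{-1}\|_\infty\|M_2^{-1}\|_\infty\,(t+1)^{-1}dt\le \|S\|_\infty^2\,e^{2d_{\max}}/\nu$. Assembling the two pieces, factoring out $\tfrac12 e^{2d_{\max}}\|S\|_\infty^2\|K_1-K_2\|_\infty$, and rewriting $\|S\|_\infty^2(e^{2d_{\max}}-1)=e^{2d_{\max}}\|S\|_\infty^2(1-e^{-2d_{\max}})$ in the bound on $\|K_2\|_\infty$ produces exactly the inequality in the statement.

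The main bookkeeping obstacle is the cross term $K_2 M_1^{-1}(K_1-K_2)M_2^{-1}$: the smallness condition on $\|K_1-K_2\|_\infty$ is precisely tuned so that $\sigma_{\min}(M_2(t))$ stays proportional to $(1+t)$ uniformly in $t$, which is what keeps the remaining integrals finite and expressible purely in terms of $H_1$'s symplectic data. Using the sharper estimate $\int_0^\infty t(t+1)^{-3}dt=1/2$ instead of $t/(t+1)\le 1$ would improve the prefactor by a factor of two but would not change the qualitative form of the bound.
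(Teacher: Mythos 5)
Your proof is correct and follows essentially the same route as the paper's: the same integral representation $\ln(I+K)=K\int_0^\infty((t+1)I+tK)^{-1}\frac{dt}{t+1}$, the same resolvent-identity decomposition into a $(K_1-K_2)M_1^{-1}$ term plus a cross term, and the same perturbative lower bound on $\sigma_{\min}(M_2(t))$ under the smallness hypothesis; the only difference is that you derive the spectral estimates (e.g. $K_1=S^{-\intercal}(e^{2Di\Omega}-I)S^\intercal$, $\|M_1(t)^{-1}\|_\infty\le\|S\|_\infty^2e^{2d_{\max}}/(1+t)$) inline from the symplectic normal form, whereas the paper invokes the singular-value bounds \eqref{eq:minsing1}--\eqref{eq:minsing2} of Lemma~\ref{lem:bound}. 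The placement of $i\Omega$ on the left rather than the right and the stray $(t+1)^{-1}$ in your intermediate cross-term integral are harmless transcription slips that do not affect the final bound.
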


\begin{proof}
By writing $K_i=\frac{2}{2i\Omega V_i-I}$, we also have
\begin{align}\label{eq:backtoh}
H_i=\frac{1}{2}\log(I+K_i)i\Omega={\frac{1}{2}}\,K_i\int_{0}^{\infty}\frac{1}{I+\frac{t}{t+1}K_i}\frac{dt}{(t+1)^2}i\Omega\,.
\end{align}
Let $\sigma_{\min}(A)$ be the smallest singular vaule of $A$. Since \[\|K_1-K_2 \|_{\infty}< e^{-{2}d_{\max}} \|S\|_{\infty}^{-2}\leq \sigma_{\min}\left(\frac{t}{t+1}\frac{i\Omega V+I}{i\Omega V-I}+\frac{1}{t+1}\right)\] from \eqref{eq:minsing2}, then
\begin{align*}
H_2&=\frac{1}{2}{K_2}\int_{0}^{\infty}\frac{1}{I+\frac{t}{t+1}{K_2}}\frac{dt}{(t+1)^2}i\Omega\\
&=\frac{1}{2}{K_2}\int_{0}^{\infty}\frac{1}{\frac{t}{t+1}\frac{i\Omega V_1+I}{i\Omega V_1-I}+\frac{1}{t+1}+\frac{t}{t+1}(K_2 -K_1)}\frac{dt}{(t+1)^2}i\Omega
\end{align*}
is well defined and, by another use of \eqref{inversebound},
\begin{align*}
&\left\|\int_{0}^{\infty}\frac{1}{I+\frac{t}{t+1}K_1}\frac{dt}{(t+1)^2}-\int_{0}^{\infty}\frac{1}{I+\frac{t}{t+1}{K_2}}\frac{dt}{(t+1)^2}\right\|_\infty\\
&\leq  \|K_1-K_2 \|_{\infty} \max_{t\ge 0}\left(\sigma_{\min}\left(\frac{t}{t+1}\frac{i\Omega V_1+I}{i\Omega V_1-I}+\frac{1}{t+1}\right)-\|K_1-K_2 \|_{\infty}\right)^{-1}\!\!\!\!\!\!\!\sigma^{-1}_{\min}\left(\frac{t}{t+1}\frac{i\Omega V_1+I}{i\Omega V_1-I}+\frac{1}{t+1}\right)\\
& \leq \|K_1-K_2 \|_{\infty} \left(e^{-2d_{\max}} \|S\|_{\infty}^{-2}-\|K_1-K_2 \|_{\infty}\right)^{-1}\left(e^{2d_{\max}} \|S\|^2_{\infty}\right),
\end{align*}
where the last bound follows from \eqref{eq:minsing2}. Finally, by a triangle inequality,
\begin{align*}
&\|H_1-{H}_2\|_{\infty}=\frac{1}{2}\left\|K_1\int_{0}^{\infty}\frac{1}{I+\frac{t}{t+1}K_1}\frac{dt}{(t+1)^2}-K_2 \int_{0}^{\infty}\frac{1}{I+\frac{t}{t+1}{K_2}}\frac{dt}{(t+1)^2}\right\|_\infty\\
&\le \frac{1}{2}\|K_1-K_2\|_\infty e^{2d_{\max}}\|S\|_\infty^2\, \left(1+\|K_2\|_\infty \left(e^{-2d_{\max}} \|S\|_{\infty}^{-2}-\|K_1-K_2 \|_{\infty}\right)^{-1}\right) \\
&\le \frac{1}{2}\|K_1-K_2\|_\infty e^{2d_{\max}}\|S\|_\infty^2\, \left(1+\big(\|K_2-K_1\|_\infty +\|K_1\|_\infty\big)\left(e^{-2d_{\max}} \|S\|_{\infty}^{-2}-\|K_1-K_2 \|_{\infty}\right)^{-1}\right) \\
 &\leq \frac{1}{2}e^{2d_{\max}}\|S\|^2_{\infty}\|K_1-K_2 \|_{\infty}\left(1+\frac{e^{2d_{\max}}\|S\|_{\infty}^2(1-e^{-2d_{\max}})+\|K_1-K_2 \|_{\infty}}{e^{-2d_{\max}} \|S\|_{\infty}^{-2}-\|K_1-K_2 \|_{\infty}}\right).
\end{align*}
where we used Equation~\eqref{eq:minsing1} in the last line in order to bound $\|K_1\|_\infty$.
\end{proof}
We now prove a continuity bound for covariance matrices in terms of their associated Gaussian Hamiltonians.
\begin{lem}\label{th:boundfromhtov}
Let $H_1$ and $H_2$ be two Hamiltonian matrices, $H_1$ with normal form $H_1=S^{-\intercal}D{S^{-1}}$, and $\|H_1-H_2\|_{\infty}\leq \frac{1}{{4}\|S\|^{6}_{\infty}}\big(e^{-2d_{\max}-1}(1-e^{-2d_{\min}})\big)$. Then
\begin{align}
\|V_1-V_2\|_{\infty}&\leq 4\,\frac{\|S\|_{\infty}^8 e^{2d_{\max}+1}\|H_1-H_2\|_{\infty}}{(1-e^{-2d_{\min}})^2} \,.
\end{align}
\end{lem}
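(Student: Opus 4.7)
The plan is to rewrite the covariance in resolvent form, so that the difference $V_1-V_2$ becomes a product of three controllable factors, and then to estimate each factor solely in terms of the symplectic data of $H_1$. Starting from the paper's identity $2i\Omega V=\coth(iH\Omega)$, I use $\coth(x)=1+2/(e^{2x}-1)$ together with $(i\Omega)^{2}=I$ to write $V_j=\tfrac{1}{2}i\Omega+i\Omega\,M_j^{-1}$ with $M_j:=e^{2iH_j\Omega}-I$, whence
\begin{align*}
V_1-V_2\;=\;i\Omega\,M_1^{-1}(M_2-M_1)M_2^{-1}.
\end{align*}
Since $i\Omega$ is unitary, the task reduces to bounding $\|M_1^{-1}\|_\infty$, $\|M_2-M_1\|_\infty$ and $\|M_2^{-1}\|_\infty$.

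For $\|M_1^{-1}\|_\infty$ I would exploit the symplectic normal form $H_1=S^{-\intercal}DS^{-1}$: the symplectic identity $S^{-1}\Omega=\Omega S^{\intercal}$ yields $iH_1\Omega=S^{-\intercal}(iD\Omega)S^{\intercal}$, so $M_1=S^{-\intercal}(e^{2iD\Omega}-I)S^{\intercal}$. The matrix $iD\Omega$ is Hermitian with eigenvalues $\pm d_j$, so $e^{2iD\Omega}-I$ is normal with smallest singular value $1-e^{-2d_{\min}}$. Combined with $\|S^{-1}\|_\infty=\|S\|_\infty$ (which holds for symplectic $S$), this gives $\|M_1^{-1}\|_\infty\le\|S\|_\infty^{2}/(1-e^{-2d_{\min}})$. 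The same conjugation supplies the key ingredient for the next step: $\|e^{sF_1}\|_\infty\le\|S\|_\infty^{2}e^{2sd_{\max}}$ for all $s\ge 0$, where $F_1:=2iH_1\Omega$.

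For $\|M_2-M_1\|_\infty$ I would set $\delta:=F_2-F_1=2i(H_2-H_1)\Omega$, so that $\|\delta\|_\infty=2\|H_1-H_2\|_\infty$ (by unitarity of $i\Omega$), and expand in a Dyson series
\begin{align*}
e^{F_1+\delta}-e^{F_1}=\sum_{n\ge 1}\int_{0<s_1<\cdots<s_n<1}e^{(1-s_n)F_1}\delta\,e^{(s_n-s_{n-1})F_1}\delta\cdots\delta\,e^{s_1F_1}\,ds.
\end{align*}
The $n{+}1$ exponents sum to $1$, so the preceding estimate collapses the product of exponentials to $\|S\|_\infty^{2(n+1)}e^{2d_{\max}}$ uniformly on the simplex, and integrating over the simplex of volume $1/n!$ yields $\|M_2-M_1\|_\infty\le\|S\|_\infty^{2}e^{2d_{\max}}(e^{\|S\|_\infty^{2}\|\delta\|_\infty}-1)$. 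The smallness assumption on $\|H_1-H_2\|_\infty$ forces $\|S\|_\infty^{2}\|\delta\|_\infty\le 1$, and the inequality $e^{x}-1\le ex$ on $[0,1]$ then upgrades this to $\|M_2-M_1\|_\infty\le 2e^{2d_{\max}+1}\|S\|_\infty^{4}\|H_1-H_2\|_\infty$. The very same assumption makes $\|M_1^{-1}(M_2-M_1)\|_\infty\le\tfrac{1}{2}$, so a Neumann series on $M_2=M_1(I+M_1^{-1}(M_2-M_1))$ gives $\|M_2^{-1}\|_\infty\le 2\|M_1^{-1}\|_\infty$; multiplying the three estimates produces the advertised constant $4\|S\|_\infty^{8}e^{2d_{\max}+1}/(1-e^{-2d_{\min}})^{2}$.

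The main obstacle I anticipate is the Dyson step: only $d_{\max}$ (and not $\|H_1\|_\infty$, which may be as large as $\|S\|_\infty^{2}d_{\max}$) is allowed in the exponent of the final bound, and this is only possible because the symplectic conjugation moves all of the ``squeezing'' into a prefactor $\|S\|_\infty^{2}$ per exponential, which is absorbed harmlessly into $\|S\|_\infty^{2(n+1)}$ while the arguments of the exponentials telescope to a single $e^{2d_{\max}}$. A cruder estimate such as $\|e^{sA}\|_\infty\le e^{s\|A\|_\infty}$ would destroy the advertised dependence on the symplectic eigenvalues, and all the remaining steps (resolvent identity, Neumann series, and $e^{x}-1\le ex$) are standard one-line manipulations calibrated exactly by the smallness hypothesis.
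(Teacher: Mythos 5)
Your proof is correct and follows essentially the same route as the paper's: both reduce $\|V_1-V_2\|_\infty$ to the difference $e^{2H_1 i\Omega}-e^{2H_2 i\Omega}$ (your $M_1^{-1}(M_2-M_1)M_2^{-1}$ factorization is the paper's inverse-perturbation identity in disguise), bound that difference by an exponential expansion whose norms are controlled via the symplectic conjugation $e^{2sH_1i\Omega}=S^{-\intercal}e^{2siD\Omega}S^{\intercal}$, and then control the remaining inverse using the smallness hypothesis. The only cosmetic differences are that you use a full Dyson series around $H_1$ plus $e^x-1\le ex$ where the paper uses a one-step Duhamel formula with the interpolated Hamiltonian $tH_1+(1-t)H_2$, and a Neumann series for $\|M_2^{-1}\|_\infty$ where the paper perturbs the smallest singular value; both executions land on the identical constant $4\|S\|_\infty^{8}e^{2d_{\max}+1}/(1-e^{-2d_{\min}})^{2}$.
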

\begin{proof}
First of all, notice that $\|(2i\Omega V_1-I)^{-1}-(2i\Omega V_2-I)^{-1}\|_{\infty}=\frac{1}{2}\|e^{2H_1i\Omega}-e^{2H_2 i\Omega}\|_{\infty}$ by \eqref{eq:HtoV}. Moreover, for $s,t\in[0,1]$, 
\begin{align}
\left\|e^{2s(t H_1+(1-t)H_2)i\Omega}\right\|_{\infty}&=\left\|e^{2s(H_1 i\Omega+(1-t)(H_2-H_1)i\Omega)}\right\|_{\infty}\nonumber\\
&=\left\|S^{-\intercal} e^{2s(D_1 i\Omega+(1-t)S^{\intercal}(H_2-H_1)Si\Omega)} S^{\intercal}\right\|_{\infty}\nonumber\\
&\leq \|S\|_{\infty}^2e^{2s\|D_1 i\Omega+(1-t)S^{\intercal}(H_2-H_1)Si\Omega\|_{\infty}}\nonumber\\
&\leq \|S\|_{\infty}^2 e^{2sd_{\max}+2s\|S\|^2_{\infty}\|H_2-H_1\|_{\infty}}\,.
\end{align}
Hence, by Duhamel formula,
\begin{align}
e^{2H_1i\Omega}-e^{2H_2 i\Omega}=\int_{0}^1 \int_{0}^1  e^{{2}s(t H_1+(1-t)H_2)i\Omega}2(H_1-H_2)i\Omega e^{{2}(1-s)(t H_1+(1-t)H_2)i\Omega}dt \,ds\,,
\end{align}
we can estimate
\begin{align}
\left\|e^{2H_1i\Omega}-e^{2H_2 i\Omega}\right\|_{\infty}&\leq 2\int_{0}^1 \int_{0}^1  \left\|e^{2s(t H_1+(1-t)H_2)i\Omega}\right\|_{\infty}\|H_1-H_2\|_{\infty} \left\|e^{2(1-s)(t H_1+(1-t)H_2)i\Omega}\right\|_{\infty}dt\,ds\nonumber\\
& \leq 2\|S\|_{\infty}^4 e^{2d_{\max}+2\|S\|^2_{\infty}\|H_2-H_1\|_{\infty}}\|H_1-H_2\|_{\infty}\,.
\end{align}
We then have, by perturbation bounds on the inverse \eqref{inversebound} and the condition in the theorem statement,
\begin{align*}
\|V_1-V_2\|_{\infty}&=\frac{1}{2}\|2i\Omega V_1-I-(2i\Omega V_2-I)\|_{\infty}\\
&\leq \frac{1}{2}\|2i\Omega V_1-I\|_{\infty} \|2i\Omega V_2-I\|_{\infty}\|(2i\Omega V_1-I)^{-1}-(2i\Omega V_2-I)^{-1}\|_{\infty}\nonumber\\
&\leq \frac{1}{4}\frac{2\|S\|_{\infty}^2}{1-e^{-2d_{\min}}} \!\left[\left(\|S\|_{\infty}^2\frac{2}{1-e^{-2d_{\min}}}\right)^{-1}\!\!\!\!\!\!-{\frac{1}{2}}\|e^{2H_1i\Omega}-e^{2H_2 i\Omega}\|_{\infty} \right]^{-1}\!\!\!\!\!\!\left\|e^{2H_1i\Omega}-e^{2H_2 i\Omega}\right\|_{\infty}\nonumber\\
&\le  \frac{\|S\|_{\infty}^4}{(1-e^{-2d_{\min}})^2} \!\left[1-\frac{\|S\|_{\infty}^2}{1-e^{-2d_{\min}}}\|e^{2H_1i\Omega}-e^{2H_2 i\Omega}\|_{\infty} \right]^{-1}\left\|e^{2H_1i\Omega}-e^{2H_2 i\Omega}\right\|_{\infty}\nonumber  \\
&\leq \frac{2\|S\|_{\infty}^8 e^{2d_{\max}+{2}\|S\|^2_{\infty}\|H_2-H_1\|_{\infty}}\|H_1-H_2\|_{\infty}}{(1-e^{-2d_{\min}})^2\left[1-\frac{{2}\|S\|_{\infty}^6 e^{2d_{\max}+{2}\|S\|^2_{\infty}\|H_2-H_1\|_{\infty}}}{1-e^{-2d_{\min}}}\|H_1-H_2\|_{\infty}  \right]}  \nonumber\\
&\le \frac{2\|S\|_{\infty}^8 e^{2d_{\max}+{2}\|S\|^2_{\infty}\|H_2-H_1\|_{\infty}}\|H_1-H_2\|_{\infty}}{(1-e^{-2d_{\min}})^2\left[1-\frac{{2}\|S\|_{\infty}^6 e^{2d_{\max}+1}}{1-e^{-2d_{\min}}}\|H_1-H_2\|_{\infty}  \right]}  \nonumber\\
&\leq 4\frac{\|S\|_{\infty}^8 e^{2d_{\max}+1}\|H_1-H_2\|_{\infty}}{(1-e^{-2d_{\min}})^2} \,,
\end{align*}
where we also used \eqref{eq:normbound1} in the second inequality above, and where in the last two inequalities we also used that $\|H_1-H_2\|_\infty\le (2\|S\|^2_\infty)^{-1}$.
\end{proof}

\subsection{Sparse approximations of $(2V-i\Omega)^{-1}$}\label{sec:sparse_approx}

The main goal of this section is to derive a sparse approximation of $(2V-i\Omega)^{-1}$ from an estimated covariance matrix $\hat{V}$. We start by stating a simple Taylor expansion bound for the matrix $(2V-i\Omega)^{-1}$ which will turn out to be crucial for our later local perturbation analysis. Indeed, one of the main ideas of our work is to invert the matrix $2V-i\Omega$ only on submatrices and show that this still yields good estimates of the Hamiltonian if we follow this procedure on appropriate submatrices. This way we can make sure that the precision we need to estimate the covariance matrix to estimate the terms of the Hamiltonian scales logarithmically with the number of modes.

\begin{lem}\label{th:seriesapprox}
Let $V$ be a covariance matrix of a Gaussian state with Hamiltonian $H=S^{-\intercal}D{S^{-1}}$. We have
\begin{equation}\label{analyticform}
(2 V-i\Omega )^{-1}=-\frac{I-e^{+ 2Hi\Omega}}{2}(i\Omega)=\frac{1}{2}\sum_{n=1}^{\infty}\frac{\left(2Hi\Omega\right)^{n}}{n!}(i\Omega)=\frac{1}{2}\sum_{n=1}^{l}\frac{\left( 2Hi\Omega\right)^{n}}{n!}(i\Omega)+E\,,
\end{equation}
with $\|E\|_{\infty}\leq \frac{\|S\|_{\infty}^2}{{2}}\left(\frac{(2d_{\max})^{l+1}e^{2d_{\max}}}{(l+1)!}\right)$.

\end{lem}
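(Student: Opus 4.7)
The statement has two parts: (i) the closed-form identity
\[
(2V-i\Omega)^{-1}=-\tfrac{1}{2}\bigl(I-e^{2Hi\Omega}\bigr)(i\Omega),
\]
and (ii) the tail bound on the truncation error $E$. For (i), the plan is to start from the inversion formula \eqref{eq:VtoH}. Right-multiplying that identity by $i\Omega$ and using $(i\Omega)^2=-\Omega^2=I$ yields $2Hi\Omega=\ln\!\bigl(\tfrac{2i\Omega V+I}{2i\Omega V-I}\bigr)$, so that
\[
e^{2Hi\Omega}=\frac{2i\Omega V+I}{2i\Omega V-I},\qquad\text{hence}\qquad e^{2Hi\Omega}-I=\frac{2}{2i\Omega V-I}.
\]
On the other hand, using again $(i\Omega)^2=I$ one checks $i\Omega(2V-i\Omega)=2i\Omega V-I$, so $(2V-i\Omega)^{-1}=(2i\Omega V-I)^{-1}\,i\Omega$. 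Combining the two displays gives (i). For the series form one simply Taylor-expands $e^{2Hi\Omega}$ about $0$, which converges in operator norm for every matrix argument, and the constant term $I$ cancels the $-I$ in $-\tfrac{1}{2}(I-e^{2Hi\Omega})$, leaving the advertised sum starting at $n=1$.

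For the error bound (ii), define $E:=\tfrac{1}{2}\sum_{n=l+1}^{\infty}\tfrac{(2Hi\Omega)^{n}}{n!}(i\Omega)$. The key step — and the only place where some care is required — is to bound $\|(Hi\Omega)^{n}\|_{\infty}$ not by the naive $\|H\|_{\infty}^{n}$ (which would give a spurious $\|S\|_{\infty}^{2n}$) but with a \emph{single} factor $\|S\|_{\infty}^{2}$. I would use the symplectic identity $S^{-1}\,i\Omega=i\Omega\,S^{\intercal}$, which follows from $S^{\intercal}\Omega S=\Omega$ together with $\Omega^{-1}=-\Omega$. Plugging $H=S^{-\intercal}DS^{-1}$ into $Hi\Omega$ then gives
\[
Hi\Omega=S^{-\intercal}D\bigl(S^{-1}i\Omega\bigr)=S^{-\intercal}(Di\Omega)S^{\intercal},
\]
so $Hi\Omega$ is \emph{similar} to $Di\Omega$ via $S^{\intercal}$. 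Hence $(Hi\Omega)^{n}=S^{-\intercal}(Di\Omega)^{n}S^{\intercal}$ and
\[
\|(Hi\Omega)^{n}\|_{\infty}\le\|S\|_{\infty}^{2}\,\|(Di\Omega)^{n}\|_{\infty}.
\]
Because $D=\bigoplus_{i}d_{i}I_{2}$ commutes with $i\Omega$ block-wise and each $2\times 2$ block $i\Omega$ has spectrum $\{\pm 1\}$, the matrix $Di\Omega$ has spectrum $\{\pm d_{i}\}$, so $\|(Di\Omega)^{n}\|_{\infty}=d_{\max}^{n}$. Using $\|i\Omega\|_{\infty}=1$ this yields $\|(2Hi\Omega)^{n}(i\Omega)\|_{\infty}\le\|S\|_{\infty}^{2}(2d_{\max})^{n}$.

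The finish is the standard exponential tail estimate
\[
\sum_{n=l+1}^{\infty}\frac{(2d_{\max})^{n}}{n!}\le\frac{(2d_{\max})^{l+1}}{(l+1)!}\sum_{k=0}^{\infty}\frac{(2d_{\max})^{k}}{k!}=\frac{(2d_{\max})^{l+1}e^{2d_{\max}}}{(l+1)!},
\]
obtained by comparing the ratio $\tfrac{(l+1)!}{(l+1+k)!}\le\tfrac{1}{k!}$. Multiplying by the prefactor $\tfrac{1}{2}\|S\|_{\infty}^{2}$ delivers the advertised bound on $\|E\|_{\infty}$.

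\textbf{Main difficulty.} There is no deep obstacle; the only pitfall is the bookkeeping around the symplectic form. A naive triangle-inequality plus submultiplicativity bound gives $\|H\|_{\infty}^{n}\le(\|S\|_{\infty}^{2}d_{\max})^{n}$, which worsens the $n$-dependence and in particular produces an $n$-dependent exponent of $\|S\|_{\infty}$ that would ruin later applications. The trick is recognizing the similarity $Hi\Omega\sim Di\Omega$, which requires using $S^{\intercal}\Omega S=\Omega$ in the form $S^{-1}i\Omega=i\Omega S^{\intercal}$ so that $S^{-1}$ and $i\Omega$ can be swapped at the cost of transposing $S$. Everything else is a short computation using the identities $(i\Omega)^{2}=I$, $\Omega^{-1}=-\Omega$, and the standard Taylor tail estimate for the exponential.
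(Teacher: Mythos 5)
Your proposal is correct and follows essentially the same route as the paper: the closed form comes from the exact relation between $V$ and $H$ (the paper phrases it via $\coth(x)-1=-2/(1-e^{2x})$, you via exponentiating $2Hi\Omega=\ln\bigl(\tfrac{2i\Omega V+I}{2i\Omega V-I}\bigr)$, which is the same identity), and the tail bound rests on the same key similarity $Hi\Omega=S^{-\intercal}(Di\Omega)S^{\intercal}$ that the paper uses implicitly when writing $E(i\Omega)=S^{-\intercal}\bigl(\tfrac{1}{2}\sum_{n>l}(2Di\Omega)^n/n!\bigr)S^{\intercal}$. The only cosmetic difference is that the paper bounds $\|E\|_\infty$ by estimating $EE^\dagger$ and the Lagrange remainder of the whole tail, whereas you bound term by term and sum; both yield the single factor $\|S\|_\infty^2$ and the same final estimate.
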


\begin{proof}

We recall that $V=\frac{1}{2}i\Omega \coth(Hi\Omega)$ by \eqref{eq:HtoV}, so $2i\Omega V- I=\coth(Hi\Omega)- I$. Then,
\eqref{analyticform} comes from writing $\coth(x)- 1=\frac{2 e^{- x}}{e^{x}-e^{-x}}=- \frac{2}{1-e^{ 2x}}$ and using the series expansion of the exponential function. The bound on the remainder term $E$ in~\eqref{analyticform} arises from the identity
\begin{align*}
E(i\Omega)= S^{{-\intercal}}\left(\frac{1}{2}\sum_{n=l+1}^{\infty}\frac{\left( 2Di\Omega\right)^{n}}{n!}\right)S^{{\intercal}},
\end{align*}
and thus
\begin{align*}
EE^{\dagger}&=S^{{-\intercal}}\left(\frac{1}{2}\sum_{n=l+1}^{\infty}\frac{\left( 2Di\Omega\right)^{n}}{n!}\right)S^{{\intercal}}S\left(\frac{1}{2}\sum_{n=l+1}^{\infty}\frac{\left( 2Di\Omega\right)^{n}}{n!}\right)S^{{-1}}
\\&\leq \sigma_{\max}(S^{\intercal}S) S^{{-\intercal}}\left(\frac{1}{2}\sum_{n=l+1}^{\infty}\frac{\left( 2Di\Omega\right)^{n}}{n!}\right)^{2}S^{{-1}}\\
&\leq \frac{\sigma_{\max}(S^{\intercal}S)}{4}\left(\frac{(2d_{\max})^{l+1}e^{2d_{\max}}}{(l+1)!}\right)^2 S^{-\intercal}S^{-1}\\
&\leq \frac{\sigma^2_{\max}(S^{\intercal}S)}{4}\left(\frac{(2d_{\max})^{l+1}e^{2d_{\max}}}{(l+1)!}\right)^2 
\end{align*}
where in the second inequality we estimated the Lagrange remainder. Thus we obtain $\|E\|_{\infty}\leq \frac{\sigma_{\max}(SS^{\intercal})}{{2}}\left(\frac{(2d_{\max})^{l+1}e^{2d_{\max}}}{(l+1)!}\right)$.
\end{proof}

\begin{rem}
Assuming that $H$ has an underlying graph structure, note that $((2V-i\Omega)^{-1}-E)_{2i-\delta_{1},2j-\delta_2}$, with $\delta_1,\delta_2\in{0,1}$, is zero if $j$ is at distance more than $l$ from i. This property will be crucial in the following.
\end{rem}

\begin{lem}\label{theoapprinvsparse}
Let $M=\left(\begin{smallmatrix}
    A &B\\ C&D
\end{smallmatrix}\right)\in\mathbb{C}^{m_1+m_2}\times \mathbb{C}^{m_1+m_2}$ be a positive definite matrix with blocks $A\in \mathbb{C}^{m_1}\times\mathbb{C}^{m_1}$, $B\in \mathbb{C}^{m_1}\times\mathbb{C}^{m_2}$, $C\in \mathbb{C}^{m_2}\times\mathbb{C}^{m_1}$, $D\in \mathbb{C}^{m_2}\times\mathbb{C}^{m_2}$, and such that $D$ and $A-BD^{-1}C$ are invertible. Let $N\in \mathbb{C}^{m_1}\times\mathbb{C}^{m_1}$ be the upper-left block of $M^{-1}$. Denote by $\widetilde B$, resp. $\widetilde C$, the matrix obtained from $B$ by substituting the first row with zeros, resp. from $C$ by substituting the first column with zeros. Clearly $B=C^\dagger$, $\widetilde{B}=\widetilde{C}^\dagger$ and $\|B-\widetilde{B}\|_\infty=\|C-\widetilde{C}\|_\infty$. Moreover, 
\begin{align*}
|(N^{-1})_{1,j}-A_{1,j}|\leq \|B-\widetilde{B}\|_\infty \|M^{-1}\|_{\infty}\|M\|_{\infty}\,\,\forall j\in{m_1},\\ 
|(N^{-1})_{j,1}-A_{j,1}|\leq \|B-\widetilde{B}\|_\infty \|M^{-1}\|_{\infty}\|M\|_{\infty}\,\,\forall j\in{m_1}. 
\end{align*}

\end{lem}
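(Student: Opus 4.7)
The plan is to apply the Schur complement identity and then reduce the entrywise bound on an error matrix to an operator-norm bound via a "zeroing trick."

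First, I would invoke the standard block-inversion formula: since $D$ and the Schur complement $A - BD^{-1}C$ are invertible, the upper-left $m_1 \times m_1$ block of $M^{-1}$ is exactly $N = (A - BD^{-1}C)^{-1}$. Inverting this gives the key identity
\begin{equation*}
N^{-1} - A \;=\; -\,BD^{-1}C,
\end{equation*}
so the quantity to bound is simply the $(1,j)$, resp. $(j,1)$, entry of $BD^{-1}C$.

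Next comes the zeroing trick. By construction $\widetilde B$ has its first row equal to zero, hence $\widetilde B D^{-1} C$ has its entire first row equal to zero. Subtracting this vanishing matrix yields
\begin{equation*}
(BD^{-1}C)_{1,j} \;=\; \bigl((B-\widetilde B)\,D^{-1}C\bigr)_{1,j},
\end{equation*}
and an analogous identity on the left using $\widetilde C$ handles the $(j,1)$ case:
\begin{equation*}
(BD^{-1}C)_{j,1} \;=\; \bigl(BD^{-1}(C-\widetilde C)\bigr)_{j,1}.
\end{equation*}
Bounding a single entry by the operator norm of the matrix, then applying submultiplicativity, gives
\begin{equation*}
|(BD^{-1}C)_{1,j}| \;\le\; \|B-\widetilde B\|_\infty\,\|D^{-1}\|_\infty\,\|C\|_\infty,
\end{equation*}
and similarly for the $(j,1)$ entry, where one uses $\|B-\widetilde B\|_\infty = \|C - \widetilde C\|_\infty$ from $B = C^\dagger$, $\widetilde B = \widetilde C^\dagger$.

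The final step is to replace $\|D^{-1}\|_\infty$ and $\|C\|_\infty$ by the cleaner expressions $\|M^{-1}\|_\infty$ and $\|M\|_\infty$. For the second, $C$ is extracted from $M$ by a pair of coordinate projections, so $\|C\|_\infty \le \|M\|_\infty$. For the first, positivity of $M$ is used: since $M$ is positive definite, $D$ is a principal submatrix of $M$, and Cauchy interlacing gives $\lambda_{\min}(D) \ge \lambda_{\min}(M) > 0$, hence $\|D^{-1}\|_\infty = \lambda_{\min}(D)^{-1} \le \lambda_{\min}(M)^{-1} = \|M^{-1}\|_\infty$. Combining everything yields the two stated inequalities. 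The main (minor) obstacle is only to keep the blockwise conventions straight and to justify carefully the submatrix norm inequalities using positivity of $M$; the rest is bookkeeping around the Schur complement identity.
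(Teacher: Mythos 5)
Your proposal is correct and follows essentially the same route as the paper's proof: block inversion to identify $N^{-1}-A=-BD^{-1}C$, the zeroing trick with $\widetilde B$ (resp.\ $\widetilde C$) to isolate the first row (resp.\ column), submultiplicativity, and the bounds $\|C\|_\infty\le\|M\|_\infty$ and $\|D^{-1}\|_\infty\le\|M^{-1}\|_\infty$ via positivity of $M$. No gaps.
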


\begin{proof}
We have by block-inversion that
\begin{equation*}
M^{-1}=\begin{pmatrix}
    N\equiv(A-BD^{-1}C)^{-1} & -(A-BD^{-1}C)^{-1}BD^{-1} \\ -D^{-1}C(A-BD^{-1}C)^{-1}& D^{-1}+ D^{-1}C(A-BD^{-1}C)^{-1}BD^{-1}
\end{pmatrix}\,.
\end{equation*}
By definition, $A-BD^{-1}C=N^{-1}$. Since by construction the first row of $\widetilde B$ is identically zero, $\widetilde B_{1,j}=0\,\,\forall j\in[m_2]$. Similarly, $\widetilde C_{j,1}=0\,\,\forall j\in[m_2]$. Then $(\widetilde {B}D^{-1}C)_{1,j}=0 \,\,\forall j\in[m_1]$ and $(BD^{-1}\widetilde{C})_{j,1}=0 \,\,\forall j\in[m_1]$. This means that $(A-\widetilde{B}D^{-1}C)_{1,j}=A_{1,j} \,\,\forall j\in[m_1]$ and $(A-BD^{-1}\widetilde{C})_{j,1}=A_{j,1} \,\,\forall j\in[m_1]$. Furthermore, $\|A-BD^{-1}C-(A-\widetilde{B}D^{-1}C)\|_{\infty}\leq \|B-\widetilde{B}\|_{\infty}\|D^{-1}\|_{\infty}\|C\|_{\infty}$. Moreover, $\|C\|_{\infty}\leq \|M\|_{\infty}$ and $\sigma_{\min}(D)\geq \sigma_{\min}(M)$ as the matrix $M$ is positive definite, and hence $\|D^{-1}\|_{\infty}\leq \|M^{-1}\|_{\infty}$. Then $\|A-BD^{-1}C-(A-\widetilde{B}D^{-1}C)\|_{\infty}\leq  \|B-\widetilde{B}\|_\infty\|M^{-1}\|_{\infty}\|M\|_{\infty}$. In the same way,  $\|A-BD^{-1}C-(A-BD^{-1}\widetilde{C})\|_{\infty}\leq \|C-\widetilde{C}\|_\infty\|M^{-1}\|_{\infty}\|M\|_{\infty}$. The bounds in the statement follow.
\end{proof}

\begin{rem}\label{remarkcondition}{}$\|M\|_{\infty}\|M^{-1}\|_{\infty}$ is the condition number of $M$. Looking at the proof, it is clear that instead of $\|M\|_{\infty}\|M^{-1}\|_{\infty}$ we can leave $\|D^{-1}C\|_{\infty}$ in the bound statement. When $C$ contains very few columns with respect to the size of $D$, and $D$ has a large condition number, it would be natural to guess that $\|D^{-1}C\|_{\infty}$ does not scale as  $\|D^{-1}\|_{\infty}$, as the subspace spanned by the column space of $C$ may be not aligned with the eigenspaces with largest eigenvalues of $\|D^{-1}\|_{\infty}$. All the following bounds can be in principle improved using this fact. We prefer not to do it explicitly since a bound on $\|D^{-1}C\|_{\infty}$ is not a natural assumption in our context.
\end{rem}

We will now introduce some definitions and notations necessary to explain our results on the local inversion technique.
\begin{defi}[Neighborhood structure] \label{neighborhoodstruct} A collection $\mathcal{N}=\{\mathcal{N}_i\}_{i\in[m]}$ of subsets of $[m]$ is called a neighborhood structure if for all $i\in[m]$, $i\in\mathcal{N}_i$; (b) for all $i,j\in [m]$, $i\in\mathcal{N}_j\Rightarrow j\in\mathcal{N}_i$.
\end{defi}
\begin{rem}
We use the term neighborhood to follow the convention used in classical papers on learning graphical models. But we emphasize that the $\mathcal{N}_i$ will \emph{not} be the neighborhood of a vertex $i$ of the interaction graph. Rather, the $\mathcal{N}_i$ will often correspond to candidates for the set of all vertices at most a certain distance away from $i$.
\end{rem}

\begin{defi}[Localization] 
Given a $2m\times 2m$ self-adjoint matrix $M$ and a neighborhood structure $\mathcal{N}=\{\mathcal{N}_i\}_{i\in [m]}$, let $\operatorname{LOC}_{\mathcal{N}}(M)$ be the matrix obtained from $M$ by keeping the matrix elements given by the neighborhood structure $\mathcal{N}$, that is, we define $\operatorname{LOC}_{\mathcal{N}}(M)_{2i-\delta_1,2j-\delta_2}=M_{2i-\delta_1,2j-\delta_2}$ whenever $j\in \mathcal{N}_i$ and $\delta_1,\delta_2\in\{0,1\}$, and set all other entries to $0$. Note that, by construction, $\operatorname{LOC}_{\mathcal{N}}(M)$ is self-adjoint and has at most $2\max_i|\mathcal{N}_i|$ non-zero element per row.
\end{defi}
Given a neighborhood structure, we will construct approximate inverses of matrices by "sticthing together" the inverses of various submatrices with only entries in the neighborhood structure. More formally, we have:
\begin{defi}[Local inversion]\label{constrappr1}
Let $N$ be a positive definite $2m\times 2m$ matrix and let $\mathcal{N}=\{\mathcal{N}_i\}_{i\in[m]}$ a neighborhood structure of $[m]$. We denote by $\operatorname{LI}_{\mathcal{N}}(N)$ the matrix constructed in the following way: for each $i\in[m]$, let $N_i$ be the positive definite, $2|\mathcal{N}_i|\times 2|\mathcal{N}_i|$ matrix with entries $N_{2j_1-\delta_1,2j_2-\delta_2}$ corresponding to indices $j_1,j_2\in \mathcal{N}_{i}$, $\delta,\delta'\in \{0,1\}$ (i.e. we select modes in $\mathcal{N}_i$). Then, for each $i\in[m]$, $j\in \mathcal{N}_{i}$ and $\delta,\delta'\in \{0,1\}$, we define $\left(\operatorname{LI}_{\mathcal{N}_i}(N)\right)_{2i-\delta,2j-\delta'}$ to be $\left({N_{i}^{-1}}\right)_{2i-\delta,2j-\delta'}$ and $\left(\operatorname{LI}_{\mathcal{N}}(N)\right)_{2i-\delta,2j-\delta'}$ to be equal to $\tfrac{1}{2}\left[\left({N_{i}^{-1}}\right)_{2i-\delta,2j-\delta'}+\left({N_{j}^{-1}}\right)^*_{2j-\delta',2i-\delta}\right]$, and set the other entries of $\operatorname{LI}_{\mathcal{N}}(N)$ to $0$. Note that, by construction, $\operatorname{LI}_{\mathcal{N}}(N)$ is self-adjoint, as both $\left({N_{i}^{-1}}\right)$ and $\left({N_{j}^{-1}}\right)$ are.
\end{defi}
Note that all inverses exist, as they are inverses of submatrices of a positive matrix.
We then have the following first lemma bounding the difference between the local inverse and the true inverse:

\begin{lem}\label{th:Hsparsebound}
Let $M$ be a positive definite matrix and let $\mathcal{N}=\{\mathcal{N}_i\}_{i\in[m]}$ be a neighborhood structure of $[m]$. For each $i\in[m]$ and $\delta\in\{0,1\}$, let $\left(\begin{smallmatrix}
    A_{i,\delta} &B_{i,\delta}\\ C_{i,\delta}&D_{i,\delta}
\end{smallmatrix}\right)\in\mathbb{C}^{2m}\times \mathbb{C}^{2m}$, with blocks $A_{i,\delta}\in \mathbb{C}^{2|\mathcal{N}_i|}\times\mathbb{C}^{2|\mathcal{N}_i|} ,B_{i,\delta}\in \mathbb{C}^{2|\mathcal{N}_i|}\times\mathbb{C}^{2m-2|\mathcal{N}_i|} ,C_{i,\delta}\in \mathbb{C}^{2m-2|\mathcal{N}_i|}\times\mathbb{C}^{2|\mathcal{N}_i|} ,D_{i,\delta}\in \mathbb{C}^{2m-2|\mathcal{N}_i|}\times\mathbb{C}^{2m-2|\mathcal{N}_i|} $ be the matrix $\pi_{i,\delta}M \pi_{i,\delta}^{-1}$ where $\pi_{i,\delta}$ is a permutation that brings the coefficients $2j-\delta'$ with $\delta'\in\{0,1\}$ and $j\in\mathcal{N}_i$ to the upper-left block $A_{i,\delta}$, in such a way that index $2i-\delta$ is mapped to $2-\delta$. Let $\widetilde{B}_{i,\delta}$ the matrix constructed from $B_{i,\delta}$ by putting the first row to zero. Let $E$ be a self-adjoint matrix such that $\operatorname{LOC}_{\mathcal{N}}(M-E)=M-E$. Then,
\begin{equation}\label{eq:opnormbound}
\|M-\operatorname{LI}_{\mathcal{N}}(M^{-1})\|_{\infty}\leq 2\max_{i\in [m]}\max_{\delta\in\{0,1\}}{|\mathcal{N}_i|}\|B_{i,\delta}-\widetilde{B}_{i,\delta}\|_{\infty} \|M^{-1}\|_{\infty}\|M\|_{\infty}+(2\max_{i\in [m]}{|\mathcal{N}_i|}+1)\|E\|_{\infty}
\end{equation}
\end{lem}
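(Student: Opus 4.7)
The strategy is to extract the off-$\mathcal{N}$ part of $M$ via the triangle inequality \emph{before} passing to any row-sum estimate, so that the remaining difference becomes a self-adjoint matrix supported on the neighborhood structure, and its entries can be controlled pointwise by Lemma~\ref{theoapprinvsparse}.

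Concretely, I would start by writing
\begin{equation*}
M - \operatorname{LI}_{\mathcal{N}}(M^{-1}) \;=\; \bigl[(M-E)-\operatorname{LI}_{\mathcal{N}}(M^{-1})\bigr] + E
\end{equation*}
and applying the triangle inequality in operator norm to isolate the lone summand $\|E\|_{\infty}$. Setting $Y:=(M-E)-\operatorname{LI}_{\mathcal{N}}(M^{-1})$, the hypothesis $\operatorname{LOC}_{\mathcal{N}}(M-E)=M-E$ together with the construction of $\operatorname{LI}_{\mathcal{N}}$ guarantees that $Y$ is self-adjoint and supported on $\mathcal{N}$; in particular, row $2i-\delta$ of $Y$ has at most $2|\mathcal{N}_i|$ non-zero entries, all at columns $2j-\delta'$ with $j\in\mathcal{N}_i$, $\delta'\in\{0,1\}$.

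Next I would use the Gershgorin-type inequality $\|Y\|_{\infty}\leq \max_{i,\delta}\sum_{j,\delta'}|Y_{2i-\delta,2j-\delta'}|$ valid for Hermitian $Y$, reducing the task to a uniform entrywise bound on this sparse support. For each $j\in\mathcal{N}_i$, Lemma~\ref{theoapprinvsparse} applied to the block decomposition $\pi_{i,\delta}M\pi_{i,\delta}^{-1}=\left(\begin{smallmatrix}A_{i,\delta}&B_{i,\delta}\\ C_{i,\delta}&D_{i,\delta}\end{smallmatrix}\right)$---whose Schur complement is exactly $(M_i^{-1})^{-1}$ with position $2i-\delta$ mapped to row $1$---yields
\begin{equation*}
\bigl|((M_{i}^{-1})^{-1})_{2i-\delta,2j-\delta'}-M_{2i-\delta,2j-\delta'}\bigr|\leq \|B_{i,\delta}-\widetilde{B}_{i,\delta}\|_{\infty}\|M^{-1}\|_{\infty}\|M\|_{\infty}.
\end{equation*}
The same bound with the roles of $(i,\delta)$ and $(j,\delta')$ exchanged controls the other half of the symmetrized entry defining $\operatorname{LI}_{\mathcal{N}}(M^{-1})_{2i-\delta,2j-\delta'}$, while $|E_{2i-\delta,2j-\delta'}|\leq \|E\|_{\infty}$ bounds the $E$-piece entrywise. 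Summing the at most $2|\mathcal{N}_i|$ non-zero contributions in row $2i-\delta$ and then maximizing over $(i,\delta)$ produces the factor $2\max_{i,\delta}|\mathcal{N}_i|$ in the local-inversion term together with an additional $2\max_i|\mathcal{N}_i|\cdot\|E\|_{\infty}$, which combines with the $\|E\|_{\infty}$ isolated at the outset into the coefficient $(2\max_i|\mathcal{N}_i|+1)$ in \eqref{eq:opnormbound}.

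The main pitfall one must avoid is applying a row-sum bound to $M-\operatorname{LI}_{\mathcal{N}}(M^{-1})$ directly: the entries in columns $2j-\delta'$ with $j\notin\mathcal{N}_i$ are exactly $E_{2i-\delta,2j-\delta'}$ by the hypothesis on $E$, and summing them over potentially $\Theta(m-|\mathcal{N}_i|)$ non-neighbors produces a term growing with the global system size $m$, defeating the whole purpose of the lemma. The seemingly minor reordering---first pulling the full $E$ out of the operator norm, and only afterwards invoking sparsity of the remainder $Y$---is precisely what keeps the coefficient of $\|E\|_{\infty}$ independent of $m$.
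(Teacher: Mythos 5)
Your proposal is correct and follows essentially the same route as the paper's proof: pull out $E$ by the triangle inequality first, observe that $(M-E)-\operatorname{LI}_{\mathcal{N}}(M^{-1})$ is self-adjoint and supported on $\mathcal{N}$, bound its entries via Lemma~\ref{theoapprinvsparse} (applied symmetrically to the two halves of the local-inverse entry) plus $\|E\|_\infty$, and conclude with Gershgorin using the $2\max_i|\mathcal{N}_i|$ sparsity per row. The accounting of the $(2\max_i|\mathcal{N}_i|+1)\|E\|_\infty$ coefficient also matches the paper exactly.
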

\begin{proof}

By a triangle inequality, we have 
 \begin{align*}
 \|M-\operatorname{LI}_{\mathcal{N}}(M^{-1})\|_{\infty}&\le\|E\|_\infty+\|M-E-\operatorname{LI}_{\mathcal{N}}(M^{-1})\|_{\infty}\,.
 \end{align*}
 It remains to bound the last operator norm above. Since $\operatorname{LOC}_{\mathcal{N}}(M-E)-\operatorname{LI}_{\mathcal{N}}(M^{-1})=M-E-\operatorname{LI}_{\mathcal{N}}(M^{-1})$ is self-adjoint, it suffices to find a uniform bound on its eigenvalues. By the Gershgorin circle theorem, denoting the matrix $L:=\operatorname{LOC}_{\mathcal{N}}(M-E)-\operatorname{LI}_{\mathcal{N}}(M^{-1})$, these eigenvalues are situated in the $2m$ intervals
 \begin{align*}
     \left[L_{k,k}-\sum_{k'\ne k}\big|L_{k,k'}\big|,\,L_{k,k}+\sum_{k'\ne k}\big|L_{k,k'}\big|\right],\, k\in[2m]\,.
 \end{align*}
 
 Now for any $i\in[m]$, $j\in\mathcal{N}_i$ and $\delta'\in\{0,1\}$, we have by the construction of the local inverse matrix, denoting $N=M^{-1}$:
 \begin{align}
 \Big|M_{2i-\delta,2j-\delta'}-\big(\operatorname{LI}_{\mathcal{N}}(M^{-1})\big)_{2i-\delta,2j-\delta'}\Big|&\leq\frac{1}{2}\Big|M_{2i-\delta,2j-\delta'}-\big(N_{i}^{-1}\big)_{2i-\delta,2j-\delta'}\Big|\nonumber\\
 &+\frac{1}{2}\Big|M_{2i-\delta,2j-\delta'}-\big(N_{j}^{-1}\big)^*_{2j-\delta',2i-\delta}\Big|\nonumber\,.
 \end{align}
  Via Lemma~\ref{theoapprinvsparse} applied to the matrices  $\pi_{i,\delta}M\pi_{i,\delta}^{-1}$ and $\pi_{j,\delta}M\pi_{j,\delta}^{-1}$,
\begin{align*}
\Big|M_{2i-\delta,2j-\delta'}-\big(N_{i}^{-1}\big)_{2i-\delta,2j-\delta'}\Big|&\leq  \|B_{i,\delta}-\widetilde{B}_{i,\delta}\|_\infty \|M^{-1}\|_{\infty}\|M\|_{\infty}\, ,\\
\Big|M_{2i-\delta,2j-\delta'}-\big(N_{j}^{-1}\big)^*_{2j-\delta',2i-\delta}\Big|&=\Big|M^*_{2j-\delta',2i-\delta}-\big(N_{j}^{-1}\big)^*_{2j-\delta',2i-\delta}\Big|
&\!\!\le  \!\|B_{j,\delta}-\widetilde{B}_{j,\delta}\|_\infty \|M^{-1}\|_{\infty}\|M\|_{\infty}
\end{align*}
and hence $|L_{2i-\delta,2j-\delta'}|\leq |(M-\big(\operatorname{LI}_{\mathcal{N}}(M^{-1})\big)_{2i-\delta,2j-\delta'}|+|E_{2i-\delta,2j-\delta'}|\leq \max_{i\in[m]}\max_{\delta\in\{0,1\}}\|B_{i,\delta}-\widetilde{B}_{i,\delta}\|_\infty \|M^{-1}\|_{\infty}\|M\|_{\infty} +\|E\|_{\infty}$. Using Gershgorin theorem together with the fact that the matrix $L$ has at most $2\max_i|\mathcal{N}_i|$ non-zero coefficients per row, we get
\begin{align*}
 \|L\|_\infty\le 2\max_{i\in[m]}\max_{\delta\in\{0,1\}}|\mathcal{N}_i|\left(\|B_{i,\delta}-\widetilde{B}_{i,\delta}\|_\infty \|M^{-1}\|_{\infty}\|M\|_{\infty} +\|E\|_{\infty}\right)\,.
\end{align*}
The result follows.
 
\end{proof}

However, in our setting we will only have access to approximations of the true covariance matrix when performing the local inversions. In the next lemma we bound the effect caused by this, i.e. when we only have access to an entrywise approximation of the inverse of a matrix:

\begin{lem}\label{th:approxlocinvbound}
In the notations of Lemma \ref{th:Hsparsebound}, given a $2m\times 2m$ positive definite matrix $M$, $N=M^{-1}$ and $\hat{N}$ an entrywise approximation of $N$, where $|\hat{N}_{k,k'}-N_{k,k'}|\le \zeta$ for all $k,k'\in[2m]$ with $\zeta\le \frac{1}{4\max_{i\in[m]}|\mathcal{N}_i|}\|M\|_\infty^{-1}$,

\begin{align}
\|M-\operatorname{LI}_{\mathcal{N}}(\hat{N})\|_{\infty}&\leq 2\max_{i\in [m]}\max_{\delta\in\{0,1\}}{|\mathcal{N}_i|}\|B_{i,\delta}-\widetilde{B}_{i,\delta}\|_{\infty} \|M^{-1}\|_{\infty}\|M\|_{\infty}+(2\max_{i\in [m]}{|\mathcal{N}_i|}+1)\|E\|_{\infty}\nonumber\\
&\,+8\|M\|_{\infty}^{2}\max_{i}|\mathcal{N}_i|^2\zeta\,\nonumber.
\end{align}

\end{lem}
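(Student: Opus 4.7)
The plan is to peel off the approximation error via a triangle inequality against the exact local inversion and then invoke Lemma~\ref{th:Hsparsebound} for the main term, leaving only a perturbation-of-inverses argument for the new term. Concretely, I would write
\begin{equation*}
\|M-\operatorname{LI}_{\mathcal{N}}(\hat{N})\|_{\infty}\;\le\;\|M-\operatorname{LI}_{\mathcal{N}}(N)\|_{\infty}+\|\operatorname{LI}_{\mathcal{N}}(N)-\operatorname{LI}_{\mathcal{N}}(\hat{N})\|_{\infty},
\end{equation*}
bound the first summand directly by Lemma~\ref{th:Hsparsebound} (which produces exactly the first two terms of the claimed inequality), and reduce the task to controlling the second summand.

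For the second summand, I would use that $\operatorname{LI}_{\mathcal{N}}$ is built from the local inverses $N_i^{-1}$ and $\hat{N}_i^{-1}$ of principal submatrices of $N=M^{-1}$ and $\hat{N}$. The perturbation identity \eqref{inversebound} gives
\begin{equation*}
\|N_i^{-1}-\hat{N}_i^{-1}\|_{\infty}\;\le\;\|N_i^{-1}\|_{\infty}\,\|\hat{N}_i^{-1}\|_{\infty}\,\|N_i-\hat{N}_i\|_{\infty},
\end{equation*}
and the entrywise hypothesis $|\hat{N}_{k,k'}-N_{k,k'}|\le\zeta$ yields $\|N_i-\hat{N}_i\|_{\infty}\le 2|\mathcal{N}_i|\zeta$ via the trivial entrywise-to-operator-norm bound on a $2|\mathcal{N}_i|\times 2|\mathcal{N}_i|$ matrix.

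The key ingredient, which I expect to be the only nontrivial step, is bounding $\|N_i^{-1}\|_{\infty}$ and $\|\hat{N}_i^{-1}\|_{\infty}$ uniformly by $\mathcal{O}(\|M\|_{\infty})$. For this I would use the Cauchy interlacing theorem: since $N_i$ is a principal submatrix of the Hermitian positive definite $N=M^{-1}$, we have $\lambda_{\min}(N_i)\ge\lambda_{\min}(N)=\|M\|_{\infty}^{-1}$, so $\|N_i^{-1}\|_{\infty}\le\|M\|_{\infty}$. The smallness assumption $\zeta\le\frac{1}{4\max_i|\mathcal{N}_i|}\|M\|_{\infty}^{-1}$ then guarantees $\|N_i-\hat{N}_i\|_{\infty}\le\tfrac{1}{2}\|M\|_{\infty}^{-1}$, and a Weyl-type perturbation gives $\lambda_{\min}(\hat{N}_i)\ge\tfrac{1}{2}\|M\|_{\infty}^{-1}$, hence $\|\hat{N}_i^{-1}\|_{\infty}\le 2\|M\|_{\infty}$. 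Plugging back in,
\begin{equation*}
\|N_i^{-1}-\hat{N}_i^{-1}\|_{\infty}\;\le\;4\|M\|_{\infty}^{2}|\mathcal{N}_i|\zeta.
\end{equation*}

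Finally, the matrix $\operatorname{LI}_{\mathcal{N}}(N)-\operatorname{LI}_{\mathcal{N}}(\hat{N})$ is self-adjoint, has at most $2\max_i|\mathcal{N}_i|$ non-zero entries per row, and each entry is a symmetrized average of two such differences, hence is bounded in absolute value by $4\|M\|_{\infty}^{2}\max_i|\mathcal{N}_i|\,\zeta$. Applying the Gershgorin circle theorem as in the proof of Lemma~\ref{th:Hsparsebound} yields
\begin{equation*}
\|\operatorname{LI}_{\mathcal{N}}(N)-\operatorname{LI}_{\mathcal{N}}(\hat{N})\|_{\infty}\;\le\;8\|M\|_{\infty}^{2}\max_{i}|\mathcal{N}_i|^{2}\,\zeta,
\end{equation*}
which is exactly the third additive term in the statement. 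Combining with Lemma~\ref{th:Hsparsebound} closes the argument. The single delicate point is the use of Cauchy interlacing to avoid introducing any new spectral quantity beyond $\|M\|_{\infty}$ and $\|M^{-1}\|_{\infty}$; everything else is bookkeeping.
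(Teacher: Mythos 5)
Your proposal is correct and follows essentially the same route as the paper's proof: the same triangle-inequality split against $\operatorname{LI}_{\mathcal{N}}(N)$, the same invocation of Lemma~\ref{th:Hsparsebound} for the first summand, and for the second the same chain of the inverse-perturbation bound \eqref{inversebound}, the entrywise-to-operator bound $\|N_i-\hat N_i\|_\infty\le 2|\mathcal{N}_i|\zeta$, the lower bound $\sigma_{\min}(N_i)\ge\sigma_{\min}(M^{-1})=\|M\|_\infty^{-1}$ for principal submatrices, the perturbed bound $\sigma_{\min}(\hat N_i)\ge\tfrac12\|M\|_\infty^{-1}$ from the smallness assumption on $\zeta$, and a final Gershgorin argument. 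The constants match the paper's exactly.
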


\begin{proof}
By the triangle inequality,
\begin{align*}
\|M-\operatorname{LI}_{\mathcal{N}}(\hat{N})\|_{\infty}\leq \|M-\operatorname{LI}_{\mathcal{N}}(N)\|_{\infty}+\|\operatorname{LI}_{\mathcal{N}}(N)-\operatorname{LI}_{\mathcal{N}}(\hat{N})\|_{\infty}.
\end{align*}
The first term can be bounded as in Lemma~\ref{th:Hsparsebound}. The second term can be bounded using the fact that both $\operatorname{LI}_{\mathcal{N}}(N)$ and $\operatorname{LI}_{\mathcal{N}}(\hat{N})$ have at most $2{\max_{i\in[m]}|\mathcal{N}_i|}$ non-zero elements on each column, and each non-zero entry can be bounded using the perturbation bound for the inverse (cf. Equation~\eqref{inversebound}): for each $i\in[m]$ $j\in \mathcal{N}_i$, $\delta_1,\delta_2\in \{0,1\}$, we have by the equivalence between the max norm and operator norm of matrices:
\begin{align}
\Big|\big(\operatorname{LI}_{\mathcal{N}}(\hat{N})\big)_{2i-\delta_1,2j-\delta_2}-\big(\operatorname{LI}_{\mathcal{N}}({N})\big)_{2i-\delta_1,2j-\delta_2}\Big|
&\leq \frac{1}{2}\left(\|N_i^{-1}-(\hat{N}_i)^{-1}\|_{\infty}    +\|N_j^{-1}-(\hat{N}_j)^{-1}\|_{\infty} \right)\nonumber\\
&\leq \max_{i\in[m]}\|N_i^{-1}\|_{\infty}\|(\hat{N}_i)^{-1}\|_{\infty}\|N_i-\hat{N}_i\|_{\infty}\nonumber\\&\leq \max_{i\in[m]}\|N_i^{-1}\|_{\infty}\|(\hat{N}_i)^{-1}\|_{\infty}\|N_i-\hat{N}_i\|_{\infty}\nonumber\\
&\leq \max_{i\in[m]}\sigma_{\min}(N_i)^{-1}\sigma_{\min}(\hat{N}_i)^{-1}2|\mathcal{N}_i|\zeta\nonumber\\
&\leq \max_{i\in[m]}\sigma_{\min}(M^{-1})^{-1} (\sigma_{\min}({N}_i)-\|N_i-\hat{N}_i\|_{\infty})^{-1}2|\mathcal{N}_i|\zeta\nonumber\\
&\leq\max_{i\in[m]}\|M\|_{\infty}(\|M\|^{-1}_{\infty}-2|\mathcal{N}_i|\zeta)^{-1}2|\mathcal{N}_i|\zeta\nonumber\\
&\le \max_{i\in[m]}4\|M\|_{\infty}^2|\mathcal{N}_i|\zeta \,,\label{eq:inversentry}
\end{align}
where we used that $\|N_i-\hat{N}_{i}\|_{\infty}\leq \|N_{i}-\hat{N}_{i}\|_{2}\leq2|\mathcal{N}_i|\zeta$, $\sigma_{\min}(N_{i})\geq \sigma_{\min}(M^{-1})$, and the condition on $\zeta$ in the theorem statement. The bound on $\|\operatorname{LI}_{\mathcal{N}}(N)-\operatorname{LI}_{\mathcal{N}}(\hat{N})\|_{\infty}$ follows again from the use of Gershgorin's theorem.
\end{proof}
We now apply these perturbation bounds to the estimation of $2i\Omega V-I=i\Omega(2V-i\Omega)$. First we recall the following definition

\begin{defi}[Interaction graph]
 Given a Hamiltonian matrix $H$, its interaction graph is a graph $G=([m],\mathsf{E})$ made of $m$ vertices, where two vertices $i,j\in [m]$ are connected by an edge whenever the $2\times 2$ sub-matrix
%$\left\{\left(\frac{1}{2}\sum_{k=1}^{l}\frac{\left( 2Hi\Omega\right)^{k}}{k!}\right)_{2i-\delta,2j-\delta'}\right\}
$\left\{H_{2i-\delta,2j-\delta'}\right\}_{\delta,\delta'\in\{0,1\}}$ is non-zero.
\end{defi}
We also define a neighborhood structure that will play an important role in our proofs.

\begin{defi}[$l$-neighborhoods]\label{nbhdl}
Given $l\in\mathbb{N}$ and a Hamiltonian matrix $H$, let $G$ be interaction graph of $H$. 
The $l$-neighborhood of a site $i\in [m]$ is the set of indices $j$ in $[m]$ that are at distance at most $l$ from $i$, according to the graph $G$. We denote the set of $l$-neighborhoods as $\mathcal{N}(l)=\{\mathcal{N}(l)_i\}_{i\in [m]}$ and $\xi(l):=\max_{i\in[m]}|\mathcal{N}(l)_i|$. Clearly, any set of $l$-neighborhood satisfies the definition \ref{neighborhoodstruct} of a neighborhood structure. 
\end{defi}

\begin{lem}\label{boundkernellocal}

Let $V$ be a bona-fide $m$-mode covariance matrix with associated symplectic matrix $S$ given by Equation \eqref{eq:SdecV}. For $l\geq1$, let {$\zeta<\frac{1}{2(\xi(l))}\frac{e^{-2d_{\max}}}{\|S\|_\infty^2(1-e^{-2d_{\max}})}$} and a matrix $\hat{V}$ such that $|\hat{V}_{k,k'}-V_{k,k'}|\leq{\zeta}$ for and any $k,k'\in[2m]$. 
Then, $\operatorname{LI}_{\mathcal{N}(l)}(2\hat{V}-i\Omega)$ is well defined, and

\begin{align}\label{equ:error_localinverse}
&\|(2V-i\Omega)^{-1}-\operatorname{LI}_{\mathcal{N}(l)}(2\hat{V}-i\Omega)\|_{\infty} \nonumber\\
&\quad \le \xi(l)\|S\|_{\infty}^6\frac{(2d_{\max})^{l+1}e^{4d_{\max}}}{(l+1)!}{\frac{1-e^{-2d_{\max}}}{1-e^{-2d_{\min}}}}+(2\xi(l)+1)\frac{\|S\|_{\infty}^2}{2}\frac{(2d_{\max})^{l+1}e^{2d_{\max}}}{(l+1)!}\\&+
{2\|S\|^4_\infty e^{4 d_{\max}}(1-e^{-2d_{\max}})^2}\xi(l)^2\zeta \nonumber\,.
\end{align}
\end{lem}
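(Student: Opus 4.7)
The plan is to invoke Lemma~\ref{th:approxlocinvbound} with the identifications $M = (2V-i\Omega)^{-1}$, $N = M^{-1} = 2V-i\Omega$, $\hat{N} = 2\hat{V}-i\Omega$, and neighborhood structure $\mathcal{N} = \mathcal{N}(l)$. The uncertainty relation makes $M$ positive definite, and the entrywise error $|\hat{N}_{k,k'}-N_{k,k'}| = 2|\hat{V}_{k,k'}-V_{k,k'}| \le 2\zeta$ tells us that lemma applies with its approximation parameter equal to $2\zeta$. The auxiliary matrix $E$ that the lemma requires is supplied by Lemma~\ref{th:seriesapprox}, which decomposes
\[
M = \tfrac{1}{2}\sum_{n=1}^{l}\frac{(2Hi\Omega)^n}{n!}(i\Omega) + E,\qquad \|E\|_\infty \le \tfrac{\|S\|_\infty^2}{2}\tfrac{(2d_{\max})^{l+1}e^{2d_{\max}}}{(l+1)!}.
\]

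Since $H$ has interaction graph $G$ and $i\Omega$ is block-diagonal (so their product preserves block sparsity), each power $(Hi\Omega)^n$ has non-zero $2\times 2$ blocks only between vertex pairs at graph distance $\le n$. Hence the truncated sum is supported on $\mathcal{N}(l)$, i.e., $\operatorname{LOC}_{\mathcal{N}(l)}(M-E) = M-E$, and self-adjointness of $E$ follows since both $M$ and the truncated sum are self-adjoint. The same sparsity controls the off-diagonal blocks appearing in Lemma~\ref{th:approxlocinvbound}: the unique non-zero row of $B_{i,\delta} - \widetilde{B}_{i,\delta}$ consists of the entries $M_{2i-\delta, 2j'-\delta''}$ with $j'\notin\mathcal{N}(l)_i$, and by the above these equal $E_{2i-\delta, 2j'-\delta''}$. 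Since the operator norm of a one-row matrix is the Euclidean norm of its row, bounded by $\|E\,e_{2i-\delta}\|_2 \le \|E\|_\infty$, one obtains $\|B_{i,\delta}-\widetilde{B}_{i,\delta}\|_\infty \le \|E\|_\infty$ uniformly in $i,\delta$.

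The final ingredient is the operator-norm control of $M$ and $M^{-1}$ via the symplectic diagonalization. From the representation $M = \tfrac{1}{2}(e^{2Hi\Omega}-I)(i\Omega)$ and the factorization $e^{2Hi\Omega} = S^{-\intercal}e^{2Di\Omega}S^\intercal$, together with the fact that $i\Omega_{2}$ has eigenvalues $\pm 1$, one obtains $\|M\|_\infty \le \tfrac{\|S\|_\infty^2}{2}e^{2d_{\max}}(1-e^{-2d_{\max}})$, and an analogous manipulation (as already used in Section~\ref{sec:conntinuity_bounds}) yields $\|M^{-1}\|_\infty \le \tfrac{2\|S\|_\infty^2}{1-e^{-2d_{\min}}}$. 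The bound on $\|M\|_\infty$ combined with the hypothesis on $\zeta$ verifies $2\zeta \le \tfrac{1}{4\xi(l)}\|M\|_\infty^{-1}$, making Lemma~\ref{th:approxlocinvbound} applicable. Substituting the three bounds into that lemma, the term $2\xi(l)\|E\|_\infty\|M\|_\infty\|M^{-1}\|_\infty$ produces the first summand of the claimed inequality, $(2\xi(l)+1)\|E\|_\infty$ produces the second, and $8\|M\|_\infty^2\xi(l)^2(2\zeta)$ produces the third. The main obstacle is not conceptual but rather the careful bookkeeping of constants in the norm bounds on $(2V-i\Omega)^{\pm 1}$; the idea is simply to marry the Taylor-truncation of Lemma~\ref{th:seriesapprox} with the local-inversion perturbation bound of Lemma~\ref{th:approxlocinvbound}.
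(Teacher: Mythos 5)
Your proposal follows essentially the same route as the paper's proof: apply Lemma~\ref{th:approxlocinvbound} with $M=(2V-i\Omega)^{-1}$, $\hat N = 2\hat V - i\Omega$ and the truncation error $E$ from Lemma~\ref{th:seriesapprox}, verify $\operatorname{LOC}_{\mathcal{N}(l)}(M-E)=M-E$ via the sparsity of $(Hi\Omega)^n$, bound $\|B_{i,\delta}-\widetilde B_{i,\delta}\|_\infty\le\|E\|_\infty$ as a sub-row of $E$, and control $\|M\|_\infty,\|M^{-1}\|_\infty$ through the symplectic diagonalization bounds of Lemma~\ref{lem:bound}. The only discrepancy is a factor of $2$ in the last summand: you (correctly) track that the entrywise error of $\hat N$ relative to $N$ is $2\zeta$, which yields $4\|S\|_\infty^4 e^{4d_{\max}}(1-e^{-2d_{\max}})^2\xi(l)^2\zeta$ rather than the stated $2\|S\|_\infty^4 e^{4d_{\max}}(1-e^{-2d_{\max}})^2\xi(l)^2\zeta$ — the paper silently plugs $\zeta$ into Lemma~\ref{th:approxlocinvbound}, so this is a constant-factor bookkeeping issue in the lemma statement rather than a gap in your argument.
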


\begin{proof}
First we note that by the condition on $\zeta$ we have that the restriction to each neighborhood $\mathcal{N}(l)_i$ of $2\hat{V}-i\Omega$ is strictly positive and, as such, all the local inverses exist.  
More specifically, by Lemma~\ref{th:approxlocinvbound} applied to the matrices $M=(2V-i\Omega)^{-1}$, $\hat{N}=2\hat{V}-i\Omega$, the neighborhood $\mathcal{N}(l)$ of Definition \ref{nbhdl} and $E$ as defined in 
Lemma \ref{th:seriesapprox}, we have that 
\begin{align*}
&\|(2V-i\Omega)^{-1}-\operatorname{LI}_{\mathcal{N}(l)}(2\hat{V}-i\Omega)\|_{\infty}\\
& \qquad \leq 2\max_{\delta\in\{0,1\}}{\xi(l)}\|B_{i,\delta}-\widetilde{B}_{i,\delta}\|_{\infty} \|M^{-1}\|_{\infty}\|M\|_{\infty} +(2{\xi(l)}+1)\|E\|_{\infty}+8\|M\|_{\infty}^{2}\xi(l)^2\zeta 
\end{align*}
as long as $\zeta\le \frac{1}{4\xi(l)}\|M\|_\infty^{-1}$ and $M-E=\operatorname{LOC}_{\mathcal{N}(l)}(M-E)$. The first condition is satisfied by Equation {\eqref{eq:minsing1} in Lemma~\ref{lem:bound}, which states that $\|M\|_\infty^{-1}\ge \frac{2e^{-2d_{\max}}}{\|S\|_\infty^2(1-e^{-2d_{\max}})}$}. 
The second condition follows from the series approximation of Lemma \ref{th:seriesapprox}, since 
\begin{align*}
M-E=(2V-i\Omega)^{-1}-E=\frac{1}{2}\sum_{n=1}^{l}\frac{\left( 2Hi\Omega\right)^{n}}{n!}(i\Omega)\,,
\end{align*}
so that, for any two vertices $i,j$ of the graph $G$, $(M-E)_{2i-\delta,2j-\delta'}$ is non-zero only if $j$ is at distance at most $l$ from $i$. Using once again Equation \eqref{eq:normbound1} together with \eqref{eq:minsing1} of Lemma~\ref{lem:bound}, as well as by noticing that, for any $i\in[m]$ and $\delta\in\{0,1\}$, $B_{i,\delta}-\widetilde{B}_{i,\delta}$ is a sub-row of $E$ so that $\|B_{i,\delta}-\widetilde{B}_{i,\delta}\|_\infty\le \|E\|_\infty$, we find the following bound:  
\begin{align*}
\|M-\operatorname{LI}_{\mathcal{N}(l)}(\hat{N})\|_\infty&\le 2\xi(l)\|E\|_\infty\|S\|_\infty^4\frac{(1-e^{-2d_{\max}})e^{2d_{\max}}}{1-e^{-2d_{\min}}}+(2\xi(l)+1)\|E\|_\infty\\&+{2\|S\|^4_\infty e^{4 d_{\max}}(1-e^{-2d_{\max}})^2}\xi(l)^2\zeta\,.
\end{align*}
Finally, the result follows after using the bound $\|E\|_{\infty}\leq \frac{\|S\|_{\infty}^2}{2}\left(\frac{(2d_{\max})^{l+1}e^{2d_{\max}}}{(l+1)!}\right)$ derived in Lemma~\ref{th:seriesapprox}.

\end{proof}

We now need the following lemma.
\begin{lem}\label{lem:Lambert}
Let $C_1,C_2,\epsilon>0$ and let $W(x)$ be the principal branch of the Lambert W function defined through $e^{W(x)}W(x)=x$ for $x>0$. 
If we take $l=\lfloor{(C_2e)\exp{\left(W\left(\frac{\log\frac{C_1}{\eps}}{C_2e}\right)\right)}}\rfloor$, then $C_1\frac{{C_2}^{l+1}}{(l+1)!}\leq \epsilon$ and $\lim_{\eps\rightarrow 0}\Delta^l\eps^\gamma=0$ for every $\gamma>0$.
\end{lem}
\begin{proof}
We have 
\[C_1\frac{(C_2)^{l+1}}{(l+1)!}\leq C_1\left(\frac{C_2e}{l+1}\right)^{l+1},\]
and taking logarithms, we have 
\begin{align}
\log\frac{C_1\left(\frac{C_2e}{l+1}\right)^{l+1}}{\epsilon}&= -(l+1)\log \left(\frac{l+1}{C_2e}\right)+\log\frac{C_1}{\epsilon}\nonumber\\&\leq-(C_2e)e^{W\left(\frac{\log\frac{C_1}{\eps}}{C_2e}\right)}W\left(\frac{\log\frac{C_1}{\eps}}{C_2e}\right)+\log\frac{C_1}{\epsilon}\nonumber\\
&=-C_2e\left(\frac{\log\frac{C_1}{\eps}}{C_2e}\right)+\log\frac{C_1}{\epsilon}=0,\label{eq:WLamb}
\end{align}
so that $C_1\left(\frac{C_2e}{l+1}\right)^{l+1}\leq \epsilon$. An upper bound on the Lambert W function is $W(x)\leq \log\left(\frac{2x+1}{1+\log (x+1)}\right)$, see~\cite{hoorfar2008inequalities}. So, with our choice,  $l\leq \frac{2\log\frac{C_1}{\eps} +C_2e}{1+\log\left({\frac{\log\frac{C_1}{\eps}}{C_2e}}+1\right)}$, and $\lim_{\eps\rightarrow 0}\Delta^l\eps^\gamma=0$ for every $\gamma>0$.

\end{proof}

The following remark is the basis for our sample complexity analysis.
\begin{rem}\label{rem:hbound} In the worst case, $\xi(l)\leq \{m,\Delta^{l}\}$, where $\Delta-1$ is the degree of the graph. Hence, the bound found in Lemma \ref{boundkernellocal} simplifies to

\begin{align*}
&\|(2V-i\Omega)^{-1}-\operatorname{LI}_{\mathcal{N}(l)}(2\hat{V}-i\Omega)\|_{\infty} \quad \le  C(S,d_{\min},d_{\max})\frac{(2\Delta d_{\max})^{l+1}}{(l+1)!}+C'(S,d_{\min},d_{\max}) \,\Delta^{2l} \zeta\,,
\end{align*}
with $C(S,d_{\min},d_{\max}):=4\|S\|_\infty^6{e^{4d_{\max}}\frac{1-e^{-2d_{\max}}}{1-e^{-2d_{\min}}}}$ and $C'(S,d_{\min},d_{\max}):={2\|S\|^4_\infty e^{4 d_{\max}} (1-e^{-2d_{\max}})^2}$. %32\|S\|_\infty^4/(1-e^{-2d_{\min}})^2$. 
Let us also denote $C''(\Delta,d_{\max})=2\Delta d_{\max}e$.
Therefore, by choosing 

\begin{align*}
&l= \lfloor{C''\exp{\left(W\left(\frac{\log\frac{C}{\eps'}}{C''}\right)\right)}}\rfloor,\, \quad \zeta=\min\left\{\frac{\epsilon'}{C'\Delta^{2l}},\,\frac{e^{-2d_{\max}}}{2\Delta^l\|S\|_\infty^2(1-e^{-2d_{\max}})}\right\}\,,
\end{align*}
we obtain, via Lemma~\ref{lem:Lambert}
\begin{align*}
\|(2V-i\Omega)^{-1}-\operatorname{LI}_{\mathcal{N}(l)}(2\hat{V}-i\Omega)\|_{\infty}\le 2\epsilon'\,.
\end{align*}
Using Equation~\eqref{eq:backtoh} as well as the perturbation bound in Lemma~\ref{th:backpert}, and denoting by $\hat{H}$ the matrix associated to $\hat{V}$ through the following integral representation:

\begin{align}
\hat{H}&=\operatorname{LI}_{\mathcal{N}}(2\hat{V}-i\Omega)i\Omega\int_{0}^{\infty}\frac{I}{I+\frac{t}{t+1}2\operatorname{LI}_{\mathcal{N}}(2\hat{V}-i\Omega)i\Omega}\frac{dt}{(t+1)^2}i\Omega\,,\label{eq:locinvtoh}
\end{align}
we get 
\begin{align*}
\|H-\hat{H}\|_{\infty}&\leq e^{2d_{\max}}\|S\|^2_{\infty}2\epsilon'\left(1+\frac{e^{2d_{\max}}\|S\|_{\infty}^2(1-e^{-2d_{\max}})+4\epsilon'}{e^{-2d_{\max}} \|S\|_{\infty}^{-2}-4\epsilon'}\right)\,,
\end{align*}
provided $4\epsilon'< e^{-2d_{\max}}\|S\|_{\infty}^{-2}$.
Choosing 
$\epsilon'=\epsilon \frac{e^{-2d_{\max}}\|S\|^{-2}_{\infty}}{8\left(1+{e^{4d_{\max}}\|S\|_{\infty}^4(1-e^{-2d_{\max}})}\right)}<\frac{e^{-2d_{\max}}\|S\|_{\infty}^{-2}}{4}$ we have

\begin{align*}
\left(1+\frac{e^{2d_{\max}}\|S\|_{\infty}^2(1-e^{-2d_{\max}})+4\epsilon'}{e^{-2d_{\max}} \|S\|_{\infty}^{-2}-4\epsilon'}\right)\leq 2\left(1+{e^{4d_{\max}}\|S\|_{\infty}^4(1-e^{-2d_{\max}})}\right)\,,
\end{align*}
and thus

\begin{align*}
\|H-\hat{H}\|_{\infty}&\leq \epsilon\,.
\end{align*}

\end{rem}

\subsection{Gaussian Hamiltonian learning}\label{sec:hamlearn}
The Hamiltonian learning problem consists on learning an estimate of the Hamiltonian of a Gaussian state from copies of it. More formally:
\begin{defi}[Problem 2: Gaussian Hamiltonian learning]\label{def:Hamiltonian_learning}
Let $H\in M^{>0}_{2m}(\mathbb{R})$ be a symmetric, positive definite matrix (later referred to as Gaussian Hamiltonian) and $t\in \mathbb{R}^{2m}$. For any such data $(H,t)$, one can construct a Gaussian state on $m$ modes: $$\rho(H,t)=\frac{e^{-(R-t)^{\intercal}H(R-t)}}{\Tr[e^{-(R-t)^{\intercal}H(R-t)}]}\,.$$ 
The Gaussian Hamiltonian learning problem consists in the following task: given a fixed precision $\epsilon>0$, provide estimates $\hat H$ of $H$ and $\hat t$ of $t$ such that $\|\hat{H}-H\|_{\infty}\leq \epsilon$ and $\|\hat t-t\|_\infty\le \epsilon$ with probability at least $1-\delta$, given $n$ copies of $\rho(H,t)$, for any $(H,t)$ in some specified subset $\mathcal{S}$. We denote by $\mathcal{N}_{\mathrm{HL}}(\mathcal{S},m,\epsilon,\delta)$ the sample complexity for Gaussian Hamiltonian learning. This corresponds to the minimum number $n$ of copies of the state $\rho(H,t)$ required for the task to succeed.  
\end{defi}
We consider two regimes of interest. It follows from Equation~\eqref{equ:error_localinverse} that the growth rate of $\xi(l)$, i.e. how many vertices are at a distance at most $l$, governs how good the approximation of the local inverse is. Thus, we will first work in the physically relevant regime where it is polynomial in $l$, such as in the case of lattices, and then handle the general case where we just assume a bound on the maximal degree of the graph.

\begin{thm}[Hamiltonian learning, polynomially growing neighborhoods]\label{th:polynomiallyham}
Let $H$ be a Gaussian Hamiltonian with interaction graph $G$ such that $\xi(l)\leq gl^r$ for some $g,r\geq1$, and such that $H={S^{-\intercal}DS^{-1}}$, with $S$ symplectic, $D$ positive definite, diagonal with entries $(d_1,d_1,..., d_m,d_m)$, and $d_{\max}=\max_{i\in{[m]}}d_i$, $d_{\min}=\min_{i\in{[m]}}d_i$. 
Furthermore, to simplify the presentation, we let:
\begin{align}
&C_1:=\frac{e^{4d_{\max}}\|S\|_{\infty}^6}{2}\frac{1-e^{-2d_{\max}}}{1-e^{-2d_{\min}}},\quad C_2:=\frac{e^{2d_{\max}}\|S\|_{\infty}^2}{2},\quad
   C_3:=2\|S\|^4_\infty e^{4 d_{\max}}(1-e^{-2d_{\max}})^2,\quad\\ &C_4:=3C_1 g(2d_{\max}e)^r\nonumber\,.
\end{align}
Given $1>\epsilon>0$ and $\delta\in (0,1)$ and for any $t\in\mathbb{R}^{2m}$,
\begin{align}
&N= \left(g\left[r+(2d_{\max}  e)e^{W(\frac{1}{2d_{\max}  e}\log{\frac{C_4}{\epsilon})}}\right]^{2r}\right)^{2}\times\frac{e^{4d_{\max}}\|S\|^{4}_{\infty}}{(8C_3\left(1+{e^{4d_{\max}}\|S\|_{\infty}^4(1-e^{-2d_{\max}}))}\right)^{-2}}\times \\&\frac{2^{14} (\|S\|^2_{\infty}\frac{1}{1 -e^{-2d_{\min}}}+\max_i|t_i|^2)^2(1+\max_i|t_i|)^2}{\epsilon^2}\log\left(\frac{4m+1}{\delta}\right)=\\
&\mathcal{O}\left(\epsilon^{-2}\log(m\delta^{-1})\left(g\left[r+(2d_{\max}  e)e^{W(\frac{1}{2d_{\max}  e}\log{\frac{C_4}{\epsilon})}}\right]^{2r}\right)^{2}\frac{\|S\|^{24}e^{12d_{\max}}(1-e^{-2d_{\max}})^2}{(1-e^{-2d_{\min}})^{2}}(1+\max_i|t_i|)^6)\right)
\end{align}
copies of the Gaussian state with parameters $(H,t)$ suffice to derive estimates $\hat H$ and $\hat t$ such that
 $\|\hat{H}-H\|_{\infty},\,\|\hat t-t\|_\infty\leq \epsilon$, with probability larger than $1-\delta$.
In particular, for $r=\mathcal{O}(1)$, focusing on the scaling in terms of $\epsilon$ and $m$ we get:
\begin{align}
N=\mathcal{O}(\epsilon^{-2}\log^{2r}(\epsilon^{-1})\log(m)).
\end{align}
\end{thm}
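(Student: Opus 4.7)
The plan is to chain three ingredients developed earlier: entrywise estimation of the covariance matrix via Lemma~\ref{lem:cov_matrix_2}, the local-inversion perturbation bound from Lemma~\ref{boundkernellocal} together with Remark~\ref{rem:hbound}, and the continuity bound of Lemma~\ref{th:backpert} applied via the integral representation in Equation~\eqref{eq:locinvtoh}, which converts an error on $(2V-i\Omega)^{-1}$ into an error on $H$.

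First, I would use heterodyne measurements and Lemma~\ref{lem:cov_matrix_2} to produce entrywise estimates $\hat V$ and $\hat t$ satisfying $\max_{i,j}|\hat V_{i,j}-V_{i,j}|\le \zeta$ and $\max_i|\hat t_i-t_i|\le \epsilon$, for a parameter $\zeta$ to be tuned. To express the prefactor $C[V,t]^2(1+t_{\max})^2$ of that lemma in terms of the parameters of the theorem, I would use the bound $\|V\|_\infty\le \|S\|_\infty^2/(1-e^{-2d_{\min}})$ from Equation~\eqref{eq:normbound1}, giving $C[V,t]^2(1+t_{\max})^2=\mathcal{O}\bigl((\|S\|_\infty^2(1-e^{-2d_{\min}})^{-1}+t_{\max}^2)^2(1+t_{\max})^2\bigr)$, which accounts for the state-dependent constant in front of $\zeta^{-2}\log(m/\delta)$.

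Second, I would invoke Remark~\ref{rem:hbound} with the polynomial growth hypothesis $\xi(l)\le gl^r$ inserted in place of $\Delta^l$. Its two terms suggest taking $l$ large enough that the Taylor tail is $\mathcal{O}(\epsilon)$ and $\zeta$ small enough that the finite-sample term is $\mathcal{O}(\epsilon)$. Concretely, the factorial $(2d_{\max})^{l+1}/(l+1)!$ decays superexponentially once $l\gtrsim 2ed_{\max}$, so the choice $l=\lceil r+2ed_{\max}+\log(C_4/(\sqrt{2\pi}\epsilon))\rceil$ (as in the statement) drives $C_1\, gl^r(2d_{\max})^{l+1}/(l+1)!$ below a constant multiple of $\epsilon$, while setting $\zeta$ of order $\epsilon/(C_3(gl^r)^2)$ does the same to the finite-sample contribution, so that
\begin{equation*}
\bigl\|(2V-i\Omega)^{-1}-\operatorname{LI}_{\mathcal{N}(l)}(2\hat V-i\Omega)\bigr\|_\infty\le \epsilon',
\end{equation*}
where $\epsilon'$ is chosen to meet the smallness condition of Lemma~\ref{th:backpert}. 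Applying that lemma with $K_1=(2V-i\Omega)^{-1}$ and $K_2=\operatorname{LI}_{\mathcal{N}(l)}(2\hat V-i\Omega)$, and feeding the result into Equation~\eqref{eq:locinvtoh} to define $\hat H$, delivers $\|H-\hat H\|_\infty\le \epsilon$ once $\epsilon'$ is specialized to the explicit expression given in Remark~\ref{rem:hbound}. Tracking the required $\zeta$ back through Lemma~\ref{lem:cov_matrix_2}, the number of samples is $N=\mathcal{O}((gl^r)^4\zeta_0^{-2}\log(m/\delta))$ with $\zeta_0$ absorbing the $\|S\|_\infty$, $d_{\min}$, $d_{\max}$, $t_{\max}$ factors, matching the announced sample complexity.

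The main obstacle is the joint tuning of $l$ and $\zeta$: $l$ must be at least of order $\log(1/\epsilon)+d_{\max}$ so that the factorial suppresses the Taylor tail, but $l$ also inflates the factor $\xi(l)^2=\mathcal{O}(g^2l^{2r})$ sitting in front of $\zeta$, and this factor enters squared into the sample complexity through $\zeta^{-2}$. The polynomial growth hypothesis is precisely what keeps this trade-off manageable, leaving only a $\mathrm{polylog}(\epsilon^{-1})$ penalty on top of the $\epsilon^{-2}$ Gaussian-concentration rate; for a graph with $\xi(l)\le\Delta^l$ this balancing would instead produce the exponent $\sim\log\Delta$ in $\epsilon^{-1}$ seen in the general theorem. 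A secondary check is that the positivity/smallness hypotheses of Lemmas~\ref{boundkernellocal} and~\ref{th:backpert} are satisfied for the chosen $\zeta$, $\epsilon'$ and $l$; this follows once one verifies that the candidate $\zeta$ lies below the thresholds $\tfrac{1}{2\xi(l)}\cdot e^{-2d_{\max}}/(\|S\|_\infty^2(1-e^{-2d_{\max}}))$ and $4\epsilon'<e^{-2d_{\max}}\|S\|_\infty^{-2}$, which holds by construction for $l=\Theta(\log\epsilon^{-1})$.
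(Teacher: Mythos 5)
Your proposal follows essentially the same route as the paper's proof: estimate the covariance entrywise to precision $\zeta$ via heterodyne measurements and Lemma~\ref{lem:cov_matrix_2}, apply the local-inversion bound of Lemma~\ref{boundkernellocal} with $\xi(l)\le gl^r$, tune $l\sim 2g(r+2ed_{\max}+\log(1/\epsilon))$ via Stirling so the Taylor tail is $\mathcal{O}(\epsilon)$ and $\zeta\sim\epsilon/(C_3\xi(l)^2)$ for the finite-sample term, and then convert to an error on $H$ exactly as in Remark~\ref{rem:hbound} using Lemma~\ref{th:backpert} and Equation~\eqref{eq:locinvtoh}. The key trade-off you identify (the $\xi(l)^2$ factor multiplying $\zeta$ and hence entering squared in the sample complexity) and the verification of the smallness hypotheses match the paper's argument, so this is correct and not a genuinely different proof.
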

\begin{proof}
Let us start by controlling how to pick $l$ to ensure that the r.h.s. of Equation~\eqref{equ:error_localinverse} is at most $\epsilon'$ for some some $\epsilon_1>0$ to be fixed later.

The bound then becomes:
\begin{align}
\|(2V-i\Omega)^{-1}-\operatorname{LI}_{\mathcal{N}(l)}(2\hat{V}-i\Omega)\|_{\infty}&\le C_1\frac{\xi(l)(2d_{\max})^{l+1}}{(l+1)!}+C_2\frac{(2\xi(l)+1)(2d_{\max})^{l+1}}{(l+1)!}+C_3\xi(l)^2\zeta \nonumber\\
&\leq 3C_1 g (l+1)^r\frac{(2d_{\max})^{l+1}}{(l+1)!}+C_3\xi(l)^2\zeta,
\end{align}

where we used $g>1$ and $C_1\geq C_2$.

Taking $l=\lfloor r+(2d_{\max}  e)e^{W(\frac{1}{2d_{\max}  e}\log{\frac{C_4}{\epsilon})}}\rfloor$, we have that:

\begin{align}
3C_1 g (l+1)^r\frac{(2d_{\max})^{l+1}}{(l+1)!}&\leq 3C_1 g(l+1)^r\frac{(2d_{\max}e)^{l+1}}{(l+1)^{l+1}}\nonumber\\&\leq3C_1 g(2d_{\max}e)^r\frac{(2d_{\max}e)^{l+1-r}}{(l+1-r)^{l+1-r}}\leq \epsilon,
\end{align}

similarly to Eq.~(\ref{eq:WLamb}).

Thus, we see that if we pick $\zeta=\epsilon_1C_3^{-1}\left(g\left[r+(2d_{\max}  e)e^{W(\frac{1}{2d_{\max}  e}\log{\frac{C_4}{\epsilon})}}\right]^{2r}\right)^{-1}$, then we have that:
\begin{align}
\|(2V-i\Omega)^{-1}-\operatorname{LI}_{\mathcal{N}(l)}(2\hat{V}-i\Omega)\|_{\infty}\leq 2\epsilon_1.
\end{align}
Following the same analysis as in Remark~\ref{rem:hbound}, we get that setting $\epsilon_1$ as 
\begin{align}
\epsilon_1=\epsilon \frac{e^{-2d_{\max}}\|S\|^{-2}_{\infty}}{8\left(1+{e^{4d_{\max}}\|S\|_{\infty}^4(1-e^{-2d_{\max}})}\right)}
\end{align}
is sufficient to ensure that $\|H'-H\|_{\infty}\leq \epsilon$.
Putting all the elements together, we see that we need to estimate the entries of the covariance matix up to a precision 
\begin{align}
\zeta=\left(g\left[r+(2d_{\max}  e)e^{W(\frac{1}{2d_{\max}  e}\log{\frac{C_4}{\epsilon})}}\right]^{2r}\right)^{-1}\times\epsilon \frac{e^{-2d_{\max}}\|S\|^{-2}_{\infty}}{8C_3\left(1+{e^{4d_{\max}}\|S\|_{\infty}^4(1-e^{-2d_{\max}})}\right)},
\end{align}
which yields the advertised bound.
\end{proof}

\begin{rem}\label{remarklattice} For Hamiltonians associated to graphs with polynomial neighborhood growth, one can expect an exponential decay of correlation if $d_{\min}$ is uniformly bounded. For example~\cite{Cramer_2006, Schuch_2006}) consider some subclasses of such Hamiltonians and show that the correlation length, in our notation, is roughly $\xi\sim \frac{\Delta}{\left(\frac{d_{\min}}{d_{\max}}\right)^{\alpha}}$ for some $\alpha>0$ (if the inverse temperature is set to $d_{\max}$, $(\frac{d_{\min}}{d_{\max}})$ can also be interpreted as the energy gap of the Hamiltonian). 
If $\xi$ does not scale with $m$, the correlation decay is strong enough so that one could estimate only a number of entries of $V$ that does not scale with $m$ at error $\epsilon$, set the entries corresponding to far vertices to zero, and still obtain an estimate of $V$ which is of error $\mathcal{O}(\epsilon)$. This would allow to use the continuity bound of Lemma~\ref{th:continuityhamv} to obtain an $\mathcal{O}(\epsilon)$ error on the estimate of $H$. However, $\xi$ is $\mathcal{O}(m^\alpha)$ if the energy gap scales with $1/m$, so that the above strategy would not be enough to guarantee $\mathcal{O}(\mathrm{poly}\log(m))$ sample complexity. 
Instead, the estimation algorithm using local inversions has a sample complexity upper bounded by a polynomial function of $\max_{i}|V_{ii}|$ and $\log\frac{1-e^{-2d_{\max}}}{1-e^{-2d_{\min}}}\leq 2 d_{\min}+\log\frac{d_{\max}}{d_{\min}}\leq 2 d_{\max}+\log\frac{d_{\max}}{d_{\min}}$, making the sample complexity still $\mathcal{O}(\mathrm{poly}\log(m))$ if $\max_{i}|V_{ii}|$ does not scale with $m$ and $d_{\max}$ is uniformly bounded. 
\end{rem}

Thus, we see that assuming polynomially-growing neighborhoods, it is possible to obtain a sample complexity that scales like $\epsilon^{-2}\log(m)$, up to poly-log factors and, of course, neglecting the dependency on the various other parameters of the problem.
In the setting where we just assume a bound on the maximal degree of the interaction graph we obtain a higher sample complexity:

\begin{thm}[Hamiltonian learning, bounded degree]
Let $H$ be a Gaussian Hamiltonian with interaction graph $G$, with degree $\Delta-1$, such that $H={S^{-\intercal}DS^{-1}}$, with $S$ symplectic, $D$ positive definite, diagonal with entries $(d_1,d_1,..., d_m,d_m)$, and $d_{\max}=\max_{i\in{[m]}}d_i$, $d_{\min}=\min_{i\in{[m]}}d_i$. Given $1>\epsilon>0$ and $\delta\in (0,1)$, choose
\begin{align*}
\epsilon'&\leq\frac{\epsilon e^{-2d_{\max}} \|S\|_{\infty}^{-2}}{8}\frac{1}{1+e^{4d_{\max}}\|S\|_{\infty}^4(1-e^{-2d_{\max}})}\nonumber\\
l&=\lfloor(2\Delta d_{\max} e)e^{\left(W\left(\frac{\log\frac{C}{\eps'}}{2\Delta d_{\max }e}\right)\right)}\rfloor\\% \left\lfloor4\Delta d_{\max}e^2+\frac{\ln(eC(S,d_{\min},d_{\max})/\epsilon')}{\ln(2)}\right\rfloor\\
%,\qquad 
\zeta&\leq\min\left\{\frac{\epsilon'}{C'\Delta^{2l}},\,{\frac{e^{-2d_{\max}}}{2\Delta^l\|S\|_\infty^2(1-e^{-2d_{\max}})}}\right\}\,,
\end{align*}
where $C(S,d_{\min},d_{\max}):=2\|S\|_\infty^6{e^{4d_{\max}}\frac{1-e^{-2d_{\max}}}{1-e^{-2d_{\min}}}}$ and $C'(S,d_{\min},d_{\max}):= {2\|S\|^4_\infty e^{4 d_{\max}} (1-e^{-2d_{\max}})^2}$. 

Then, for any $t\in\mathbb{R}^{2m}$, 

\begin{align}
N= \frac{2^{14} (\|S\|^2_{\infty}\frac{1}{1 -e^{-2d_{\min}}}+\max_i|t_i|^2)^2(1+\max_i|t_i|)^2}{\zeta^2}\log\left(\frac{4m+1}{\delta}\right)
\end{align}
copies of the Gaussian state with parameters $(H,t)$ suffice to derive estimates $\hat t$ and $\hat H$ (via Algorithm~\ref{alg:learnHG} with parameter $l$)  such that
 $\|\hat{H}-H\|_{\infty},\,\|\hat t-t\|_\infty\leq \epsilon$, with probability larger than $1-\delta$.

\end{thm}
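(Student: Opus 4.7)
The plan is to chain together three previously established ingredients: (i) the covariance estimation bound of Lemma~\ref{lem:cov_matrix}, which converts samples into entrywise control of $\hat V$; (ii) the local inversion bound of Lemma~\ref{boundkernellocal} (packaged in Remark~\ref{rem:hbound}), which converts entrywise control of $\hat V$ into operator-norm control of $\operatorname{LI}_{\mathcal{N}(l)}(2\hat V-i\Omega)$ as an approximation to $(2V-i\Omega)^{-1}$; and (iii) the continuity bound of Lemma~\ref{th:backpert}, which transports this operator-norm bound into a bound $\|\hat H-H\|_\infty\le\epsilon$ through the integral representation \eqref{eq:locinvtoh}. The general template is identical to the polynomially-growing-neighborhood theorem above; only the bookkeeping for $\xi(l)$ changes, with the worst-case estimate $\xi(l)\le\Delta^l$ replacing the polynomial one.

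Concretely, the first step is to perform $N$ heterodyne measurements and form $\hat t$ and the SDP-projected covariance estimate $\hat V$ of Section~\ref{sec:learning_covariance}. Using the bound $V_{ii}\le \|S\|_\infty^2/(1-e^{-2d_{\min}})$ from Equation~\eqref{eq:normbound1} to control $C[V,t]$, Lemma~\ref{lem:cov_matrix} guarantees that with the stated $N$, all entries satisfy $|\hat V_{i,j}-V_{i,j}|\le\zeta$ and $|\hat t_i-t_i|\le \zeta\le\epsilon$ simultaneously with probability at least $1-\delta$. This immediately delivers the mean-vector estimate, so the remainder of the proof may be carried out deterministically on this high-probability event.

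Second, I would define $\hat H$ via the integral representation \eqref{eq:locinvtoh} applied to $\operatorname{LI}_{\mathcal{N}(l)}(2\hat V-i\Omega)$, and invoke Remark~\ref{rem:hbound} with the stated choice of $l$ and $\zeta$. The role of the logarithmic cutoff for $l$ is to force the Taylor tail $\tfrac{(2\Delta d_{\max})^{l+1}}{(l+1)!}$ below $\epsilon'/C(S,d_{\min},d_{\max})$ via Stirling's estimate; the role of the bound on $\zeta$ is to force the perturbation term $C'\Delta^{2l}\zeta$ below $\epsilon'$. Combined, these yield $\|(2V-i\Omega)^{-1}-\operatorname{LI}_{\mathcal{N}(l)}(2\hat V-i\Omega)\|_\infty\le 2\epsilon'$. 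Inserting this into Lemma~\ref{th:backpert} with $K_1=2(2V-i\Omega)^{-1}$, $K_2=2\operatorname{LI}_{\mathcal{N}(l)}(2\hat V-i\Omega)$ and the stated choice $\epsilon'\le \tfrac{\epsilon e^{-2d_{\max}}\|S\|_\infty^{-2}}{8(1+e^{4d_{\max}}\|S\|_\infty^4(1-e^{-2d_{\max}}))}$ recovers $\|\hat H-H\|_\infty\le\epsilon$, exactly as argued in Remark~\ref{rem:hbound}.

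The main obstacle is the tight self-consistency of the parameter cascade: $\zeta$ depends on $l$, which depends on $\epsilon'$, which in turn depends on $\epsilon$ and on the physical parameters $\|S\|_\infty,d_{\min},d_{\max}$. In particular, one must verify that the chosen $\zeta$ still meets the smallness hypothesis $\zeta<\tfrac{e^{-2d_{\max}}}{2\xi(l)\|S\|_\infty^2(1-e^{-2d_{\max}})}$ of Lemma~\ref{boundkernellocal}, which is why the second branch $\zeta\le e^{-2d_{\max}}/(2\Delta^l\|S\|_\infty^2)$ appears in the statement, and one must also verify the analogous hypothesis $4\epsilon'<e^{-2d_{\max}}\|S\|_\infty^{-2}$ of Lemma~\ref{th:backpert}, which is enforced by the minimum in the definition of $\epsilon'$. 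Once these consistency conditions are checked, the $\epsilon^{-2-4\ln\Delta}$ scaling advertised informally in the introduction emerges directly from substituting $\Delta^{2l}\sim(1/\epsilon')^{2\log_2\Delta}$ into the sample bound and absorbing the logarithmic corrections into constants.
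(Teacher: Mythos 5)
Your proposal is correct and follows exactly the paper's route: the paper's own proof is the one-line citation of Remark~\ref{rem:hbound} (which packages Lemma~\ref{boundkernellocal} and Lemma~\ref{th:backpert}) together with Lemma~\ref{lem:cov_matrix}, and your write-up simply unpacks those same ingredients, including the consistency checks on $\zeta$ and $\epsilon'$ that the remark already enforces. The only cosmetic imprecision is the identification $K=2(2V-i\Omega)^{-1}$, which omits a harmless unitary factor of $i\Omega$ (cf.\ $2i\Omega V-I=i\Omega(2V-i\Omega)$) and does not affect any operator norm.
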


\begin{proof}
It follows from Remark~\ref{rem:hbound} and Lemma~\ref{lem:cov_matrix}.
\end{proof}

The following corollary is an immediate consequence:

\begin{cor}[Gaussian Hamiltonian learning, general graphs]\label{cor:generalgraph}
Let $\mathcal{S}$ be the class of Hamiltonians such that $\|S\|_{\infty}\leq s$, $d_{\max}\leq \beta_{\max}$, $d_{\min}\geq \beta_{\min}$, $\max_{i}|t_i|\leq t_{\max}$, and degree $\Delta-1$. Then, with $C=\frac{s(1-e^{-2\beta_{\max}})}{1-e^{-2\beta_{\min}}}e^{2\beta_{\max}}$, for every $\gamma>0$,

\begin{align}
\mathcal{N}(\mathcal{S},m,\epsilon,\delta)=\log\left(\frac{4m+1}{\delta}\right)\mathcal{O}\left(\frac{\mathrm{poly}\{\Delta^{\Delta\beta_{\max}}%\ln\frac{1}{1-e^{-2\beta_{\min}}}}
C\left(\frac{C}{\epsilon}\right)^{\gamma/\Delta \beta_{\max}} (1+t_{\max})\}}{\epsilon^{2}}\right)\,.
\end{align}
\end{cor}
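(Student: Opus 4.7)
My plan is to derive the corollary directly from the preceding bounded-degree Hamiltonian learning theorem by substituting the parameter bounds $\|S\|_\infty \le s$, $d_{\max} \le \beta_{\max}$, $d_{\min} \ge \beta_{\min}$, $\max_i|t_i| \le t_{\max}$ into the prescribed choices of $\epsilon'$, $l$, $\zeta$, and then book-keeping the exponents. Specifically, I would first replace the constants $C(S,d_{\min},d_{\max})$ and $C'(S,d_{\min},d_{\max})$ by their uniform upper bounds $C(s,\beta_{\min},\beta_{\max}) = O\!\left(s^6 e^{4\beta_{\max}}(1-e^{-2\beta_{\min}})^{-1}\right)$ and $C'(s,\beta_{\max}) = O(s^4 e^{4\beta_{\max}})$, and pick $\epsilon' = \Theta\!\left(\epsilon \cdot e^{-2\beta_{\max}} s^{-2}\,[1 + e^{4\beta_{\max}} s^4]^{-1}\right)$, which reduces $\epsilon'$ to $\Theta(\epsilon)$ up to constants in $(s,\beta_{\max})$.

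Next, I would substitute this $\epsilon'$ into the choice $l = \lfloor 4\Delta \beta_{\max} e^2 + \ln(eC/\epsilon')/\ln 2 \rfloor$, which gives $l = O\!\left(\Delta\beta_{\max} + \log(1/\epsilon')\right)$ with constants absorbing the $s,\beta_{\min},\beta_{\max}$ dependence inside the logarithm. The decisive identity is
\begin{equation*}
\Delta^{2l} \;=\; \Delta^{O(\Delta\beta_{\max})} \cdot (1/\epsilon')^{2\ln\Delta / \ln 2} \;=\; \Delta^{O(\Delta\beta_{\max})} \cdot \epsilon^{-O(\ln\Delta)} \cdot \mathrm{poly}(s,e^{\beta_{\max}},(1-e^{-2\beta_{\min}})^{-1}),
\end{equation*}
since $\Delta^{\log_2(1/\epsilon')} = (1/\epsilon')^{\log_2\Delta}$. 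Substituting $\zeta \;=\; \Theta(\epsilon'/(\Delta^{2l} C'))$ (the binding constraint of the two in the theorem's definition of $\zeta$) then yields $\zeta^{-2} = \Delta^{O(\Delta\beta_{\max})} \cdot \epsilon^{-2 - O(\ln\Delta)} \cdot \mathrm{poly}(s, e^{\beta_{\max}}, (1-e^{-2\beta_{\min}})^{-1})$.

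Finally, plugging $\zeta$ into the sample complexity $N = 2^{14}\!\left(s^2/(1-e^{-2\beta_{\min}}) + t_{\max}^2\right)^2(1+t_{\max})^2 \zeta^{-2} \ln(4m+1/\delta)$ from the previous theorem and collecting terms yields the advertised form, where the prefactor $(s^2/(1-e^{-2\beta_{\min}}) + t_{\max}^2)^2(1+t_{\max})^2 = O(s^4(1-e^{-2\beta_{\min}})^{-2}(1+t_{\max})^6)$ merges with the other polynomial factors into the $\mathrm{poly}\{\cdots\}$ in the statement. The $\widetilde{O}$ notation absorbs the $\mathrm{polylog}(1/\epsilon)$ overhead coming from the finitely many logarithmic factors that multiply $\epsilon^{-(2+4\ln\Delta)}$ after the substitution (e.g.\ those arising from $\log(1/\epsilon')^{O(1)}$ inside the definition of $l$).

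The main obstacle is the bookkeeping of exponents: one must verify carefully that $\Delta^{2l}$ splits precisely into a graph-size factor $\Delta^{O(\Delta\beta_{\max})}$ and a precision factor scaling as $\epsilon^{-O(\ln\Delta)}$ with the matching constant $4$ in the exponent $2+4\ln\Delta$ (owing to the square in $\zeta^{-2}$). There is also a minor conversion between $\log_2\Delta$ and $\ln\Delta$, which is harmless since the exponent of $\Delta$ enters inside a $\mathrm{poly}\{\cdots\}$ wrapper and any constant-multiplicative discrepancy can be reabsorbed into the polynomial.
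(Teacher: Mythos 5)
Your proposal is correct and follows essentially the same route as the paper, which presents this corollary as an immediate consequence of the bounded-degree theorem obtained by substituting the uniform parameter bounds into the prescribed choices of $\epsilon'$, $l$, $\zeta$ and tracking how $\Delta^{2l}$ splits into $\Delta^{O(\Delta\beta_{\max})}\cdot(1/\epsilon')^{O(\ln\Delta)}$ inside $\zeta^{-2}$. The only caveat is your remark that the $\log_2\Delta$ versus $\ln\Delta$ conversion is absorbed into the $\mathrm{poly}\{\cdots\}$ wrapper: the discrepancy actually sits in the exponent of $\epsilon$, not of $\Delta$, so it changes the polynomial degree in $1/\epsilon$ rather than a constant factor; this mirrors an inconsistency already present between the paper's Remark~\ref{rem:hbound} and the theorem statement and does not affect the substance of the argument.
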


\begin{rem}\label{remarkV2} As already highlighted in Remark~\ref{remarkV1}, bounding $V_{ii}$ by $\|V\|_{\infty}$ and thus in terms of $d_{\min}$ may be overshooting in the regime of large $m$. Here, since a dependence on $d_{\min}$ appears anyway from the rest of the proof, we expressed everything in terms of $d_{\min}$. However, the only other dependence on $d_{\min}$ comes from bounding $\|D^{-1}C\|_{\infty}$ in Lemma~\ref{theoapprinvsparse}. As noted in Remark~\ref{remarkcondition}, one could expect that bounding $\|D^{-1}C\|_{\infty}$ via the condition number of $(2V-i\Omega)^{-1}$ is also a loose bound, and the proof can be expressed as well just in terms of a uniform bound on all the instances of $\|D^{-1}C\|_{\infty}$ from each call to Remark~\ref{theoapprinvsparse}. Putting together these observations suggests that in practice the number of copies needed could be much smaller than what the corollary above guarantees. The same observation holds for our later Theorem~\ref{theographlearning}.
\end{rem}

\subsection{Learning the graph}\label{sec:graphlearn}

We now consider the problem of learning the interaction graph, defined as follows. 

\begin{defi}[Problem 3: Gaussian graph learning]\label{def:graph_learning}
Let $H\in M^{>0}_{2m}(\mathbb{R})$ be a symmetric, positive definite matrix (later referred to as Gaussian Hamiltonian) and $t\in \mathbb{R}^{2m}$. For any such data $(H,t)$, one can construct a Gaussian state on $m$ modes: $$\rho(H,t)=\frac{e^{-(R-t)^{\intercal}H(R-t)}}{\Tr[e^{-(R-t)^{\intercal}H(R-t)}]}\,.$$ 
The Gaussian graph learning problem consists in the following task: 
 provide an estimate $\hat{\mathsf{E}}$ of the edge set $\mathsf{E}$ of the graph of $H$, such that $\hat{\mathsf{E}}=\mathsf{E}$ with probability larger than $1-\delta$ given $n$ copies of $\rho(H,t)$, for any $(H,t)$ in some specified subset $\mathcal{S}$ which satisfies a lower bound $\kappa$ on the maximum absolute value of the entries of non-zero sub-matrices of the Hamiltonian:
\begin{align*}
\kappa\leq \min_{(i,j)\in \mathsf{E}}\max_{\delta_1,\delta_2\in\{0,1\}}|H_{2i-\delta_1,2j-\delta_2}|\,.
\end{align*}
We denote by $\mathcal{N}_{\mathrm{GL}}(\mathcal{S},\kappa,m,\epsilon,\delta)$ the sample complexity for Gaussian Hamiltonian learning. This corresponds to the minimum number $n$ of copies of the state $\rho(H,t)$ required for the task to succeed.  
\end{defi}
Again, this is the natural quantum variation of learning the structure of a Gaussian graphical model~\cite{misra20a}.
We will need the following variant of Lemma~\ref{th:Hsparsebound} where the neighborhood structure is also being approximated.

\begin{lem}\label{lem:notgoodneigh}
 
Let $M$ be a positive definite matrix and let $\mathcal{N}=\{\mathcal{N}_i\}_{i\in[m]}$, $\hat{\mathcal{N}}=\{\hat{\mathcal{N}}_i\}_{i\in[m]}$,  be two neighborhood structures of $[m]$. For each $i\in[m]$ and $\delta\in\{0,1\}$, let $\left(\begin{smallmatrix}
    A_{i,\delta} &B_{i,\delta}\\ C_{i,\delta}&D_{i,\delta}
\end{smallmatrix}\right)\in\mathbb{C}^{2m}\times \mathbb{C}^{2m}$, with blocks $A_{i,\delta}\in \mathbb{C}^{2|\hat{\mathcal{N}}_i|}\times\mathbb{C}^{2|\hat{\mathcal{N}}_i|} ,B_{i,\delta}\in \mathbb{C}^{2|\hat{\mathcal{N}}_i|}\times\mathbb{C}^{2m-2|\hat{\mathcal{N}}_i|} ,C_{i,\delta}\in \mathbb{C}^{2m-2|\hat{\mathcal{N}}_i|}\times\mathbb{C}^{2|\hat{\mathcal{N}}_i|} ,D_{i,\delta}\in \mathbb{C}^{2m-2|\hat{\mathcal{N}}_i|}\times\mathbb{C}^{2m-2|\hat{\mathcal{N}}_i|} $ be the matrix $\pi_{i,\delta}M \pi_{i,\delta}^{-1}$ where $\pi_{i,\delta}$ is a permutation that brings the coefficients $2j-\delta'$ with $\delta'\in\{0,1\}$ and $j\in\hat{\mathcal{N}}_i$ to the upper-left block $A_{i,\delta}$, in such a way that index $2i-\delta$ is mapped to $2-\delta$ for $\delta=0,1$. Let $\widetilde{B}_{i,\delta}$ the matrix constructed from $B_{i,\delta}$ by putting the first row to zero. Let $E$ be a self-adjoint matrix, such that $\operatorname{LOC}_{\mathcal{N}}(M-E)=M-E$. Then,
\begin{align}\label{eq:opnormbound23}
\|M-\operatorname{LI}_{\hat{\mathcal{N}}}(M^{-1})\|_{\infty}&\leq 2\max_{i\in [m]}\max_{\delta\in\{0,1\}}{|\hat{\mathcal{N}}_i|}\|B_{i,\delta}-\widetilde{B}_{i,\delta}\|_{\infty} \|M^{-1}\|_{\infty}\|M\|_{\infty}+(2\max_{i\in [m]}{|\hat{\mathcal{N}}_i|}+1)\|E\|_{\infty}\nonumber\\
&+\max_{i\in[m]}\max_{\delta_1\in\{0,1\}}\sum_{j\in \mathcal{N}_i, j\notin \hat{\mathcal{N}}_i,\delta_2\in\{0,1\}}|(M-E)_{2i-\delta_1,2j-\delta_2}|.
\end{align}
\end{lem}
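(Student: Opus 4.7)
The plan is to adapt the Gershgorin-based proof of Lemma \ref{th:Hsparsebound}, refining the row-level analysis to handle the mismatch between $\mathcal{N}$ (which controls the sparsity of $M-E$) and $\hat{\mathcal{N}}$ (which controls the support of the local inverse). I would first apply the triangle inequality
\[
\|M - \operatorname{LI}_{\hat{\mathcal{N}}}(M^{-1})\|_\infty \le \|E\|_\infty + \|L'\|_\infty, \qquad L' := (M - E) - \operatorname{LI}_{\hat{\mathcal{N}}}(M^{-1}),
\]
and observe that $L'$ is self-adjoint (both summands are), so by Gershgorin's circle theorem $\|L'\|_\infty$ is bounded by its maximum absolute row sum.

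Next, for a fixed row index $2i-\delta_1$, I would partition the column indices $2k-\delta_2$ into those with $k \in \hat{\mathcal{N}}_i$ and those with $k \notin \hat{\mathcal{N}}_i$. On the first group (at most $2|\hat{\mathcal{N}}_i|$ columns, including the diagonal), applying Lemma \ref{theoapprinvsparse} to the permuted matrices $\pi_{i,\delta_1} M \pi_{i,\delta_1}^{-1}$ and $\pi_{k,\delta_2} M \pi_{k,\delta_2}^{-1}$ and then averaging, exactly as in the proof of Lemma \ref{th:Hsparsebound}, yields $|M_{2i-\delta_1, 2k-\delta_2} - \operatorname{LI}_{\hat{\mathcal{N}}}(M^{-1})_{2i-\delta_1, 2k-\delta_2}| \le \max_{i,\delta}\|B_{i,\delta} - \widetilde{B}_{i,\delta}\|_\infty \|M^{-1}\|_\infty \|M\|_\infty$, to which one adds $|E_{2i-\delta_1, 2k-\delta_2}| \le \|E\|_\infty$ when passing from $M-E$ to $M$ inside $L'$. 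On the second group, the symmetry condition (b) of Definition \ref{neighborhoodstruct} forces $i \notin \hat{\mathcal{N}}_k$ whenever $k \notin \hat{\mathcal{N}}_i$, so by the definition of $\operatorname{LI}_{\hat{\mathcal{N}}}$ the entry $\operatorname{LI}_{\hat{\mathcal{N}}}(M^{-1})_{2i-\delta_1, 2k-\delta_2}$ vanishes, and hence $L'_{2i-\delta_1, 2k-\delta_2} = (M-E)_{2i-\delta_1, 2k-\delta_2}$; by the hypothesis $\operatorname{LOC}_{\mathcal{N}}(M-E) = M-E$, this is nonzero only when $k \in \mathcal{N}_i \setminus \hat{\mathcal{N}}_i$.

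Collecting: the first group contributes at most $2|\hat{\mathcal{N}}_i|\bigl(\|B_{i,\delta_1} - \widetilde{B}_{i,\delta_1}\|_\infty \|M^{-1}\|_\infty \|M\|_\infty + \|E\|_\infty\bigr)$ to the row sum, while the second group contributes exactly $\sum_{k \in \mathcal{N}_i \setminus \hat{\mathcal{N}}_i,\, \delta_2} |(M-E)_{2i-\delta_1, 2k-\delta_2}|$, which is further upper-bounded by summing over both $\delta_1 \in \{0,1\}$ to match the stated form. Adding back the initial $\|E\|_\infty$ from the triangle inequality yields the coefficient $(2|\hat{\mathcal{N}}_i| + 1)$ in front of $\|E\|_\infty$, and taking the maximum over $i$ and $\delta_1$ gives the bound in the statement.

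The main obstacle will be the bookkeeping: matching the stated coefficient $(2\max_i|\hat{\mathcal{N}}_i|+1)$ by cleanly separating the $\|E\|_\infty$ that arises inside each row-sum contribution from the one produced by the initial triangle inequality, and identifying the case $k \in \mathcal{N}_i \setminus \hat{\mathcal{N}}_i$ as the genuinely new contribution caused by the sparsity mismatch. The symmetry requirement in Definition \ref{neighborhoodstruct} plays a clean but crucial role — it is what makes $\operatorname{LI}_{\hat{\mathcal{N}}}(M^{-1})_{2i-\delta_1, 2k-\delta_2}=0$ for every $k\notin\hat{\mathcal{N}}_i$, so that the row splits into just two groups and no further cross-terms appear.
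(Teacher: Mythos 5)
Your proposal is correct and follows essentially the same route as the paper's proof: the same initial triangle inequality peeling off one $\|E\|_\infty$, the same Gershgorin argument on the self-adjoint matrix $(M-E)-\operatorname{LI}_{\hat{\mathcal{N}}}(M^{-1})$, the same application of Lemma~\ref{theoapprinvsparse} to the columns indexed by $\hat{\mathcal{N}}_i$, and the same identification of the columns in $\mathcal{N}_i\setminus\hat{\mathcal{N}}_i$ as the extra term caused by the sparsity mismatch. Your explicit remark that the symmetry condition of Definition~\ref{neighborhoodstruct} is what forces $\operatorname{LI}_{\hat{\mathcal{N}}}(M^{-1})_{2i-\delta_1,2k-\delta_2}=0$ for $k\notin\hat{\mathcal{N}}_i$ is a useful clarification that the paper leaves implicit, but it does not change the argument.
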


\begin{proof}
The proof is essentially the same as that of Lemma~\ref{th:Hsparsebound}, except that in the final application of Gershgorin's theorem the matrix $M-E-\operatorname{LI}_{\hat{\mathcal{N}}}(M^{-1})$ can have non-zero elements for entries corresponding to the neighborhood structure $\mathcal{N}$ but not $\hat{\mathcal{N}}$. We reproduce the aformentioned proof with the necessary changes.
By triangle inequality, we have 
 \begin{align*}
 \|M-\operatorname{LI}_{\hat{\mathcal{N}}}(M^{-1})\|_{\infty}&\le\|E\|_\infty+\|M-E-\operatorname{LI}_{\hat{\mathcal{N}}}(M^{-1})\|_{\infty}\,.
 \end{align*}
 It remains to bound the last operator norm above. Since $\operatorname{LOC}_{\mathcal{N}}(M-E)-\operatorname{LI}_{\hat{\mathcal{N}}}(M^{-1})=M-E-\operatorname{LI}_{\hat{\mathcal{N}}}(M^{-1})$ is self-adjoint, it suffices to find a uniform bound on its eigenvalues. By Gershgorin circle theorem, denoting the matrix $L:=\operatorname{LOC}_{\mathcal{N}}(M-E)-\operatorname{LI}_{\hat{\mathcal{N}}}(M^{-1})$, these eigenvalues are situated in the $2m$ intervals
 \begin{align*}
     \left[L_{k,k}-\sum_{k'\ne k}\big|L_{k,k'}\big|,\,L_{k,k}+\sum_{k'\ne k}\big|L_{k,k'}\big|\right],\, k\in[2m]\,.
 \end{align*}

 Now for any $i\in[m]$, $j\in\hat{\mathcal{N}}_i$ and $\delta'\in\{0,1\}$, we have by the construction of the local inverse matrix, denoting $N=M^{-1}$, and $N_i$ the block matrices defined similarly to Definition \ref{constrappr1} but 
 corresponding to the neighborhood structure $\hat{\mathcal{N}}$:
 \begin{align}
 \Big|M_{2i-\delta,2j-\delta'}-\big(\operatorname{LI}_{\hat{\mathcal{N}}}(M^{-1})\big)_{2i-\delta,2j-\delta'}\Big|&\leq\frac{1}{2}\Big|M_{2i-\delta,2j-\delta'}-\big(N_{i}^{-1}\big)_{2i-\delta,2j-\delta'}\Big|\nonumber\\
 &+\frac{1}{2}\Big|M_{2i-\delta,2j-\delta'}-\big(N_{j}^{-1}\big)^*_{2j-\delta',2i-\delta}\Big|\nonumber\,.
 \end{align}
  Via Lemma~\ref{theoapprinvsparse} applied to the matrices  $\pi_{i,\delta}M\pi_{i,\delta}^{-1}$ and $\pi_{j,\delta}M\pi_{j,\delta}^{-1}$,
\begin{align*}
\Big|M_{2i-\delta,2j-\delta'}-\big(N_{i}^{-1}\big)_{2i-\delta,2j-\delta'}\Big|&\leq  \|B_{i,\delta}-\widetilde{B}_{i,\delta}\|_\infty \|M^{-1}\|_{\infty}\|M\|_{\infty}\, ,\\
\Big|M_{2i-\delta,2j-\delta'}-\big(N_{j}^{-1}\big)^*_{2j-\delta',2i-\delta}\Big|&=\Big|M^*_{2j-\delta',2i-\delta}-\big(N_{j}^{-1}\big)^*_{2j-\delta',2i-\delta}\Big|
&\!\!\le\!  \|B_{j,\delta}-\widetilde{B}_{j,\delta}\|_\infty \|M^{-1}\|_{\infty}\|M\|_{\infty}
\end{align*}
and hence $|L_{2i-\delta,2j-\delta'}|\leq |(M-\big(\operatorname{LI}_{\hat{\mathcal{N}}}(M^{-1})\big)_{2i-\delta,2j-\delta'}|+|E_{2i-\delta,2j-\delta'}|\leq \max_{i\in[m]}\max_{\delta\in\{0,1\}}\|B_{i,\delta}-\widetilde{B}_{i,\delta}\|_\infty \|M^{-1}\|_{\infty}\|M\|_{\infty} +\|E\|_{\infty}$. The other non-zero entries of $L$ are $L_{2i-\delta_1,2j-\delta_2}$ such that $j\in\mathcal{N}_i$ but $j\notin\hat{\mathcal{N}}_i$, and $L_{2i-\delta_1,2j-\delta_2}=(M-E)_{2i-\delta_1,2j-\delta_2}$. Using Gershgorin's theorem together with the fact that the matrix $L$ has at most $2\max_i|{\hat{\mathcal{N}}}_i|$ non-zero coefficients per column that corresponds to $j\in\hat{\mathcal{N}_i}$, we get
\begin{align*}
 \|L\|_\infty&\le 2\max_{i\in[m]}\max_{\delta\in\{0,1\}}|\hat{\mathcal{N}}_i|\left(\|B_{i,\delta}-\widetilde{B}_{i,\delta}\|_\infty \|M^{-1}\|_{\infty}\|M\|_{\infty} +\|E\|_{\infty}\right)\\
 &+\max_{i\in[m]}\max_{\delta_1\in\{0,1\}}\sum_{j\in \mathcal{N}_i, j\notin \hat{\mathcal{N}}_i,\delta_2\in\{0,1\}}|(M-E)_{2i-\delta_1,2j-\delta_2}| \,.
\end{align*}
The result follows.
\end{proof}

The error made during the local inversion can be bounded as in Lemma~\ref{th:approxlocinvbound}:

\begin{lem}\label{lem:estpertaprrneigh}
In the notations of Lemma \ref{lem:notgoodneigh}, given a $2m\times 2m$ positive definite matrix $M$, $N=M^{-1}$ and $\hat{N}$ a componentwise approximation of $N$, where $|\hat{N}_{k,k'}-N_{k,k'}|\le \zeta$ for all $k,k'\in[2m]$ with $\zeta\le \frac{1}{4\max_{i\in[m]}|{\hat{\mathcal{N}}}_i|}\|M\|_\infty^{-1}$,

\begin{align}
\|M-\operatorname{LI}_{\hat{\mathcal{N}}}(\hat{N})\|_{\infty}&\leq 2\max_{i\in [m]}\max_{\delta\in\{0,1\}}{|\hat{\mathcal{N}}_i|}\|B_{i,\delta}-\widetilde{B}_{i,\delta}\|_{\infty} \|M^{-1}\|_{\infty}\|M\|_{\infty}+(2\max_{i\in [m]}{|\hat{\mathcal{N}}_i|}+1)\|E\|_{\infty}\\
 &+\max_{i\in[m]}\max_{\delta_1\in\{0,1\}}\sum_{j\in \mathcal{N}_i, j\notin \hat{\mathcal{N}}_i,\delta_2\in\{0,1\}}|(M-E)_{2i-\delta_1,2j-\delta_2}|+8\|M\|_{\infty}^{2}\max_{i}|\hat{\mathcal{N}}_i|^2\zeta\,\nonumber.
\end{align}

\end{lem}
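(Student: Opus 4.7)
The plan is to decompose the error by triangle inequality into two pieces that we already know how to control, mirroring the strategy used to pass from Lemma~\ref{th:Hsparsebound} to Lemma~\ref{th:approxlocinvbound}, but now with the additional complication that the neighborhood structure $\hat{\mathcal{N}}$ used for the local inversion may not coincide with the structure $\mathcal{N}$ that sparsifies $M-E$. Concretely, I would write
\begin{align*}
\|M-\operatorname{LI}_{\hat{\mathcal{N}}}(\hat{N})\|_{\infty}\leq \|M-\operatorname{LI}_{\hat{\mathcal{N}}}(N)\|_{\infty}+\|\operatorname{LI}_{\hat{\mathcal{N}}}(N)-\operatorname{LI}_{\hat{\mathcal{N}}}(\hat{N})\|_{\infty},
\end{align*}
so that the first term is controlled by applying Lemma~\ref{lem:notgoodneigh} directly (this is where the $\|B_{i,\delta}-\widetilde{B}_{i,\delta}\|_\infty$ contribution, the $\|E\|_\infty$ contribution, and the off-neighborhood contribution $\sum_{j\in\mathcal{N}_i,j\notin\hat{\mathcal{N}}_i,\delta_1,\delta_2}|(M-E)_{2i-\delta_1,2j-\delta_2}|$ all appear).

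For the second term, I would replay the estimate from Lemma~\ref{th:approxlocinvbound} verbatim with $\hat{\mathcal{N}}$ in place of $\mathcal{N}$. For each $i\in[m]$, let $N_i$ and $\hat{N}_i$ denote the $2|\hat{\mathcal{N}}_i|\times 2|\hat{\mathcal{N}}_i|$ principal submatrices of $N=M^{-1}$ and $\hat{N}$ indexed by $\hat{\mathcal{N}}_i$. By the equivalence between entrywise approximation and the Schatten-$2$ bound, $\|N_i-\hat{N}_i\|_\infty\leq 2|\hat{\mathcal{N}}_i|\zeta$. Combining this with the perturbation bound~\eqref{inversebound} and the fact that $\sigma_{\min}(N_i)\geq\sigma_{\min}(N)\geq \|M\|_\infty^{-1}$, the hypothesis $\zeta\le \tfrac{1}{4\max_i|\hat{\mathcal{N}}_i|}\|M\|_\infty^{-1}$ ensures that $\hat{N}_i$ is still invertible and yields
\begin{align*}
\Big|(N_i^{-1})_{k,k'}-(\hat{N}_i^{-1})_{k,k'}\Big|\leq 4\|M\|_\infty^2|\hat{\mathcal{N}}_i|\zeta
\end{align*}
for any entry, exactly as in~\eqref{eq:inversentry}. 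Averaging the two contributions from Definition~\ref{constrappr1} (the $i$-inversion and the $j$-inversion) keeps the same bound, and since $\operatorname{LI}_{\hat{\mathcal{N}}}(N)-\operatorname{LI}_{\hat{\mathcal{N}}}(\hat{N})$ is self-adjoint with at most $2\max_i|\hat{\mathcal{N}}_i|$ non-zero entries per row, a final application of Gershgorin's theorem yields
\begin{align*}
\|\operatorname{LI}_{\hat{\mathcal{N}}}(N)-\operatorname{LI}_{\hat{\mathcal{N}}}(\hat{N})\|_{\infty}\leq 8\|M\|_\infty^2 \max_i|\hat{\mathcal{N}}_i|^2\,\zeta,
\end{align*}
which is the last term in the statement. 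Summing the two bounds gives the claim.

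The main obstacle, though largely bookkeeping, is to verify that the invertibility condition on $\hat{N}_i$ propagates correctly: one needs $\|N_i-\hat{N}_i\|_\infty<\sigma_{\min}(N_i)$, and since $\sigma_{\min}(N_i)$ is only controlled by $\|M\|_\infty^{-1}$ (not by $\|M^{-1}\|_\infty^{-1}$), the hypothesis on $\zeta$ must be stated in terms of $\|M\|_\infty$ and $\max_i|\hat{\mathcal{N}}_i|$ rather than $\max_i|\mathcal{N}_i|$. This is why the statement uses $|\hat{\mathcal{N}}_i|$ throughout the second-term bound; the off-neighborhood residual $\sum_{j\in\mathcal{N}_i\setminus\hat{\mathcal{N}}_i}|(M-E)_{2i-\delta_1,2j-\delta_2}|$ is the only place where the true structure $\mathcal{N}$ enters, inherited purely from Lemma~\ref{lem:notgoodneigh} and not affected by the $\zeta$-perturbation.
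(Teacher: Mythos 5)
Your proposal is correct and follows exactly the paper's own argument: a triangle inequality splitting $\|M-\operatorname{LI}_{\hat{\mathcal{N}}}(\hat{N})\|_{\infty}$ into $\|M-\operatorname{LI}_{\hat{\mathcal{N}}}(N)\|_{\infty}$, bounded by Lemma~\ref{lem:notgoodneigh}, plus $\|\operatorname{LI}_{\hat{\mathcal{N}}}(N)-\operatorname{LI}_{\hat{\mathcal{N}}}(\hat{N})\|_{\infty}$, bounded by repeating the estimate of Lemma~\ref{th:approxlocinvbound} with $\hat{\mathcal{N}}$ in place of $\mathcal{N}$. The additional details you supply for the second term (the Schatten-$2$ bound, the invertibility check via $\sigma_{\min}(N_i)\ge\|M\|_\infty^{-1}$, and the final Gershgorin step) are exactly the ones the paper implicitly relies on by citation.
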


\begin{proof}
By the triangle inequality,
\begin{align*}
\|M-\operatorname{LI}_{\hat{\mathcal{N}}}(\hat{N})\|_{\infty}\leq \|M-\operatorname{LI}_{\hat{\mathcal{N}}}(N)\|_{\infty}+\|\operatorname{LI}_{\hat{\mathcal{N}}}(N)-\operatorname{LI}_{\hat{\mathcal{N}}}(\hat{N})\|_{\infty}.
\end{align*}
The first term is bounded as in Lemma~\ref{lem:notgoodneigh}, the second as in Lemma~\ref{th:approxlocinvbound}.

\end{proof}
With these estimates at hand we can show that Algorithm \ref{alg:learnGG} stated in pseudocode below outputs a correct estimate of the graph with high probability. Before stating the theorem, let us give an overview of the algorithm. The algorithm takes as an input an estimate of the covariance matrix $\hat{V}$ and size of neighborhoods $\xi(l)\le \Delta^l$, plus  thresholding parameters $\eta,\kappa$. It then iterates over all vertices and all neighborhoods of size $\xi(l)$, denoted by $\mathcal{N}_i$. For each of the $\mathcal{N}_i$, we then consider all possible neighborhoods enlarged by $\xi(l)$ additional vertices, defining a set $\overline{\mathcal{N}}_i$, and perform our local inversion with the enlarged neighborhood. If we see that in the resulting matrix there are no large entries (larger than $3\eta$) on row $i$ and vertices $j\in\overline{\mathcal{N}}_i$, we then set our neighborhood as $\mathcal{N}_i$. The intuition behind the procedure is as follows: recall that we assume that we have a lower bound on the size of all interactions and that matrix elements of $(2V-i\Omega)^{-1}$ corresponding to vertices that are far apart decay rapidly. If we have the right $l$-neighborhood already for some vertex $i$, then all the estimates with the enlarged neighborhood will be approximately correct and all the entries added will be smaller than $3\eta$ because of the decay. Thus, we will correctly accept in this case. On the other hand, if we have the incorrect or incomplete neighborhood, then there must be $\xi(l)$ vertices we can add that will make it complete. For that joint neighborhood, we see two possibilities: either this will ``turn on'' a large entry, which will show us that we were not operating with the right neighborhood. In contrast, if we see that for all possibilities we add this never occurs, then we can also conclude that the vertices we are missing will only add a small error and this can be neglected. At the end, we end up with an estimate of the Hamiltonian matrix which has error at most $\kappa/2$ in operator norm, so that we can identify non-edges thanks to the promise on the strength of the interactions.

More formally, we have:

\begin{thm}\label{theographlearning}
Let $H$ be a Hamiltonian with graph $G$ of degree $\Delta-1$ and edge set $\mathsf{E}$, the condition $0<\kappa\leq \min_{(i,j)\in \mathsf{E}}\max_{\delta_1,\delta_2\in\{0,1\}}|H_{2i-\delta_1,2j-\delta_2}|$, such that $H=S^{{-\intercal}}DS^{{-1}}$, with $S$ symplectic, $D$ diagonal with entries $(d_1,d_1,..., d_m,d_m)$, and $d_{\max}=\max_{i\in [m]}d_i$, $d_{\min}=\min_{i\in [m]}d_i$. {Let us assume that $\|S\|_{\infty}\leq s$, $d_{\max}\leq \beta_{\max}$, $d_{\min}\geq \beta_{\min}$}. We also denote by $\xi(l)$ the maximum number of vertices at distance at most $l$ from any given vertex. 
{
For $l$, $\zeta$, $\epsilon'$, $\eta$ given by
\begin{align*}
l&= \left\lfloor(2\Delta^2 \beta_{\max} e)\exp{\left(W\left(\frac{\log\frac{C(s,\beta_{\min},\beta_{\max})}{\eps'}}{2\Delta^2 \beta_{\max }e}\right)\right)}\right\rfloor\nonumber\\
\zeta&=%\min\left\{\frac{\epsilon'}{C'(s,\beta_{\max})\Delta^{2l}},\, %\frac{\epsilon'}{e^{2\beta_{\max}}18\, \Delta^{2l}s^4}{\frac{(1-e^{-2\beta_{\min}})}{(1-e^{-2\beta_{\max}})}},\,\frac{e^{-2\beta_{\max}}}{2\Delta^ls^2}, 
{ \frac{\epsilon' e^{-4\beta_{\max}}}{200\, \Delta^{3l}s^6}\frac{1-e^{-2\beta_{\min}}}{1-e^{-2\beta_{\max}}}}%\right\}
\nonumber\\
\epsilon'&=\min\left(\kappa\frac{e^{-2\beta_{\max}}s^{-2}}{16\left(1+{e^{4\beta_{\max}}s^4(1-e^{-2\beta_{\max}})}\right)}, \frac{e^{-2\beta_{\max}}s^{-2}}{8}\right)\\
\eta&= \frac{\epsilon'}{e^{2\beta_{\max}}36\, \Delta^{2l}s^4}{\frac{(1-e^{-2\beta_{\min}})}{(1-e^{-2\beta_{\max}})}}\,,
\end{align*}
where we denoted $C(s,\beta_{\min},\beta_{\max}):=25s^{10}e^{6\beta_{\max}}{\frac{(1-e^{-2\beta_{\max}})^2}{(1-e^{-2\beta_{\min}})^2}}$ and $C'(s,\beta_{\max}):=2s^4e^{4\beta_{\max}}$, 
given 
\begin{align}
{N}=\mathcal{O}\left(\frac{2^{14} (s^2\frac{1}{1 -e^{-2\beta_{\min}}}+\max_i|t_i|^2)^2(1+\max_i|t_i|)^2}{\zeta^2}\log\left(\frac{4m+1}{\delta}\right)\right)
\end{align}
}
copies of the Gaussian state with parameters $(H,t)$, \textsf{LearnGraphGaussian} with parameters $\xi(l),\eta, \kappa$ (Algorithm~\ref{alg:learnGG}) outputs an estimate $\hat{\mathsf{E}}$ of the edge set $\mathsf{E}$ such that $\hat{\mathsf{E}}=\mathsf{E}$
 with probability larger than $1-\delta$.
 Furthermore, we can run Algorithm~\ref{alg:learnGG} in time
 \begin{align}\label{equ:time_complexity_algo}
     \mathcal{O}\left(m\binom{m-\xi(l)}{\xi(l)}\binom{m}{\xi(l)}\xi(l)^3\right)\,.
 \end{align}

\end{thm}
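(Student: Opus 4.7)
The proof will proceed in three stages: securing entrywise accuracy of the covariance estimate, analyzing the accept/reject behavior of the neighborhood search so that whatever it outputs yields a good ``stitched'' approximation of $(2V-i\Omega)^{-1}$, and converting that stitched inverse into a Hamiltonian estimate accurate enough that a $\kappa/2$-threshold recovers $\mathsf{E}$. For the first stage I will simply invoke Lemma~\ref{lem:cov_matrix} with the displayed sample size and target entrywise precision $\zeta$: with probability at least $1-\delta$, $|\hat V_{k,k'}-V_{k,k'}|\le\zeta$ for all $k,k'\in[2m]$, and the rest of the argument will condition on this event.

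The heart of the proof is the second stage, which combines Lemma~\ref{th:seriesapprox} (the remainder $E$ from truncating $(2V-i\Omega)^{-1}$ after $l$ Taylor terms is small, and its off-$l$-neighborhood entries are identically zero) with Lemma~\ref{lem:estpertaprrneigh} (approximate local inversion is close to the true matrix in operator norm up to $\|E\|_\infty$, a Gershgorin factor $\xi(l)$ times $\|E\|_\infty$, plus a $\xi(l)^2\|M\|_\infty^2\zeta$ sampling term). The choices of $l$ and $\zeta$ in the statement are calibrated so that both error sources are $\lesssim\eta$. \emph{Completeness:} if a candidate $\mathcal{N}_i$ contains the true $l$-neighborhood $\mathcal{N}(l)_i$ (which happens in particular for $\mathcal{N}_i=\mathcal{N}(l)_i$, legal because $|\mathcal{N}(l)_i|\le\xi(l)$), then any enlargement adds only vertices at graph-distance $>l$ from $i$, so every added entry of the estimated local inverse has magnitude at most $\|E\|_\infty+\text{(local-inv.\ error)}\le 3\eta$, and the test passes for at least one candidate. \emph{Soundness:} if the algorithm accepts some $\mathcal{N}_i$, then for each $j\in\mathcal{N}(l)_i\setminus\mathcal{N}_i$ the enlargement $\overline{\mathcal{N}}_i\ni j$ (which is among those tested) forces the $(i,j)$ entry of the estimated local inverse to be $\le 3\eta$, and Lemma~\ref{lem:estpertaprrneigh} propagates this to the true entry $((2V-i\Omega)^{-1})_{ij}$. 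Assembling, the stitched matrix $\hat K:=\operatorname{LI}_{\hat{\mathcal{N}}}(2\hat V-i\Omega)$ built from the accepted neighborhoods satisfies $\|(2V-i\Omega)^{-1}-\hat K\|_\infty=\mathcal{O}(\epsilon')$ by one more Gershgorin/triangle-inequality pass (at most $2\xi(l)$ nonzero off-diagonal entries per row, each of size $\mathcal{O}(\eta)$).

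The third stage plugs $\hat K$ into the integral representation~\eqref{eq:locinvtoh} to define $\hat H$, and invokes Lemma~\ref{th:backpert} with $K_1=(2V-i\Omega)^{-1}$ and $K_2=\hat K$; the numerical $\epsilon'$ in the statement is chosen precisely so that $4\epsilon'<e^{-2\beta_{\max}}s^{-2}$ (the lemma's regime of validity) and $\|H-\hat H\|_\infty\le\kappa/2$. The promise $\kappa\le\min_{(i,j)\in\mathsf{E}}\max_{\delta_1,\delta_2}|H_{2i-\delta_1,2j-\delta_2}|$ then ensures that thresholding each $2\times 2$ block of $\hat H$ at $\kappa/2$ exactly recovers $\mathsf{E}$: every true edge has some entry of magnitude $\ge\kappa-\kappa/2=\kappa/2$, while every non-edge block has norm $\le\kappa/2$. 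The time bound~\eqref{equ:time_complexity_algo} is immediate by counting: $m$ vertices times $\binom{m}{\xi(l)}$ candidate neighborhoods times $\binom{m-\xi(l)}{\xi(l)}$ enlargements, each requiring an $\mathcal{O}(\xi(l)^3)$ block inversion.

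The main technical obstacle is the joint calibration of $l$, $\zeta$, $\eta$, and $\epsilon'$: I will need $\|E\|_\infty$ and $\Delta^{2l}\zeta$ each $\lesssim\eta$ (forcing $l$ logarithmic in $\eta^{-1}$ and $\zeta$ exponentially small in $l$), $\xi(l)\,\eta\lesssim\epsilon'$ so that the final Gershgorin rebuild only loses a factor $\xi(l)$, and $\epsilon'$ small enough for Lemma~\ref{th:backpert} to convert the $\hat K$-error into $\kappa/2$ accuracy on $H$. These constraints chain through the Stirling bound on $(2d_{\max})^{l+1}/(l+1)!$ and drive $\zeta$ down to the exponentially small value in the statement, producing the doubly-exponential dependence on $\beta_{\max}$ and $\Delta$; verifying that the specific numerical constants close the chain without circularity is the most delicate part of the write-up.
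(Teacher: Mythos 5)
Your proposal follows essentially the same route as the paper's proof: entrywise covariance estimation via Lemma~\ref{lem:cov_matrix}, a completeness/soundness analysis of the neighborhood search showing accepted neighborhoods miss only entries of size $\mathcal{O}(\eta)$, assembly of the stitched local inverse via Lemma~\ref{lem:estpertaprrneigh} with the error split between the Taylor remainder $E$ and the small missed entries, conversion to $\hat H$ through the integral representation and the perturbation bound of Lemma~\ref{th:backpert}, and thresholding at $\kappa/2$. The calibration chain you identify as the delicate part ($\|E\|_\infty,\Delta^{2l}\zeta\lesssim\eta$, $\xi(l)^2\eta\lesssim\epsilon'$, $\epsilon'$ small enough for the back-perturbation bound) is exactly how the paper closes the argument, so the plan is sound and matches the paper's proof.
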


\begin{figure}[ht]

\centering
\begin{minipage}{.9\linewidth}
\begin{algorithm}[H]
\caption{\textsf{LearnGraphGaussian}} 
	\label{alg:learnGG}
	\begin{algorithmic}[1]
        \State Input: estimate $\hat{V}$ of the covariance matrix $V$ of $m$-mode Gaussian state $\rho$, (bound on) cardinality of neighborhood $\xi$, threshold parameters $\eta,\kappa>0$.
\For{$i\in [m]$}
                    \For{$\mathcal{N}_i$ candidate neighborhoods of $i$ of cardinality $\xi(l)$}
                                    \State set $\hat{\mathcal{N}}_i=\mathcal{N}_i$
                                    \For{${\overline{\mathcal{N}}}_i \in$ sets of vertices in $[m]/\{{\mathcal{N}}_i\}$ of cardinality $\xi(l)$}
                                    \State Compute \{$\operatorname{LI}_{\mathcal{N}_i\cup \overline{\mathcal{N}}_i}(2\hat{V}-i\Omega)_{{2i-\delta_1},{2j-\delta_2}}$, $j\in \overline{\mathcal{N}}_i$, $\delta_1,\delta_2\in\{0,1\}\}$     
                                     \EndFor 
                                     \If{ $\max_{\overline{\mathcal{N}}_i}\max_{j\in \overline{\mathcal{N}}_i,\delta_1,\delta_2\in\{0,1\}}|\operatorname{LI}_{\mathcal{N}_i\cup \overline{\mathcal{N}}_i}(2\hat{V}-i\Omega)_{{2i-\delta_1},{2j-\delta_2}}|\leq 3\eta$}
                                     \State \textbf{break}
                                     \EndIf
                                     \EndFor
%                                    \EndWhile
                \EndFor
        \State For each $i\in[m]$, $j\in\hat{{\mathcal{N}}_i}$, eliminate element $j$ from $\hat{\mathcal{N}}_i$ if $i\notin\hat{\mathcal{N}}_j$.
        \State Compute estimate $\hat{H}$ of $H$ from $\operatorname{LI}_{\hat{\mathcal{N}}}(2\hat{V}-i\Omega)$ as in Equation~\ref{eq:locinvtoh}. 
        \State Output the edges $(i,j)$ such that ${\max_{\delta_1\in\{0,1\},\delta_2\in\{0,1\}}|\hat{H}_{2i-\delta_1,2j-\delta_2}|\geq\kappa/2}$.
	\end{algorithmic} 
\end{algorithm}
\caption{Algorithm for learning the graph of a Gaussian state}
\end{minipage}
\end{figure}

\begin{proof}
To learn the graph, we use Algorithm \ref{alg:learnGG}.
The goal of the algorithm is to find a neighborhood $\hat{{\mathcal{N}}}_i$ of each $i\in[m]$ such that the local inversion gives accurate estimates of the matrix elements corresponding to edges including $i$. We use the notation of Lemma~\ref{lem:notgoodneigh}.

At each test step of Algorithm \ref{alg:learnGG}, local inversions with neighborhoods $\widetilde{\mathcal{N}_i}:={\mathcal{N}}_i\cup \overline{\mathcal{N}}_i$ are performed, where $|\widetilde{\mathcal{N}_i}|\leq 2\xi(l)$ and $\eta>0$ is a fixed threshold parameter.  
Let us make the following assumptions:

\begin{itemize}
\item For any $j\in{\widetilde{\mathcal{N}}}_i$, $\delta_1,\delta_2\in\{0,1\}$, we have $|\operatorname{LI}_{\widetilde{\mathcal{N}}}(2V-i\Omega)_{2i-\delta_1,2j-\delta_2}-\operatorname{LI}_{\widetilde{\mathcal{N}}}(2\hat{V}-i\Omega)_{2i-\delta_1,2j-\delta_2}|\leq \eta$.
We can ensure this by increasing the number of copies of $\rho(H,t)$ so that each entry of $\hat V$ is close to the corresponding entry of $V$ with high probability. By the argument of \eqref{eq:inversentry}, with $N=2V-i\Omega$,  $\hat{N}=2\hat{V}-i\Omega$ and $M=(2V-i\Omega)^{-1}$, if $|2\hat{V}_{k,k'}-2V_{k,k'}|\leq{\zeta}$, with $\zeta\le \frac{1}{8\xi(l)}\|(2V-i\Omega)^{-1}\|_\infty^{-1}$ we get 
$|\operatorname{LI}_{\widetilde{\mathcal{N}}}(2V-i\Omega)_{2i-\delta_1,2j-\delta_2}-\operatorname{LI}_{\widetilde{\mathcal{N}}}(2\hat{V}-i\Omega)_{2i-\delta_1,2j-\delta_2}|\leq 8\xi(l)\|(2V-i\Omega)^{-1}\|_\infty\zeta$. Thus, we need to ensure also 
\begin{equation}
\zeta\leq \frac{\|(2V-i\Omega)^{-1}\|_\infty^{-1}\eta}{8\xi(l)}.
\end{equation}
\item  When ${\mathcal{N}}_i$ coincides with the $l$-neighborhood $ \mathcal{N}(l)_i$ (see Definition \ref{nbhdl}), we assume that \\$\max_{\delta\in\{0,1\}}\|B_{i,\delta}-\widetilde{B}_{i,\delta}\|_\infty \|M^{-1}\|_{\infty}\|M\|_{\infty}\leq \eta\,,$ and at the same time $\|E\|_\infty\leq \eta$ for the error $E$ defined in Lemma \ref{th:seriesapprox}, {which we can always ensure by taking $l$ large enough}. In fact, it suffices to impose that 
\begin{align}\label{conditionESdd}
\|E\|_\infty\|M^{-1}\|_{\infty}\|M\|_{\infty}\leq \|E\|_\infty\|S\|_\infty^4\frac{(1-e^{-2d_{\max}})e^{2d_{\max}}}{1-e^{-2d_{\min}}} %\leq \|E\|_\infty\|S\|_\infty^4\frac{e^{2d_{\max}}}{1-e^{-2d_{\min}}} 
\leq \eta\,,
\end{align}

since $B_{i,\delta}-\widetilde{B}_{i,\delta}$ is a sub-matrix of $E$ and $\|S\|_\infty^4{e^{2d_{\max}}\frac{1-e^{-2d_{\max}}}{1-e^{-2d_{\min}}}}\geq 1$ .

For any ${\mathcal{N}}_i$ such that ${\mathcal{N}}_i(l)\subseteq \mathcal{N}_i$ the same relation will hold.
\end{itemize}

At each search step of the algorithm, there are two cases.

\begin{itemize}
\item ${\mathcal{N}}_i$ already contains $\mathcal{N}(l)_i$, the $l$-neighborhood of $i$. Then, all $\operatorname{LI}_{\mathcal{N}_i\cup \overline{\mathcal{N}}_i}(2\hat{V}-i\Omega)_{2i-\delta_1,2j-\delta_2}$ for $j\in \overline{\mathcal{N}}_i$, $\delta_1,\delta_2\in\{0,1\}$, satisfy
\begin{align*}
&|\operatorname{LI}_{\mathcal{N}_i\cup \overline{\mathcal{N}}_i}(2\hat{V}-i\Omega)_{2i-\delta_1,2j-\delta_2}-(2{V}-i\Omega)^{-1}_{2i-\delta_1,2j-\delta_2}|\\
&\qquad\qquad= |\operatorname{LI}_{\mathcal{N}_i\cup \overline{\mathcal{N}}_i}(2\hat{V}-i\Omega)_{2i-\delta_1,2j-\delta_2}-E_{2i-\delta_1,2j-\delta_2}|\\
&\qquad\qquad\leq |\operatorname{LI}_{\mathcal{N}_i\cup \overline{\mathcal{N}}_i}(2\hat{V}-i\Omega)_{2i-\delta_1,2j-\delta_2}-\operatorname{LI}_{\mathcal{N}_i\cup \overline{\mathcal{N}}_i}(2{V}-i\Omega)_{2i-\delta_1,2j-\delta_2}|\\
&\qquad\qquad\qquad +|\operatorname{LI}_{\mathcal{N}_i\cup \overline{\mathcal{N}}_i}(2{V}-i\Omega)_{2i-\delta_1,2j-\delta_2}-E_{2i-\delta_1,2j-\delta_2}| \,.
\end{align*}

The first term is bounded by $\eta$ by assumption. The second term is bounded via Lemma~\ref{theoapprinvsparse} by $\|B_{i,\delta_1}-\widetilde{B}_{i,\delta_1}\|_\infty \|M^{-1}\|_{\infty}\|M\|_{\infty}$, since the off-diagonal blocks $B_{i,\delta}$ defined with respect to the neighborhood structure $\mathcal{N}$ include those defined for the set $\mathcal{N}_i\cup \overline{\mathcal{N}}_i$, implying 
$$|\operatorname{LI}_{\mathcal{N}_i\cup \overline{\mathcal{N}}_i}(2\hat{V}-i\Omega)_{i+\delta_1,j+\delta_2}|\leq \|B_{i,\delta_1}-\widetilde{B}_{i,\delta_1}\|_\infty \|M^{-1}\|_{\infty}\|M\|_{\infty}+\|E\|_\infty+\eta\leq 3\eta\,.$$
%\leq 3\eta$. 
\item If ${\mathcal{N}}_i$ does not contain $\mathcal{N}(l)_i$, there is an adversarial neighborhood $\overline{\mathcal{N}}_i$ such that ${\mathcal{N}}_i\cup \overline{\mathcal{N}}_i$ contains $\mathcal{N}(l)_i$. Suppose there is at least one matrix element $|(2V-i\Omega)^{-1}_{2i-\delta_1,2j-\delta_2}|\geq 6\eta $ with $j\in \overline{\mathcal{N}}_i$,$\delta_1,\delta_2\in\{0,1\}$ for some $\overline{\mathcal{N}}_i$ with the aforementioned property. Then, we have
\begin{align*}
&|\operatorname{LI}_{\mathcal{N}_i\cup \overline{\mathcal{N}}_i}(2\hat{V}-i\Omega)_{2i-\delta_1,2j-\delta_2}-(2{V}-i\Omega)^{-1}_{2i-\delta_1,2j-\delta_2}|\\
&\qquad \leq |\operatorname{LI}_{\mathcal{N}_i\cup \overline{\mathcal{N}}_i}(2\hat{V}-i\Omega)_{2i-\delta_1,2j-\delta_2}-\operatorname{LI}_{\mathcal{N}_i\cup \overline{\mathcal{N}}_i}(2{V}-i\Omega)_{2i-\delta_1,2j-\delta_2}|\\
&\qquad +|\operatorname{LI}_{\mathcal{N}_i\cup \overline{\mathcal{N}}_i}(2{V}-i\Omega)_{2i-\delta_1,2j-\delta_2}-(2{V}-i\Omega)^{-1}_{2i-\delta_1,2j-\delta_2}|,
\end{align*}

and by the same reason as before, we obtain 
\begin{align}
|\operatorname{LI}_{\mathcal{N}_i\cup \overline{\mathcal{N}}_i}(2\hat{V}-i\Omega)_{2i-\delta_1,2j-\delta_2}|\geq |(2{V}-i\Omega)^{-1}_{2i-\delta_1,2j-\delta_2}|-2\eta\geq 4\eta \,.
\end{align}

\end{itemize}

This implies that at the end of the process we obtain a set of neighborhoods $\hat{\mathcal{N}}:=\{{\mathcal{N}}_{i}\}_{i\in[m]}$ such that the matrix elements corresponding to edges which are in actual neighborhoods ${\mathcal{N}}_{i}(l)$ but not in $\hat{\mathcal{N}}_{i}$ are less than $6\eta$ in absolute value. The elimination steps ensure that $\{\hat{\mathcal{N}}_i\}$ form a neighborhood structure, and we only lose matrix elements smaller than $6\eta$.

The error made when estimating $(2V-i\Omega)^{-1}$  by $\operatorname{LI}_{\hat{\mathcal{N}}}(2\hat{V}-i\Omega)$ is given by Lemma~\ref{lem:estpertaprrneigh}, as follows: with $M=(2V-i\Omega)^{-1}$ and $E$ as in Equation~\ref{analyticform}, if $|2\hat{V}_{k,k'}-2V_{k,k'}|\leq{\zeta}\leq 2\eta$ and $\zeta\leq \frac{1}{4\xi(l)}{\|(2V-i\Omega)^{-1}\|_{\infty}^{-1}}$, then,
\begin{align}
\|M-\operatorname{LI}_{\hat{\mathcal{N}}}(2\hat{V}-i\Omega)\|_{\infty}&\leq 2\max_{i\in [m]}\max_{\delta\in\{0,1\}}{|\hat{\mathcal{N}}_i|}\|B_{i,\delta}-\widetilde{B}_{i,\delta}\|_{\infty} \|M^{-1}\|_{\infty}\|M\|_{\infty}+(2\max_{i\in [m]}{|\hat{\mathcal{N}}_i|}+1)\|E\|_{\infty}\nonumber\\
&+\max_{i\in[m]}\max_{\delta_1\in\{0,1\}}\sum_{j\in \mathcal{N}(l)_i, j\notin \hat{\mathcal{N}}_i,\delta_2\in\{0,1\}}|(M-E)_{2i-\delta_1,2j-\delta_2}|+8\|M\|_{\infty}^{2}(\max_{i}|{\hat{\mathcal{N}}}_i|)^2\zeta.
\end{align}

We can divide the complement of $\hat{\mathcal{N}}_i$ in two parts: a) the vertices that are in $\mathcal{N}(l)_i$ b) The vertices that are not in $\mathcal{N}(l)_i$. Accordingly, we can split $B_{i,\delta}-\widetilde{B}_{i,\delta}$ in two parts: the first part are the rows corresponding to a), the second the rows corresponding to b) and $B_{i,\delta}-\widetilde{B}_{i,\delta}$ can be bounded by the sum of the norms of the two sub-columns. Since the sub-column corresponding to b) is a submatrix of $E$, we can bound its norm by $\|E\|_{\infty}\leq
\eta$. The sub-column corresponding to a) is instead made of at most $2\xi(l)$ entries which are bounded in absolute value by $6\eta$, so its norm is bounded by $12\xi(l)\eta$. Thus, we have $\|B_{i,\delta}-\widetilde{B}_{i,\delta}\|_{\infty}\leq \|E\|_{\infty}+12\xi(l)\eta\leq 13\xi(l)\eta$.
Moreover, for ${j\in \mathcal{N}(l)_i, j\notin \hat{\mathcal{N}}_i,\delta_1,\delta_2\in\{0,1\}}$, $|(M-E)_{2i-\delta_1,2j-\delta_2}|\leq |M_{2i-\delta_1,2j-\delta_2}|+\|E\|_{\infty}\leq 7\eta$, and $\max_{i\in[m]}\max_{\delta_1\in\{0,1\}}\sum_{j\in \mathcal{N}(l)_i, j\notin \hat{\mathcal{N}}_i,\delta_1,\delta_2\in\{0,1\}}|(M-E)_{2i-\delta_1,2j-\delta_2}|\leq 14\xi(l)\eta$.

Putting all these bounds together, we obtain after simplification
\begin{align*}
&\|M-\operatorname{LI}_{\hat{\mathcal{N}}}(2\hat{V}-i\Omega)\|_{\infty}\\
&\qquad\qquad \leq 
 26\xi(l)^2 \eta\|S\|_\infty^4\frac{1-e^{-2d_{\max}}}{1-e^{-2d_{\min}}}e^{2d_{\max}}+(16\xi(l)+1)\eta+2\xi(l)^2\zeta\|S\|^4_{\infty}e^{4d_{\max}}(1-e^{-2d_{\max}})^2\\
 &\qquad\qquad \le 50\, \xi(l)^2\eta\|S\|_\infty^4{\frac{1-e^{-2d_{\max}}}{1-e^{-2d_{\min}}}e^{2d_{\max}}}+2\xi(l)^2\zeta\|S\|^4_{\infty}e^{4d_{\max}}\,,
\end{align*}
where in the last line we used that $\xi(l),\|S\|_\infty\ge 1$. Therefore, we can make the above term as small as we want if we can choose $l$ as a function of $\eta$ such that $\lim_{\eta\rightarrow 0}\xi(l(\eta))\eta =0$, such that all the above assumptions also hold. $\zeta$ is then computed as a function of $\xi(l(\eta))$.
Say that we want to ensure $50\, \xi(l)^2\eta\|S\|_\infty^4{\frac{1-e^{-2d_{\max}}}{1-e^{-2d_{\min}}}e^{2d_{\max}}}\leq \epsilon'$. Recall that from Equation \eqref{conditionESdd} we also need to ensure $\|E\|_\infty\|S\|_\infty^4{\frac{1-e^{-2d_{\max}}}{1-e^{-2d_{\min}}}e^{2d_{\max}}} \leq \eta$. 

The former condition can be satisfied if we take 
$$\eta= \frac{\epsilon'}{e^{2d_{\max}}50\, \Delta^{2l}\|S\|_\infty^4}{\frac{1-e^{-2d_{\min}}}{1-e^{-2d_{\max}}}}
\leq \frac{\epsilon'}{e^{2d_{\max}}50\, \xi(l)^{2}\|S\|_\infty^4}{\frac{1-e^{-2d_{\min}}}{1-e^{-2d_{\max}}}}\,,$$
 since $\xi(l)\leq \Delta^l$, while the latter condition is implied by
\begin{align}
\|E\|_\infty\le \frac{\|S\|_{\infty}^2}{{2}}\left(\frac{(2d_{\max})^{l+1}e^{2d_{\max}}}{(l+1)!}\right)&\overset{!}{\le}\frac{\epsilon'}{e^{4d_{\max}}50\, \Delta^{2l}\|S\|_\infty^8}{\left(\frac{1-e^{-2d_{\min}}}{1-e^{-2d_{\max}}}\right)^2}\\&{\leq} \frac{\epsilon'}{e^{4d_{\max}}50\, \xi(l)^2\|S\|_\infty^8}{\left(\frac{1-e^{-2d_{\min}}}{1-e^{-2d_{\max}}}\right)^2}\,,
\end{align}
where we further used the bound on $\|E\|_\infty$ derived in Lemma \ref{th:seriesapprox}. This is ensured by taking (Lemma~\ref{lem:Lambert})
\begin{align*}l&=\left\lfloor(2\Delta^2 d_{\max} e)\exp{\left(W\left(\frac{\log\frac{C(S,d_{\min},d_{\max})}{\eps'}}{2\Delta^2 d_{\max }e}\right)\right)}\right\rfloor,
\end{align*}
 where we denoted $C(S,d_{\min},d_{\max}):=25\|S\|_\infty^{10}e^{6d_{\max}}{\left(\frac{1-e^{-2d_{\max}}}{1-e^{-2d_{\min}}}\right)^2}$. Then, $2\xi(l)^2\zeta\|S\|^4_{\infty}e^{4d_{\max}}\leq \epsilon'$ {implies that $\|M-\operatorname{LI}_{\hat{\mathcal{N}}}(2\hat{V}-i\Omega)\|_\infty\le 2\epsilon'$. Moreover, assuming $\eta\leq 1$, $\zeta\leq 2\eta$, $\zeta\le\frac{1}{8\xi(l)}\|(2V-i\Omega)^{-1}\|_\infty^{-1}\le \frac{1}{4\xi(l)}\|(2V-i\Omega)^{-1}\|_\infty^{-1}$, and ${\zeta\leq \frac{\|(2V-i\Omega)^{-1}\|_\infty^{-1}\eta}{8\xi(l)}}$ are ensured by simply the last inequality, which is implied by taking ${\zeta\leq \frac{e^{-2d_{\max}}\eta}{4\|S\|_\infty^2\xi(l)}}$, as a consequence of \eqref{eq:minsing1} {(this also makes the assumption $\eta\leq 1$ verified)}. All of these inequalities are satisfied by choosing
\begin{align*}
\zeta&={ \frac{\epsilon' e^{-4d_{\max}}}{200\, \Delta^{3l}\|S\|_\infty^6}\frac{1-e^{-2d_{\min}}}{1-e^{-2d_{\max}}}}\,,
\end{align*}
with $C'(S,d_{\max}):=2\|S\|_\infty^4e^{4d_{\max}}$.} Now, we want $\epsilon'$ to be small enough so that $\|\hat{H}-H\|_{\infty}\leq \kappa/2$, where
we compute $\hat{H}$ as in Equation~\ref{eq:locinvtoh} from $\operatorname{LI}_{\hat{\mathcal{N}}}(2\hat{V}-i\Omega)$. As in Remark~\ref{rem:hbound} we can take $\epsilon'=\min\left(\kappa\frac{e^{-2d_{\max}}\|S\|^{-2}_{\infty}}{16\left(1+{e^{4d_{\max}}\|S\|_{\infty}^4(1-e^{-2d_{\max}})}\right)}, \frac{e^{-2d_{\max}}\|S\|_{\infty}^{-2}}{8}\right)$.
Since now we ensured that $\|\hat{H}-H\|_{\infty}\leq \kappa/2$, we can learn the correct edges. By Lemma \ref{lem:cov_matrix}, the number of copies we used to get $\hat{V}$ satisfying $|\hat{V}_{k,k'}-V_{k,k'}|\le \zeta/2$ is

\begin{align}
N\ge  \frac{2^{14} (\|S\|^2_{\infty}\frac{1}{1 -e^{-2d_{\min}}}+\max_i|t_i|)^2(1+\max_i|t_i|)^2}{\zeta^2}\log\left(\frac{4m+1}{\delta}\right).
\end{align}

{The theorem statement is then obtained by noting that by substituting $\|S\|_{\infty}$, $d_{\max}$ and $d_{\min}$ with their bounds the assumptions we checked to prove the theorem are still satisfied.}

This concludes the analysis of the sample complexity. Let us now analyse the computational complexity of the algorithm.
First, we iterate over all possible vertices and over all candidate neighborhoods of size $\xi(l)$. There are $\binom{m}{\xi(l)}$ of these. For each of these, we then iterate over all remaining subsets of size $\xi(l)$, of which there are {$\binom{m-\xi(l)}{\xi(l)}$}, and for each of them we need to invert a matrix of size $2\xi(l)$, which takes time $\mathcal{O}(\xi(l)^3)$. Then checking if the threshold condition is met takes time $\xi(l)^2$. We conclude that iterating over all possible neighborhoods and remaining candidates takes the time claimed in Equation~\eqref{equ:time_complexity_algo}, $ \mathcal{O}\left(m\binom{m-\xi(l)}{\xi(l)}\binom{m}{\xi(l)}\xi(l)^3\right)$. It is then easy to check that the remaining steps only have a subleading order contribution.
\end{proof}
\begin{rem}
For the sample complexity, assuming all relevant parameters of constant order except the threshold $\kappa$, we see that the number of samples required will be of order $\mathcal{O}(\kappa^{-2-\gamma}\log(m\delta^{-1}))$ for any $\gamma>0$ and, crucially, is only logarithmic in the number of modes.
Regarding the computational complexity, as in the worst case we have $\xi(l)\leq \Delta^l$ and given our bound on $l$, the worst-case computational complexity of our algorithm is polynomial in $m$ with the exponent depending {on single parameters, when the other are fixed, as $e^{O(\Delta\log \Delta)}$, $e^{O(d_{\max})}$, $o\left(\frac{1}{d_{\min}^{\gamma}}\right)$, $o\left(\frac{1}{\kappa^{\gamma}}\right)$, $\forall \gamma>0$.} Again, assuming these parameters to be of constant order, we see that the graph learning algorithm runs in polynomial time.
\end{rem}
\begin{rem}
As was the case for Hamiltonian learning, we can reduce the sample complexity of the graph learning protocol to scale like $\tilde O(\kappa^{-2})$ if we are willing to assume that the neighborhoods only grow polynomially in distance. However, this might be an assumption that is more difficult to verify or justify when the graph is unknown.
\end{rem}

\section{Data availability}

This study did not generate or analyze any datasets. 

\section{Code availability}

The code used to generate the figures in this paper is availabe at \href{https://github.com/dsfranca/QuantumGaussianLearning}{this Github repository}.

\section{Acknowledgements}
We thank Antonio Anna Mele for informing us of an issue with the proof of concentration of the covariance estimator in the first version of the paper.
M. F. is supported by the European Research Council (ERC) under Agreement 818761 and by VILLUM FONDEN via the QMATH Centre of Excellence (Grant No. 10059). M.F. was previously supported by a Juan de la Cierva Formaci\'on fellowship (Spanish MCIN project FJC2021-047404-I), with funding from MCIN/AEI/10.13039/501100011033 and European Union NextGenerationEU/PRTR, by European Space Agency, project ESA/ESTEC 2021-01250-ESA, by Spanish MCIN (project PID2022-141283NB-I00) with the support of FEDER funds, by the Spanish MCIN with funding from European Union NextGenerationEU (grant PRTR-C17.I1) and the Generalitat de Catalunya, as well as the Ministry of Economic Affairs and Digital Transformation of the Spanish Government through the QUANTUM ENIA ``Quantum Spain'' project with funds from the European Union through the Recovery, Transformation and Resilience Plan - NextGenerationEU within the framework of the "Digital Spain 2026 Agenda". C.R. acknowledge financial support from the ANR project QTraj (ANR-20-CE40-0024-01) of the French National Research Agency (ANR) as well as France 2030 under the French National Research Agency award number “ANR-22-PNCQ-0002”. D.S.F. acknowledges financial support from the Novo Nordisk
Foundation (Grant No. NNF20OC0059939 Quantum for Life) and by France 2030 under the French National Research Agency award number “ANR-22-PNCQ-0002”. 

\bibliographystyle{alpha}
\bibliography{tomography}

\newcommand{\etalchar}[1]{$^{#1}$}
\begin{thebibliography}{WPGP{\etalchar{+}}12}

\bibitem[AA24]{anshu2024survey}
Anurag Anshu and Srinivasan Arunachalam.
\newblock A survey on the complexity of learning quantum states.
\newblock {\em Nature Reviews Physics}, 6(1):59--69, 2024.

\bibitem[AAKS21]{2004.07266}
Anurag Anshu, Srinivasan Arunachalam, Tomotaka Kuwahara, and Mehdi
  Soleimanifar.
\newblock Sample-efficient learning of interacting quantum systems.
\newblock {\em Nature Physics}, 17(8):931--935, August 2021.

\bibitem[AKT{\etalchar{+}}23]{10.1145/3564246.3585194}
Daniel Alabi, Pravesh~K. Kothari, Pranay Tankala, Prayaag Venkat, and Fred
  Zhang.
\newblock Privately estimating a gaussian: Efficient, robust, and optimal.
\newblock In {\em Proceedings of the 55th Annual ACM Symposium on Theory of
  Computing}, STOC 2023, page 483–496, New York, NY, USA, 2023. Association
  for Computing Machinery.

\bibitem[BDJ{\etalchar{+}}22]{10.1145/3519935.3519953}
Ainesh Bakshi, Ilias Diakonikolas, He~Jia, Daniel~M. Kane, Pravesh~K. Kothari,
  and Santosh~S. Vempala.
\newblock Robustly learning mixtures of k arbitrary gaussians.
\newblock In {\em Proceedings of the 54th Annual ACM SIGACT Symposium on Theory
  of Computing}, STOC 2022, page 1234–1247, New York, NY, USA, 2022.
  Association for Computing Machinery.

\bibitem[Bes74]{besag_spatial_1974}
Julian Besag.
\newblock Spatial {Interaction} and the {Statistical} {Analysis} of {Lattice}
  {Systems}.
\newblock {\em Journal of the Royal Statistical Society. Series B
  (Methodological)}, 36(2):192--236, 1974.
\newblock Publisher: [Royal Statistical Society, Oxford University Press].

\bibitem[BJ15]{Bhatia2015}
Rajendra Bhatia and Tanvi Jain.
\newblock {On symplectic eigenvalues of positive definite matrices}.
\newblock {\em Journal of Mathematical Physics}, 56(11):112201, 11 2015.

\bibitem[BLB04]{Boucheron2004}
Stéphane Boucheron, Gábor Lugosi, and Olivier Bousquet.
\newblock {\em Concentration Inequalities}, page 208–240.
\newblock Springer Berlin Heidelberg, 2004.

\bibitem[BLMT24a]{bakshi2024learning}
Ainesh Bakshi, Allen Liu, Ankur Moitra, and Ewin Tang.
\newblock Learning quantum hamiltonians at any temperature in polynomial time.
\newblock In {\em Proceedings of the 56th Annual ACM Symposium on Theory of
  Computing}, pages 1470--1477, 2024.

\bibitem[BLMT24b]{bakshi2024structure}
Ainesh Bakshi, Allen Liu, Ankur Moitra, and Ewin Tang.
\newblock Structure learning of hamiltonians from real-time evolution.
\newblock {\em arXiv preprint arXiv:2405.00082}, 2024.

\bibitem[BMM{\etalchar{+}}24]{bittel2024optimalestimatestracedistance}
Lennart Bittel, Francesco~Anna Mele, Antonio~Anna Mele, Salvatore Tirone, and
  Ludovico Lami.
\newblock Optimal estimates of trace distance between bosonic gaussian states
  and applications to learning.
\newblock {\em arXiv preprint arXiv:2411.02368}, 2024.

\bibitem[Bre15]{bresler2015efficiently}
Guy Bresler.
\newblock Efficiently learning ising models on arbitrary graphs.
\newblock In {\em Proceedings of the forty-seventh annual ACM symposium on
  Theory of computing}, pages 771--782, 2015.

\bibitem[CAN25]{chen2025learning}
Chi-Fang Chen, Anurag Anshu, and Quynh~T Nguyen.
\newblock Learning quantum gibbs states locally and efficiently, 2025.
\newblock arXiv preprint arXiv:2504.02706.

\bibitem[CE06]{Cramer_2006}
M~Cramer and J~Eisert.
\newblock Correlations, spectral gap and entanglement in harmonic quantum
  systems on generic lattices.
\newblock {\em New Journal of Physics}, 8(5):71, may 2006.

\bibitem[CHK{\etalchar{+}}20]{carusotto_photonic_2020}
Iacopo Carusotto, Andrew~A. Houck, Alicia~J. Kollár, Pedram Roushan, David~I.
  Schuster, and Jonathan Simon.
\newblock Photonic materials in circuit quantum electrodynamics.
\newblock {\em Nature Physics}, 16(3):268--279, March 2020.
\newblock Publisher: Nature Publishing Group.

\bibitem[CLAM22]{Cenni2022}
Marina~F.B. Cenni, Ludovico Lami, Antonio Acín, and Mohammad Mehboudi.
\newblock Thermometry of gaussian quantum systems using gaussian measurements.
\newblock {\em Quantum}, 6:743, June 2022.

\bibitem[CLL11]{cai_constrained_2011}
Tony Cai, Weidong Liu, and Xi~Luo.
\newblock A {Constrained} {$\ell_1$} {Minimization} {Approach} to {Sparse}
  {Precision} {Matrix} {Estimation}.
\newblock {\em Journal of the American Statistical Association},
  106(494):594--607, June 2011.
\newblock Publisher: ASA Website \_eprint:
  https://doi.org/10.1198/jasa.2011.tm10155.

\bibitem[Dav74]{davies_markovian_1974}
E.~B. Davies.
\newblock Markovian master equations.
\newblock {\em Communications in Mathematical Physics}, 39(2):91--110, June
  1974.

\bibitem[Dem72]{dempster1972covariance}
A.~P. Dempster.
\newblock Covariance selection.
\newblock {\em Biometrics}, 28(1):157--175, 1972.

\bibitem[FGL{\etalchar{+}}23]{fanizza2023learning}
Marco Fanizza, Niklas Galke, Josep Lumbreras, Cambyse Rouz{\'e}, and Andreas
  Winter.
\newblock Learning finitely correlated states: stability of the spectral
  reconstruction.
\newblock {\em arXiv preprint arXiv:2312.07516}, 2023.

\bibitem[FGLE12]{flammia2012quantum}
Steven~T Flammia, David Gross, Yi-Kai Liu, and Jens Eisert.
\newblock Quantum tomography via compressed sensing: error bounds, sample
  complexity and efficient estimators.
\newblock {\em New Journal of Physics}, 14(9):095022, 2012.

\bibitem[FHT08]{friedman2008sparse}
Jerome Friedman, Trevor Hastie, and Robert Tibshirani.
\newblock Sparse inverse covariance estimation with the graphical lasso.
\newblock {\em Biostatistics}, 9(3):432--441, 2008.

\bibitem[GEF19]{gluza_equilibration_2019}
Marek Gluza, Jens Eisert, and Terry Farrelly.
\newblock Equilibration towards generalized {Gibbs} ensembles in
  non-interacting theories.
\newblock {\em SciPost Physics}, 7(3):038, September 2019.

\bibitem[HH08]{hoorfar2008inequalities}
Abdolhossein Hoorfar and Mehdi Hassani.
\newblock Inequalities on the lambert w function and hyperpower function.
\newblock {\em J. Inequal. Pure and Appl. Math}, 9(2):5--9, 2008.

\bibitem[HHJ{\etalchar{+}}17]{Haah2017}
Jeongwan Haah, Aram~Wettroth Harrow, Zhengfeng Ji, Xiaodi Wu, and Nengkun Yu.
\newblock Sample-optimal tomography of quantum states.
\newblock {\em {IEEE} Trans. Inf. Theory}, 63(9):5628--5641, 2017.

\bibitem[HKM17]{hamilton2017information}
Linus Hamilton, Frederic Koehler, and Ankur Moitra.
\newblock Information theoretic properties of markov random fields, and their
  algorithmic applications.
\newblock {\em Advances in Neural Information Processing Systems}, 30, 2017.

\bibitem[HKP20]{huang2020predicting}
Hsin-Yuan Huang, Richard Kueng, and John Preskill.
\newblock Predicting many properties of a quantum system from very few
  measurements.
\newblock {\em Nature Physics}, 16(10):1050--1057, 2020.

\bibitem[HKT22a]{haah2022optimal}
Jeongwan Haah, Robin Kothari, and Ewin Tang.
\newblock Optimal learning of quantum hamiltonians from high-temperature gibbs
  states.
\newblock In {\em 2022 IEEE 63rd Annual Symposium on Foundations of Computer
  Science (FOCS)}, pages 135--146. IEEE, 2022.

\bibitem[HKT{\etalchar{+}}22b]{huang2022provably}
Hsin-Yuan Huang, Richard Kueng, Giacomo Torlai, Victor~V Albert, and John
  Preskill.
\newblock Provably efficient machine learning for quantum many-body problems.
\newblock {\em Science}, 377(6613):eabk3333, 2022.

\bibitem[Hol12]{Holevo2012}
Alexander~S Holevo.
\newblock {\em Quantum systems, channels, information: a mathematical
  introduction}, volume~16.
\newblock Walter de Gruyter, 2012.

\bibitem[Hol24a]{holevo2024estimatesburesdistancebosonic}
A.~S. Holevo.
\newblock On estimates of the bures distance between bosonic gaussian states,
  2024.
\newblock arXiv preprint arXiv:2412.04875.

\bibitem[Hol24b]{holevo2024estimates}
A.~S. Holevo.
\newblock On estimates of trace-norm distance between quantum gaussian states,
  2024.
\newblock arXiv preprint arXiv:2408.11400.

\bibitem[HTF09]{hastie_elements_2009}
Trevor Hastie, Robert Tibshirani, and Jerome Friedman.
\newblock {\em The {Elements} of {Statistical} {Learning}}.
\newblock Springer {Series} in {Statistics}. Springer, New York, NY, 2009.

\bibitem[HTFS23]{huang2023learning}
Hsin-Yuan Huang, Yu~Tong, Di~Fang, and Yuan Su.
\newblock Learning many-body hamiltonians with heisenberg-limited scaling.
\newblock {\em Physical Review Letters}, 130(20):200403, 2023.

\bibitem[HTW15]{hastie_statistical_2015}
Trevor Hastie, Robert Tibshirani, and Martin Wainwright.
\newblock {\em Statistical {Learning} with {Sparsity}: {The} {Lasso} and
  {Generalizations}}.
\newblock Chapman and Hall/CRC, New York, May 2015.

\bibitem[JBC{\etalchar{+}}98]{jaksch_cold_1998}
D.~Jaksch, C.~Bruder, J.~I. Cirac, C.~W. Gardiner, and P.~Zoller.
\newblock Cold {Bosonic} {Atoms} in {Optical} {Lattices}.
\newblock {\em Physical Review Letters}, 81(15):3108--3111, October 1998.
\newblock Publisher: American Physical Society.

\bibitem[KKMM20]{kelner2020learning}
Jonathan Kelner, Frederic Koehler, Raghu Meka, and Ankur Moitra.
\newblock Learning some popular gaussian graphical models without condition
  number bounds.
\newblock {\em Advances in Neural Information Processing Systems},
  33:10986--10998, 2020.

\bibitem[KM17]{klivans2017learning}
Adam Klivans and Raghu Meka.
\newblock Learning graphical models using multiplicative weights.
\newblock In {\em 2017 IEEE 58th Annual Symposium on Foundations of Computer
  Science (FOCS)}, pages 343--354. IEEE, 2017.

\bibitem[KM23]{katz_programmable_2023}
Or~Katz and Christopher Monroe.
\newblock Programmable {Quantum} {Simulations} of {Bosonic} {Systems} with
  {Trapped} {Ions}.
\newblock {\em Physical Review Letters}, 131(3):033604, July 2023.
\newblock Publisher: American Physical Society.

\bibitem[LHT{\etalchar{+}}24]{lewis2024improved}
Laura Lewis, Hsin-Yuan Huang, Viet~T Tran, Sebastian Lehner, Richard Kueng, and
  John Preskill.
\newblock Improved machine learning algorithm for predicting ground state
  properties.
\newblock {\em nature communications}, 15(1):895, 2024.

\bibitem[LM23]{Liu2023}
Allen Liu and Ankur Moitra.
\newblock Robustly learning general mixtures of gaussians.
\newblock {\em J. ACM}, 70(3), May 2023.

\bibitem[LSA{\etalchar{+}}12]{lewenstein_ultracold_2012}
Maciej Lewenstein, Anna Sanpera, Verònica Ahufinger, Maciej Lewenstein, Anna
  Sanpera, and Verònica Ahufinger.
\newblock {\em Ultracold {Atoms} in {Optical} {Lattices}: {Simulating} quantum
  many-body systems}.
\newblock Oxford University Press, Oxford, New York, March 2012.

\bibitem[LTN{\etalchar{+}}]{li2307heisenberg}
Haoya Li, Yu~Tong, Hongkang Ni, Tuvia Gefen, and Lexing Ying.
\newblock Heisenberg-limited hamiltonian learning for interacting bosons
  (2023).
\newblock {\em arXiv preprint arXiv:2307.04690}.

\bibitem[MB06]{meinshausen_high-dimensional_2006}
Nicolai Meinshausen and Peter Bühlmann.
\newblock High-dimensional graphs and variable selection with the {Lasso}.
\newblock {\em The Annals of Statistics}, 34(3):1436--1462, June 2006.
\newblock Publisher: Institute of Mathematical Statistics.

\bibitem[MBC{\etalchar{+}}23]{mobus2023dissipation}
Tim M{\"o}bus, Andreas Bluhm, Matthias~C Caro, Albert~H Werner, and Cambyse
  Rouz{\'e}.
\newblock Dissipation-enabled bosonic hamiltonian learning via new
  information-propagation bounds.
\newblock {\em arXiv preprint arXiv:2307.15026}, 2023.

\bibitem[MMB{\etalchar{+}}25]{mele2024learning}
Francesco~A. Mele, Antonio~A. Mele, Lennart Bittel, Jens Eisert, Vittorio
  Giovannetti, Ludovico Lami, Lorenzo Leone, and Salvatore F.~E. Oliviero.
\newblock Learning quantum states of continuous-variable systems.
\newblock {\em Nature Physics}, November 2025.

\bibitem[MMY{\etalchar{+}}25]{montenegro_review_2025}
Victor Montenegro, Chiranjib Mukhopadhyay, Rozhin Yousefjani, Saubhik Sarkar,
  Utkarsh Mishra, Matteo G.~A. Paris, and Abolfazl Bayat.
\newblock Review: {Quantum} metrology and sensing with many-body systems.
\newblock {\em Physics Reports}, 1134:1--62, July 2025.

\bibitem[MVL20]{misra20a}
Sidhant Misra, Marc Vuffray, and Andrey~Y. Lokhov.
\newblock Information theoretic optimal learning of gaussian graphical models.
\newblock In Jacob Abernethy and Shivani Agarwal, editors, {\em Proceedings of
  Thirty Third Conference on Learning Theory}, volume 125 of {\em Proceedings
  of Machine Learning Research}, pages 2888--2909. PMLR, 09--12 Jul 2020.

\bibitem[Nar24]{narayanan2024improved}
Shyam Narayanan.
\newblock Improved algorithms for learning quantum hamiltonians, via flat
  polynomials.
\newblock {\em arXiv preprint arXiv:2407.04540}, 2024.

\bibitem[NLSKA18]{nichols_multiparameter_2018}
Rosanna Nichols, Pietro Liuzzo-Scorpo, Paul~A. Knott, and Gerardo Adesso.
\newblock Multiparameter {Gaussian} {Quantum} {Metrology}.
\newblock {\em Physical Review A}, 98(1):012114, July 2018.
\newblock arXiv:1711.09132 [quant-ph].

\bibitem[OW16]{Wright2016}
Ryan O'Donnell and John Wright.
\newblock Efficient quantum tomography.
\newblock In Daniel Wichs and Yishay Mansour, editors, {\em Proceedings of the
  48th Annual {ACM} {SIGACT} Symposium on Theory of Computing, {STOC} 2016,
  Cambridge, MA, USA, June 18-21, 2016}, pages 899--912. {ACM}, 2016.

\bibitem[PBSM15]{peano_topological_2015}
V.~Peano, C.~Brendel, M.~Schmidt, and F.~Marquardt.
\newblock Topological {Phases} of {Sound} and {Light}.
\newblock {\em Physical Review X}, 5(3):031011, July 2015.
\newblock arXiv:1409.5375 [cond-mat].

\bibitem[PCK{\etalchar{+}}21]{periwal_programmable_2021}
Avikar Periwal, Eric~S. Cooper, Philipp Kunkel, Julian~F. Wienand, Emily~J.
  Davis, and Monika Schleier-Smith.
\newblock Programmable interactions and emergent geometry in an array of atom
  clouds.
\newblock {\em Nature}, 600(7890):630--635, December 2021.
\newblock Publisher: Nature Publishing Group.

\bibitem[PYK{\etalchar{+}}15]{Purdy2015}
T.~P. Purdy, P.-L. Yu, N.~S. Kampel, R.~W. Peterson, K.~Cicak, R.~W. Simmonds,
  and C.~A. Regal.
\newblock Optomechanical raman-ratio thermometry.
\newblock {\em Physical Review A}, 92(3), September 2015.

\bibitem[RF24]{rouze2024learning}
Cambyse Rouz{\'e} and Daniel~Stilck Fran{\c{c}}a.
\newblock Learning quantum many-body systems from a few copies.
\newblock {\em Quantum}, 8:1319, 2024.

\bibitem[RSFOW24]{rouze2024efficient}
Cambyse Rouz{\'e}, Daniel Stilck~Fran{\c{c}}a, Emilio Onorati, and James~D
  Watson.
\newblock Efficient learning of ground and thermal states within phases of
  matter.
\newblock {\em Nature Communications}, 15(1):7755, 2024.

\bibitem[RWL10]{Ravikumar2010}
Pradeep Ravikumar, Martin~J. Wainwright, and John~D. Lafferty.
\newblock High-dimensional {I}sing model selection using $\ell$1-regularized
  logistic regression.
\newblock {\em The Annals of Statistics}, 38(3), June 2010.

\bibitem[RWRY11]{ravikumar_high-dimensional_2011}
Pradeep Ravikumar, Martin~J. Wainwright, Garvesh Raskutti, and Bin Yu.
\newblock High-dimensional covariance estimation by minimizing
  {$\ell_1$}-penalized log-determinant divergence.
\newblock {\em Electronic Journal of Statistics}, 5(none):935--980, January
  2011.
\newblock Publisher: Institute of Mathematical Statistics and Bernoulli
  Society.

\bibitem[SA25]{scandi_thermalization_2025}
Matteo Scandi and Álvaro~M. Alhambra.
\newblock Thermalization in open many-body systems and {KMS} detailed balance,
  November 2025.
\newblock arXiv:2505.20064 [quant-ph].

\bibitem[SCW06]{Schuch_2006}
Norbert Schuch, J.~Ignacio Cirac, and Michael~M. Wolf.
\newblock Quantum states on harmonic lattices.
\newblock {\em Communications in Mathematical Physics}, 267(1):65--92, jul
  2006.

\bibitem[Ser17]{serafini2017quantum}
Alessio Serafini.
\newblock {\em Quantum continuous variables: a primer of theoretical methods}.
\newblock CRC press, 2017.

\bibitem[SFMD{\etalchar{+}}24]{stilck2024efficient}
Daniel Stilck~Fran{\c{c}}a, Liubov~A Markovich, VV~Dobrovitski, Albert~H
  Werner, and Johannes Borregaard.
\newblock Efficient and robust estimation of many-qubit hamiltonians.
\newblock {\em Nature Communications}, 15(1):311, 2024.

\bibitem[SNM25]{shiraishi_davies_2025}
Koki Shiraishi, Masaya Nakagawa, and Takashi Mori.
\newblock Davies equation without the secular approximation: {Reconciling}
  locality with quantum thermodynamics for open quadratic systems, July 2025.
\newblock arXiv:2507.10080 [quant-ph].

\bibitem[SW12]{santhanam2012information}
Narayana~P Santhanam and Martin~J Wainwright.
\newblock Information-theoretic limits of selecting binary graphical models in
  high dimensions.
\newblock {\em IEEE Transactions on Information Theory}, 58(7):4117--4134,
  2012.

\bibitem[SWW{\etalchar{+}}23]{senanian_programmable_2023}
Alen Senanian, Logan~G. Wright, Peter~F. Wade, Hannah~K. Doyle, and Peter~L.
  McMahon.
\newblock Programmable large-scale simulation of bosonic transport in optical
  synthetic frequency lattices.
\newblock {\em Nature Physics}, 19(9):1333--1339, September 2023.
\newblock Publisher: Nature Publishing Group.

\bibitem[VMLC16]{vuffray2016interaction}
Marc Vuffray, Sidhant Misra, Andrey Lokhov, and Michael Chertkov.
\newblock Interaction screening: Efficient and sample-optimal learning of ising
  models.
\newblock In {\em Advances in Neural Information Processing Systems}, pages
  2595--2603, 2016.

\bibitem[Wai19]{wainwright_high-dimensional_2019}
Martin~J. Wainwright.
\newblock {\em High-{Dimensional} {Statistics}: {A} {Non}-{Asymptotic}
  {Viewpoint}}.
\newblock Cambridge {Series} in {Statistical} and {Probabilistic}
  {Mathematics}. Cambridge University Press, Cambridge, 2019.

\bibitem[WPGP{\etalchar{+}}12]{Weedbrook2012}
Christian Weedbrook, Stefano Pirandola, Raúl García-Patrón, Nicolas~J. Cerf,
  Timothy~C. Ralph, Jeffrey~H. Shapiro, and Seth Lloyd.
\newblock Gaussian quantum information.
\newblock {\em Reviews of Modern Physics}, 84(2):621–669, May 2012.

\bibitem[WRG16]{wang2016precision}
Lingxiao Wang, Xiang Ren, and Quanquan Gu.
\newblock Precision matrix estimation in high dimensional gaussian graphical
  models with faster rates.
\newblock In {\em Artificial Intelligence and Statistics}, pages 177--185.
  PMLR, 2016.

\bibitem[WSD19]{wu2019sparse}
Shanshan Wu, Sujay Sanghavi, and Alexandros~G Dimakis.
\newblock Sparse logistic regression learns all discrete pairwise graphical
  models.
\newblock {\em Advances in Neural Information Processing Systems}, 32, 2019.

\bibitem[WWR10]{Wang2010}
Wei Wang, Martin~J. Wainwright, and Kannan Ramchandran.
\newblock Information-theoretic bounds on model selection for gaussian markov
  random fields.
\newblock In {\em 2010 IEEE International Symposium on Information Theory},
  pages 1373--1377, 2010.

\bibitem[YKB{\etalchar{+}}22]{youssefi_topological_2022}
Amir Youssefi, Shingo Kono, Andrea Bancora, Mahdi Chegnizadeh, Jiahe Pan,
  Tatiana Vovk, and Tobias~J. Kippenberg.
\newblock Topological lattices realized in superconducting circuit
  optomechanics.
\newblock {\em Nature}, 612(7941):666--672, December 2022.
\newblock Publisher: Nature Publishing Group.

\bibitem[YL07]{yuan_model_2007}
Ming Yuan and Yi~Lin.
\newblock Model {Selection} and {Estimation} in the {Gaussian} {Graphical}
  {Model}.
\newblock {\em Biometrika}, 94(1):19--35, 2007.
\newblock Publisher: [Oxford University Press, Biometrika Trust].

\bibitem[Yua10]{yuan_high_2010}
Ming Yuan.
\newblock High {Dimensional} {Inverse} {Covariance} {Matrix} {Estimation} via
  {Linear} {Programming}.
\newblock {\em Journal of Machine Learning Research}, 11(79):2261--2286, 2010.

\bibitem[ZZ14]{zhang2014sparse}
Teng Zhang and Hui Zou.
\newblock Sparse precision matrix estimation via lasso penalized d-trace loss.
\newblock {\em Biometrika}, 101(1):103--120, 2014.

\end{thebibliography}

\appendix
%TC:ignore
\section{Symplectic diagonalization and bounds on singular values}\label{sec:symplectic_diag}

In the following, we denote as $\sigma_{\max}(A)$ the maximum singular value of $A$, (which coincides with the operator norm $\|A\|_{\infty}$) and as $\sigma_{\min}(A)$ the minimum singular value of $A$. The following fact is useful. 

\begin{lem}\label{lem:bound}
For any (symmetric) $2m\times 2m$ Hamiltonian $H>0$, let a symplectic diagonalization be
\begin{equation}\label{eqsymplect}
H=S^{-\intercal}D{S^{-1}}\,,
\end{equation}
with $D=\oplus_{i=1}^m d_i I_{M_2}$, with $d_i>0$, and $S\in\mathrm{Sp}(2m,\mathbb{R})$. Let us also denote as $d_{\min}$ and $d_{\max}$ the smallest and the largest symplectic eigenvalue of $H$, respectively. Then, the corresponding covariance matrix of the Gaussian state with Hamiltonian $H$ is
\begin{equation}\label{eq:SdecV}
V=S f(D) S^{\intercal},
\end{equation}
with $f(x)=\frac{1}{2}\coth(x)=\frac{1}{2}\frac{1+e^{-2x}}{1-e^{-2x}}$, 
and the following bounds hold for $t\in[0,\infty)$:
\begin{align}
\left\|2i\Omega V+\frac{t-1}{t+1}I\right\|_{\infty}&\leq \|S\|_{\infty}^2\frac{2}{1-e^{-2d_{\min}}},\label{eq:normbound1}\\
\sigma_{\min}\left(2i\Omega V+\frac{t-1}{t+1}I\right) &\geq \frac{2}{ \|S\|_{\infty}^2}\frac{e^{-2d_{\max}}}{1-e^{-2d_{\max}}},\label{eq:minsing1}\\
\sigma_{\min}\left(\frac{t}{t+1}\frac{2i\Omega V+I}{2i\Omega V-I}+\frac{1}{t+1}\right)&\geq \frac{e^{-2d_{\max}}}{ \|S\|_{\infty}^2}.\label{eq:minsing2}
\end{align}
\end{lem}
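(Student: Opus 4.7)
The plan is to reduce all three bounds to a single $2\times 2$ eigenvalue computation via a symplectic similarity. First I would establish the normal form: starting from the identity $2i\Omega V = \coth(iH\Omega)$ (Equation~\eqref{eq:HtoV}) and $H = S^{-\intercal} D S^{-1}$, I will use the symplectic identity $S^\intercal \Omega S = \Omega$---which yields $S^{-1}\Omega = \Omega S^\intercal$---and the commutation $[\Omega, D] = 0$ (since $D = \oplus_i d_i I_2$ is a direct sum of scalar $2\times 2$ blocks) to write $iH\Omega = S^{-\intercal}(iD\Omega)S^\intercal$. Within each $2\times 2$ block, $(i\Omega)^2 = I$, so $\coth(iD\Omega)$ simplifies to $\coth(D)\cdot i\Omega$, yielding the formula $V = S f(D) S^\intercal$ stated in the lemma together with the key similarity
\[ 2i\Omega V \;=\; S^{-\intercal}\, B \, S^\intercal, \qquad B := 2i\Omega f(D), \]
where $B$ is block-diagonal with $2\times 2$ Hermitian blocks $\coth(d_i)\bigl(\begin{smallmatrix} 0 & i \\ -i & 0 \end{smallmatrix}\bigr)$ of eigenvalues $\pm\coth(d_i)$. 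Since $S$ is symplectic, $S^{-1} = -\Omega S^\intercal \Omega$, and orthogonality of $\Omega$ gives $\|S^{-1}\|_\infty = \|S\|_\infty$; consequently for any matrix $M$, $\|S^{-\intercal} M S^\intercal\|_\infty \le \|S\|_\infty^2\,\|M\|_\infty$ and $\sigma_{\min}(S^{-\intercal} M S^\intercal) \ge \|S\|_\infty^{-2}\,\sigma_{\min}(M)$.

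For \eqref{eq:normbound1} and \eqref{eq:minsing1}, both left-hand sides equal $S^{-\intercal}\bigl[B + \tfrac{t-1}{t+1}I\bigr]S^\intercal$, an inner matrix that is Hermitian with eigenvalues $\pm\coth(d_i) + \tfrac{t-1}{t+1}$. Using $\coth(d)+1 = \tfrac{2}{1-e^{-2d}}$ and $\coth(d)-1 = \tfrac{2e^{-2d}}{1-e^{-2d}}$, together with $|\tfrac{t-1}{t+1}|\le 1$ for $t\ge 0$, one bounds its largest eigenvalue in absolute value by $\tfrac{2}{1-e^{-2d_{\min}}}$ and its smallest from below by $\tfrac{2e^{-2d_{\max}}}{1-e^{-2d_{\max}}}$; multiplying by $\|S\|_\infty^{\pm 2}$ delivers both bounds.

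For \eqref{eq:minsing2} the crucial algebraic observation is that $g_t(\lambda) := \tfrac{t}{t+1}\tfrac{\lambda+1}{\lambda-1} + \tfrac{1}{t+1}$ satisfies $g_t(\pm\coth(d)) = \tfrac{t\,e^{\mp 2d}+1}{t+1}$, a direct consequence of the identity $\tfrac{\coth(d)\pm 1}{\coth(d)\mp 1} = e^{\pm 2d}$. Applying $g_t$ through the same similarity, the matrix in \eqref{eq:minsing2} becomes $S^{-\intercal}g_t(B)S^\intercal$ with $g_t(B)$ Hermitian and all eigenvalues positive; differentiating in $t$ shows $\tfrac{t\,e^{-2d_i}+1}{t+1}$ decreases monotonically to the infimum $e^{-2d_i}$, while the complementary branch $\tfrac{t\,e^{2d_i}+1}{t+1}$ stays above $1$. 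Hence $\sigma_{\min}(g_t(B)) \ge e^{-2d_{\max}}$ uniformly in $t \ge 0$, and the similarity factor $\|S\|_\infty^{-2}$ produces the stated bound.

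The main obstacle will be resisting the temptation to deduce \eqref{eq:minsing2} directly from \eqref{eq:minsing1} and a norm bound on $(2i\Omega V - I)^{-1}$ via $\sigma_{\min}(AB) \ge \sigma_{\min}(A)\sigma_{\min}(B)$: that route introduces a spurious $(1-e^{-2d_{\min}})^{-1}$ factor that is absent from the claim. Performing the similarity before combining the fractions, so that the exact cancellation $\tfrac{\coth(d)\pm 1}{\coth(d)\mp 1} = e^{\pm 2d}$ can be invoked at the level of eigenvalues, is precisely what yields the clean $d_{\min}$-free bound.
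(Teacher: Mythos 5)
Your proof is correct and follows essentially the same route as the paper's: both derive the symplectic normal form $V = S f(D) S^{\intercal}$ from $2i\Omega V = \coth(iH\Omega)$, conjugate each of the three matrices into a block-diagonal Hermitian form whose $2\times 2$ blocks have explicit eigenvalues ($\pm\coth(d_i)+\tfrac{t-1}{t+1}$ for the first two bounds and $\tfrac{te^{\pm 2d_i}+1}{t+1}$ for the third), and then pay the factor $\|S\|_{\infty}^{\pm 2}$ using $\sigma_{\min}(S)=\|S\|_{\infty}^{-1}$ for symplectic $S$. The only cosmetic difference is that the paper bounds the singular values by explicitly computing $M^{\dagger}M$ rather than invoking the similarity bounds directly.
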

\noindent  
It is worth to stress that  $\|S\|_{\infty}$ does not depend on the temperature and it is equal to $1$ if there is no squeezing.
 
\begin{proof}

Equation~\ref{eq:SdecV} is a simple consequence of Equation~\ref{eq:HtoV}. More precisely,
\begin{align*}
V&=\frac{i}{2}\Omega \operatorname{coth}(iH\Omega)\\
&=\frac{i\Omega}{2}\operatorname{coth}(iS^{-\intercal}DS^{-1} \Omega)\\
&=\frac{i\Omega}{2}\operatorname{coth}(iS^{-\intercal }D\Omega S^\intercal)\\
&=\frac{i\Omega}{2}S^{-\intercal}\operatorname{coth}(iD\Omega )S^\intercal\\
&=S\frac{i\Omega}{2}\operatorname{coth}(iD\Omega)S^\intercal\,.
\end{align*}
By denoting by $U$ the matrix diagonalizing $iD\Omega$, i.e. $U(iD\Omega) U^{-1}=\bigoplus_{j}\operatorname{diag}(d_j,-d_j)$ and since $\operatorname{coth}$ is odd, we get
\begin{align*}
\operatorname{coth}(iD\Omega)=U^{-1}\bigoplus_j\operatorname{coth}(\operatorname{diag}(d_j,-d_j))U=U^{-1}\oplus_j\operatorname{coth}(d_j)\operatorname{diag}(1,-1)U=\bigoplus_j\operatorname{coth}(d_j) i\begin{pmatrix} 0&1\\-1&0\end{pmatrix}\,,
\end{align*}
and hence
\begin{align*}
V=\frac{1}{2}S \bigoplus_i \operatorname{coth}(d_j)i^2\begin{pmatrix}0&1\\-1&0\end{pmatrix}^2 \,S^\intercal=\frac{1}{2}S \operatorname{coth}(D)S^\intercal\equiv S f(D)S^\intercal\,.
\end{align*}
As a consequence, we obtain that for any $t\ge 0$

\begin{align}
2i\Omega V+\frac{t-1}{t+1}I&=i \Omega \left(2V+\frac{t-1}{t+1} i \Omega \right)=i\Omega S \left(2f(D)+\frac{t-1}{t+1} i\Omega\right)S^{\intercal}.
\end{align}
Now, notice that $2f(D)+\frac{t-1}{t+1}i\Omega$ is Hermitian, and

\begin{align}
\left(2f(D)+i\frac{t-1}{t+1}\Omega\right)^2&=\bigoplus_{i=1}^m \left(2f(d_i){-}\frac{t-1}{t+1}\right)^2\ketbra{0_y}+\left(2f(d_i){+}\frac{t-1}{t+1}\right)^2\ketbra{1_y}.
\end{align}

\noindent where $\ket{0_y}$ and $\ket{1_y}$ are the eigenvectors of the Pauli-Y matrix. As a consequence, since $2f(d_i)\geq 1$, $\left|\frac{t-1}{t+1}\right|\leq 1$, and $f$ is monotone decreasing, the largest eigenvalue is smaller than $(2f(d_{\min})+1)^2=\left(\frac{2}{1-e^{-2d_{\min}}}\right)^2$ and the smallest eigenvalue is larger than $(2f(d_{\max})-1)^2=\left(2\frac{e^{-2d_{\max}}}{1-e^{-2d_{\max}}}\right)^2$, that is:
\begin{align}
\left(2\frac{e^{-2d_{\max}}}{1-e^{-2d_{\max}}}\right)^2 I\le \left(2f(D)+\frac{t-1}{t+1}i\Omega\right)^2&\leq \left(\frac{2}{1-e^{-2d_{\min}}}\right)^2 I\,.\label{lowerupperbounds}
\end{align}
Hence the following operator inequalities hold:
\begin{align*}
\left(2i\Omega V+\frac{t-1}{t+1}I\right)^{\dagger}\left(2i\Omega V +\frac{t-1}{t+1}I\right)&=S \left(2f(D)+\frac{t-1}{t+1}i\Omega\right)S^{\intercal}S\left(2f(D)+\frac{t-1}{t+1}i\Omega\right)S^{\intercal}\\
&\leq \|S\|^2_{\infty} S \left(2f(D)+\frac{t-1}{t+1}i\Omega\right)^2S^{\intercal} \\
&\leq \|S\|^2_{\infty} \left(\frac{2}{1-e^{-2d_{\min}}}\right)^2 SS^{\intercal} \\
&\leq \|S\|^4_{\infty} \left(\frac{2}{1-e^{-2d_{\min}}}\right)^2 I\,.
\end{align*}
This proves Equation~\eqref{eq:normbound1}. Morever, since $S$ is symplectic, $\sigma_{\min}(S)=\sigma^{-1}_{\max}(S)=\|S\|^{-1}_{\infty}$, and $S^{\intercal}S\geq \sigma_{\min}^2(S) I=\|S\|_{\infty}^{-2}I$, and similarly $SS^{\intercal}\geq \|S\|_{\infty}^{-2}I$. 
Using this fact together with the lower bound in \eqref{lowerupperbounds}, we directly get that
\begin{align*}
\left(2i\Omega V+\frac{t-1}{t+1}I\right)^{\dagger}\left(2i\Omega V +\frac{t-1}{t+1}I\right)
\geq \|S\|^{-4}_{\infty} \left(\frac{2e^{-2d_{\max}}}{1-e^{-2d_{\max}}}\right)^2 I,
\end{align*}
 proving Equation~\eqref{eq:minsing1}. Finally, we consider
\begin{align*}
g(D):=(2f(D)+i\Omega)(2f(D)-i\Omega)^{-1}&=\bigoplus_{i=1}^m \frac{2f(d_i){-}1}{2f(d_i){+}1}\ketbra{0_y}+\frac{2f(d_i){+}1}{2f(d_i){-}1}\ketbra{1_y}\\
&=\bigoplus_{i=1}^m e^{{-}2d_i}\ketbra{0_y}+e^{2d_i}\ketbra{1_y},
    \end{align*}
from which it follows that $g(D)\geq 0$. Thus we have
\begin{align*}
\left(\frac{2i\Omega V+I}{2i\Omega V-I}\right)&=i\Omega S (2f(D)+i\Omega)S^{\intercal}{S^{\intercal}}^{-1}(2f(D)-i\Omega)^{-1}S^{-1}i \Omega\\
&=i\Omega S (2f(D)+i\Omega)(2f(D)-i\Omega)^{-1}S^{-1}i \Omega\\
&=i\Omega S g(D)S^{-1}i \Omega,
\end{align*}
and hence
\begin{align*}
\frac{t}{t+1}\frac{2i\Omega V+I}{2i\Omega V-I}+\frac{1}{t+1}&=i\Omega S \left(\frac{t g(D)+I}{t+1}\right)S^{-1}i \Omega\,.
\end{align*}
It follows that
\begin{align*}
\left(\frac{t}{t+1}\frac{2i\Omega V+I}{2i\Omega V-I}+\frac{1}{t+1}\right)^{\dagger}&\left(\frac{t}{t+1}\frac{2i\Omega V+I}{2i\Omega V-I}+\frac{1}{t+1}\right)\\&=i\Omega S^{-\intercal} \left(\frac{t g(D)+I}{t+1}\right)S^{\intercal}S\left(\frac{t g(D)+I}{t+1}\right)S^{-1}i\Omega\\
&\geq \|S\|_{\infty}^{-2} i\Omega S^{-\intercal} \left(\frac{t g(D)+I}{t+1}\right)^2S^{-1} i\Omega\\
&\geq \|S\|_{\infty}^{-2} e^{-4d_{\max}} (i\Omega)S^{-\intercal}S^{-1} (i\Omega)\\
&\geq \|S\|^{-4}_{\infty} e^{-4d_{\max}}  I,
\end{align*}
proving Equation~\ref{eq:minsing2}.

\end{proof}

{
\section{Energy bounds}\label{sec:energybounds}
For a Gaussian state $\rho$, the energy is
\begin{equation}\label{eq:energy}E:=\mathbb{E}[E_m]=\frac{\Tr[V[\rho]]}{2}+\frac{\|t[\rho]\|_2^2}{2}.
\end{equation}
Via the relation in Equation~(\ref{eq:SdecV}), i.e. 
\begin{equation}
V=S f(D) S^{\intercal},
\end{equation}
with $f(x)=\frac{1}{2}\coth(x)=\frac{1}{2}\frac{1+e^{-2x}}{1-e^{-2x}}\geq \frac{(1-e^{-2x})^{-1}}{2}\geq \frac{1}{2}$, we also have

\begin{align}
\Tr[V]&\geq\Tr[Sf(D)S^{\intercal}]\geq \Tr[SS^{\intercal}]/2\geq \|S\|_{\infty}^2/2,\\
\Tr[V]&\geq \Tr[Sf(D)S^{\intercal}]=\Tr[SS^{\intercal}f(D)]\geq \|S^{-1}\|_{\infty}^{2}\Tr[f(D)]\geq \|S\|_{\infty}^{-2}(1-e^{-2d_{\min}})^{-1},
\end{align}

where in the first derivation we used $f(x)\geq 1/2$ and in the second that $\|S\|_{\infty}=\|S^{-1}\|_{\infty}$ for symplectic matrices. This also implies 

\begin{equation}
(1-e^{-2d_{\min}})^{-1}\leq 2E^2.
\end{equation}
}

\section{Family of Hamiltonians with condition number scaling with the size of the graph}~\label{condnumbex}

The following example shows why one one should expect a dependence on the condition number for results with sample complexity logarithmic in the number of modes, when additive error guarantees are used. The precision matrices of size $m\times m$
\begin{equation*}
H_m=\begin{pmatrix}
1 & -1 & 0 & \cdots & 0 & 0 \\
-1 & 2 & -1 & \cdots & 0 & 0 \\
0 & -1 & 2 & \cdots & 0 & 0 \\
\vdots & \vdots & \vdots & \ddots & \vdots & \vdots \\
0 & 0 & 0 & \cdots & 2 & -1 \\
0 & 0 & 0 & \cdots & -1 & 2
\end{pmatrix}
\end{equation*}
have inverses

\begin{equation*}
H_m^{-1}=\begin{pmatrix}
m & m-1 & m-2 & \cdots & 2 & 1 \\
m-1 & m-2 & m-3 & \cdots & 2 & 1 \\
m-2 & m-3 & m-4 & \cdots & 2 & 1 \\
\vdots & \vdots & \vdots & \ddots & \vdots & \vdots \\
2 & 2 & 2 & \cdots & 2 & 1 \\
1 & 1 & 1 & \cdots & 1 & 1
\end{pmatrix}\,,
\end{equation*}
meaning that to learn the entries of the covariance matrix with additive error, one expects that $\mathrm{poly}(m)$ samples are needed, as the variance of the random variables is polynomial in $m$. Thus, already in the classical case, estimating the entries of the covariance matrix to constant precision requires polynomial samples, so that with logarithmic number of samples we do not expect to be able to estimate the entries of the Hamiltonian to constant precision, even if its entries are all of constant order. Furth, the condition number of $H_m$, and thus the norm of $H_{m}^{-1}$, is of order $m^2$. 

%TC:endignore

\end{document}